\newcommand{\ignore}[1]{}
\newtheorem{thm}{Theorem}[section]\theoremstyle{plain}
\newtheorem{theorem}[thm]{Theorem}\theoremstyle{plain}
\theoremstyle{plain}
\newtheorem{lemma}[thm]{Lemma}\theoremstyle{plain}
\newtheorem{definition}[thm]{Definition}\theoremstyle{plain}
\theoremstyle{plain}
\newtheorem{claim}[thm]{Claim}\theoremstyle{plain}
\newtheorem{corollary}[thm]{Corollary}\theoremstyle{plain}
\theoremstyle{plain}
\theoremstyle{plain}
\theoremstyle{plain}
\theoremstyle{plain}
\theoremstyle{plain}
\theoremstyle{plain}
\newcommand{\MS}{{\mathcal S}}
\newcommand{\MI}{{\mathcal I}}
\newcommand{\MF}{{\mathcal F}}
\newcommand{\MC}{{\mathcal C}}
\newcommand{\MH}{{\mathcal H}}
\newcommand{\MX}{{\mathcal X}}
\newcommand{\MY}{{\mathcal Y}}
\tikzstyle{mybox} = [draw=red, very thick,
\tikzstyle{fancytitle} =[fill=red, text=white]
\tikzset{
    invisible/.style={opacity=0,text opacity=0},
    visible on/.style={alt={#1{}{invisible}}},
    alt/.code args={<#1>#2#3}{
      \alt<#1>{\pgfkeysalso{#2}}{\pgfkeysalso{#3}} 
    },
  }
\definecolor{ForestGreen}{rgb}{0.13, 0.55, 0.13}
\definecolor{FluorescentOrange}{rgb}{1.0, 0.75, 0.0}
\tikzset{%
  highlight/.style={rectangle,rounded corners,fill=green!15,draw,fill opacity=0.5,thick,inner sep=0pt}
}
\begin{document}

\title{A Structural Characterization  for Certifying Robinsonian Matrices}

 \author[1,2]{Monique Laurent}
\author[1]{Matteo Seminaroti}
\author[1,3]{Shin-ichi Tanigawa}
\date{}

\affil[1]{\small Centrum Wiskunde \& Informatica (CWI), Science Park 123, 1098 XG Amsterdam, The Netherlands}
\affil[2]{\small Tilburg University, P.O. Box 90153, 5000 LE Tilburg, The Netherlands}
\affil[3]{\small Research Institute for Mathematical Sciences, Kyoto University, Kitashirakawa-Oiwaketyo, Sakyo-ku, Kyoto, 606-8502, Japan}
\renewcommand{\thefootnote}{\fnsymbol{footnote}}
\footnotetext[1]{Correspondence to : \texttt{M.Laurent@cwi.nl} (M.~Laurent, CWI, Postbus 94079, 1090 GB, Amsterdam. Tel.:+31 (0)20 592 4386), \texttt{tanigawa@kurims.kyoto-u.ac.jp} (S. Tanigawa, CWI, Postbus 94079, 1090 GB, Amsterdam. Tel.:+31 (0)20 592 9333).}

\graphicspath{{./figures/}}

\maketitle

\begin{abstract}
A symmetric matrix is Robinsonian  if its rows and columns can be simultaneously reordered in such a way that entries are monotone nondecreasing in rows and columns when moving toward the diagonal. The adjacency matrix of a graph  is Robinsonian precisely when the graph is a unit interval graph, so that  Robinsonian matrices form a matrix analogue of the class of unit interval graphs. 
Here we provide a structural characterization for Robinsonian matrices in terms of forbidden substructures, extending the notion of  asteroidal triples to weighted graphs. This implies  the known characterization of unit interval graphs and leads to an efficient algorithm for certifying that a matrix is not Robinsonian.


\medskip

\noindent
\textbf{Keywords:}
\textit{Robinson ordering; Robinsonian matrix; seriation; unit interval graph; asteroidal triple.} 
\end{abstract}

\section{Introduction} 

\subsection{Background}
Robinsonian matrices are a special class of structured matrices introduced by Robinson~\cite{Robinson51} to model the seriation problem, a combinatorial problem arising in data analysis, which asks to  sequence a set of objects according to their pairwise similarities in such a way that similar objects are ordered close to each other.
Seriation has many applications in a wide range of different subjects, including archaeology~\cite{Petrie99,Kendall69}, data visualization and exploratory analysis \cite{Hubert01,Brusco05}, bioinformatics (e.g., microarray gene expression \cite{Tien08}), machine learning (e.g., to pre-estimate the number and shape of clusters  \cite{Ding04,Havens12}). We refer to the survey  \cite{Innar10} for details  and further references.

A symmetric $n \times n$ matrix~$A$ is called  a {\em Robinson similarity} if its entries are monotone nondecreasing 
 in the rows and columns when moving {toward} the main diagonal, i.e., if \index{Robinson! similarity} \index{Robinsonian! similarity}
\begin{equation}\label{eq:1-Robinson}
A_{xz}\le \min\{A_{xy},A_{yz}\} 
\end{equation}
for all $1\le x<y<z\le n$. 
Throughout we will call any ordered triple $(x,y,z)$ satisfying the condition (\ref{eq:1-Robinson}) a {\em Robinson triple}.
If the rows and columns of $A$ can be symmetrically reordered by a permutation~$\pi$ in such a way that the permuted matrix is  a Robinson similarity, then $A$ is said to be a {\em Robinsonian similarity} and~$\pi$ is called a {\em Robinson ordering} of $A$. 
By construction, $\pi$ is a Robinson ordering of $A$ if any triple $(x,y,z)$ ordered in $\pi$ as  $x \prec_{\pi} y \prec_{\pi} z$ is Robinson.
Hence Robinsonian matrices best achieve the goal of seriation, which is  to order similar objects close to each other.


\medskip
Several Robinsonian recognition algorithms are known, permitting to check whether a matrix is Robinsonian in polynomial time. 
Most of the known algorithms rely on characterizations of Robinsonian matrices in terms of interval (hyper)graphs and the consecutive-ones property (C1P).
Specifically, call a {\em ball} of $A$ any set of the form $B(x,\rho):=\{y\in V: A_{xy}\ge \rho\}$ for some $x\in V$ and scalar $\rho>0$.
Let the \textit{ball hypergraph} of $A$ be the hypergraph whose vertex set is $V$ and with hyperedges the balls of $A$. Then $A$ is Robinsonian if and only if its ball hypergraph is an interval hypergraph~\cite{Mirkin84}.
Equivalently, define the {\em ball intersection graph} of $A$ as the graph whose vertex set is the set  of balls of $A$, with an edge between two distinct balls if they intersect.
Then, it is known that  $A$ is a Robinsonian matrix if and only if its ball intersection graph is an interval graph (see \cite{Prea14}). 
Using the above characterizations, distinct recognition algorithms were introduced by Mirkin and Rodin \cite{Mirkin84} with running time  $O(n^4)$, by Chepoi and Fichet \cite{Chepoi97} with  running  time  $O(n^3)$, and by Pr\'ea and Fortin~\cite{Prea14} with running time $O(n^2)$, when applied to a $n \times n$ symmetric matrix.

Different algorithms were recently introduced in  \cite{Laurent15,Laurent16}, based on a link between Robinsonian matrices and unit interval graphs (pointed out in \cite{Roberts69}) and exploiting
the fact that unit interval graphs can be recognized efficiently using a simple graph search algorithm, namely Lexicographic Breadth-First Search (Lex-BFS) (see \cite{Corneil95,Corneil04}).
The algorithm of \cite{Laurent15} is based on expressing Robinsonian matrices as conic combinations of (adjacency matrices of) unit interval graphs and iteratively using Lex-BFS to check whether these are unit interval graphs; its overall running time is $O(L(m+n))$, where $L$ is the number of distinct values in the matrix and $m$ is its number of nonzero entries.
The algorithm of \cite{Laurent16} relies 
on a new search algorithm, Similarity-First Search (SFS), which can be seen as a generalization of Lexicographic Breadth-First Search (Lex-BFS) to the setting of weighted graphs. The SFS algorithm 
runs in $O(n+m\log n)$ time and the recognition algorithm for Robinsonian matrices terminates after at most $n$ iterations of SFS,  thus with  overall running time  $O(n^2 + nm\log n)$ \cite{Laurent16}. 


\medskip
All the current recognition algorithms return a certificate (i.e., a Robinson ordering) only if the matrix is Robinsonian, and otherwise they just return a negative answer.
In this paper we give a new structural characterization of Robinsonian matrices in terms of forbidden substructures. 
We provide a simple, succinct certificate for {\em non}-Robinsonian matrices,
which represents a natural extension of the known structural characterization for unit interval graphs. 
Specifically, our certificate involves the new notion of {\em weighted} asteroidal triple,  which generalizes to the matrix setting the known obstructions for unit interval graphs (namely, chordless cycles, claws and asteroidal triples), see Section \ref{secmainthm} for details.
Moreover we also give a simple, efficient algorithm for finding such a certificate, running in time $O(n^3)$ for a matrix of size $n$.

Observe that other certificates could be obtained from the alternative characterizations of Robinsonian matrices in terms of interval (hyper)graphs. Indeed, as the minimal obstructions for interval graphs are known (namely, they are the chordless cycles and the 
asteroidal triples), we can derive from this an alternative  structural characterization for Robinsonian matrices. However this characterization is  expressed in terms of the ball intersection graph of $A$, whose vertex set is the set of balls rather than the index set of $A$, and thus it is not directly in terms of $A$ as in our main result (Theorem \ref{thm:main} below).

In the rest of the introduction, we first recall some properties of unit interval graphs, which will also serve as motivation for the notions and results we will introduce for Robinsonian matrices, and then we state our main structural result for Robinsonian matrices.


%
\ignore{
Most of the existing Robinsonian recognition algorithms are based on well known characterizations of Robinsonian matrices in terms of interval (hyper)graphs.
A graph $G=(V=[n],E)$ is called an \textit{interval graph} if its vertices can be mapped to intervals $I_1,\ldots,I_n$ of the real line in such a way that two distinct vertices $x,y\in V$ are adjacent in $G$ if and only if $I_x\cap I_y\ne \emptyset$.
This map is also called a \textit{realization} of $G$. 
A hypergraph $H=(V,\mathcal E)$ is a generalization of  the notion of  graph where  elements of $\mathcal E$, called  {\em hyperedges}, are subsets of $V$. 
Then $H$ is  an {\em interval hypergraph} if its vertices can be ordered in such a way that hyperedges are intervals.
Furthermore, a matrix has the Consecutive Ones Property (C1P) if its columns can be reordered in such a way that ones are consecutive in its rows.

Mirkin and Rodin \cite{Mirkin84} gave the first polynomial algorithm to recognize Robinsonian matrices, with $O(n^4)$ running time.
Given a dissimilarity matrix $D \in \MS^n$ and a scalar $\alpha$, the {\em threshold graph} $G_{\alpha}=(V,E_\alpha)$ has   edge set $E_{\alpha}= \{\{x,y\} : D_{xy} \leq \alpha\}$ and, 
for  $x\in V$,  the ball $B(x,\alpha):=\{y \in V: D_{xy} \leq \alpha\}$  consists of  $x$ and its neighbors  
 in  $G_{\alpha}$. 
Let $\mathcal B$ denote the collection of all the balls of $D$ and let $H_{\mathcal{B}}$ denote the corresponding  \textit{ball hypergraph}, with vertex set  $V=[n]$ and with $\mathcal B$ as set of hyperedges. 
Then, they show that $D \in \MS^n$ is a Robinsonian dissimilarity if and only if the ball hypergraph $H_{\mathcal{B}}$ is an interval hypergraph.
Let the incidence matrix of $H$  be the $0/1$ matrix  whose rows and columns are labeled, respectively,  by the hyperedges and the vertices and with an entry 1 when the hyperedge contains the corresponding vertex.
Then $H$ is an interval hypergraph if and only if its incidence matrix to having C1P.
Hence, their algorithms is based on checking whether the ball hypergraph is an interval hypergraph and using the PQ-tree algorithm of \cite{Booth76} to check whether the incidence matrix of $H$ has C1P. 

Later, Chepoi and Fichet \cite{Chepoi97} introduced a simpler algorithm that, using a divide-an-conquer strategy and sorting the entries of $D$, improved the running time to $O(n^3)$.
The same sorting preprocessing was used by Seston \cite{Seston08}, who improved the algorithm to $O(n^2\log n)$ by constructing paths in the threshold graphs of $D$.

Pr\'ea and Fortin~\cite{Prea14} presented a more sophisticated $O(n^2)$ algorithm, based on an equivalent Robinsonian characterization in terms of interval graphs. Specifically, they consider the intersection graph $G_{\mathcal B}$ of $\mathcal B$, where the balls are the vertices and connecting two vertices if the corresponding balls intersect. 
Then, $D \in \MS^n$ is a Robinsonian dissimilarity if and only if the intersection graph $G_{\mathcal B}$ is an interval graph \cite{Prea14}.
Let the vertex-clique incidence matrix be the $0/1$ matrix whose rows are indexed by the vertices and the columns by the maximal cliques of $G$, with entry $(i,j)$ equal to one if and only if vertex $i$ is contained in the clique corresponding to column $j$. 
Then $G$ is an interval graph if and only if its vertex-clique incidence matrix has C1P \cite{Fulkerson65}.
Hence, one can check if the vertex-clique matrix of $G_{\mathcal{B}}$ has C1P in order to decide if $D$ is Robinsonian. 
Roughly speaking, they use the algorithm from Booth and Leuker~\cite{Booth76} to compute a first PQ-tree which they update throughout the algorithm.

A completely different approach to recognize Robinsonian similarities was used by Atkins et al. \cite{Atkins98}, who introduced an interesting spectral sequencing algorithm. 
Given a matrix $A\in \MS^n$, its {\em Laplacian matrix}  is the matrix defined by 
$L_A=\text{Diag}(Ae)-A\in\MS^n$,
where $e$ is the all-ones vector and $\text{Diag}(Ae)$ is the diagonal matrix whose diagonal  is given by the vector $Ae$. 
A \textit{Fiedler vector} of $A$ is an eigenvector corresponding to the second smallest eigenvalue of $L_A$, which is also known as the \textit{Fiedler value}.
The spectral algorithm of Atkins et al. \cite{Atkins98} relies on the fact that a Robinson matrix has a monotone Fiedler vector, and thus sorting monotonically the Fiedler vector of a similarity matrix $A$ reorders $A$ as a Robinson similarity. 
The running time of the spectral algorithm is $O(n(T(n)+\log n))$, where $T(n)$ is the complexity of computing (approximately) eigenvalues of an $n\times n$ symmetric matrix.
Given its simplicity, this algorithm is used in some classification applications (see, e.g., \cite{Fogel14}) as well as in spectral clustering (see, e.g., \cite{Barnard93}). \index{clustering! spectral}

Finally, very recently Laurent and Seminaroti introduced two new Robinsonian recognition algorithms.
The first algorithm, presented in \cite{Laurent15}, is based on a new characterization of Robinsonian matrices in terms of straight enumerations of unit interval graphs. 
\begin{itemize}
\item def uig
\item def straight enum (uig iff straight enumeration)
\item decompose Robinsonian similarity over unit interval graphs and find compatible straight enumerations
\item $O(L(n+m))$, where $L$ is the number of distinct values of $A$.
\end{itemize}

The second algorithms is based on the novel algorithm Similarity-First Search (SFS), which is a weighted version of Lex-BFS. 
\begin{itemize}
\item multisweep
\item $n-1$ sweeps.
\item $O(n^2 + nm\log n)$
\end{itemize}

It is worth to remark that the algorithms presented in \cite{Atkins98,Prea14, Laurent15} can return all the possible Robinson orderings, which can be useful in some practical applications. 

Robinsonian matrices represent an ideal seriation instance, 
Hence, even though real world data is unlikely to have a Robinsonian structure, Robinsonian recognition algorithms can be used as core subroutines to design efficient heuristics or approximation algorithms to solve the seriation problem, e.g., by approximating the Robinsonian structure~\cite{Chepoi11}.

\begin{itemize}
\item certificate
\item why is important?
\item existing certificate for interval
\end{itemize}
}

\subsection{Structural characterization of unit interval graphs}

Recall that a graph $G=(V=[n],E)$ is a unit interval graph if one can label its vertices by unit intervals in such a way that adjacent vertices correspond to intersecting intervals.
Roberts \cite{Roberts69} observed that a graph $G$ is a unit interval graph if and only if its adjacency matrix $A_G$ is a Robinsonian similarity matrix.

In particular, $G$ is a unit interval graph if and only if  there exists an ordering $\pi$ of its vertices such that $\{x,z\}\in E$ implies $\{x,y\}\in E$ and $\{y,z\}\in E$ whenever $x\prec_\pi y\prec_\pi z$. This condition,  known as the {\em 3-vertex condition},  is thus a specialization of the Robinson condition (\ref{eq:1-Robinson}) (see, e.g., \cite{Looges93}).


\medskip
We now mention an alternative characterization  of  unit interval graphs   in terms of forbidden substructures. We recall some definitions.
Let $G=(V,E)$ be a graph.
Given $x,y,z\in V$, a {\em path from $x$ to $y$ missing\footnote{Sometimes one also says that the path  {\em avoids} $z$ (e.g. in \cite{Corneil98}). We use here the word ``miss" instead of ``avoid", in order to keep the word ``avoid" for the context of matrices and to prevent possible confusion.}  $z$} is a path $P=(x=x_0,x_1,\cdots, x_k, y=x_{k+1})$ in $G$ which is disjoint from the neighborhood of $z$, i.e., all pairs $\{x_i,x_{i+1}\}$ ($0\le i\le k$) are edges of $G$ and $z$ is not adjacent to any node of $P$.
An {\em asteroidal triple} in $G$ is  a set of nodes $\{x,y,z\}$ 
which is independent in $G$ (i.e., induces no edge) and 
such that between any two nodes in $\{x,y,z\}$ there exists a path in $G$ between them which misses the third one.
Asteroidal triples  were introduced  for the recognition of interval graphs in \cite{Lekkerkerker62}, where it is shown that a graph is an interval graph if and only if it is chordal and does not have an asteroidal triple.  As unit interval graphs are precisely the claw-free interval graphs (see \cite{Roberts69}) we get the following characterization.

\begin{theorem}\label{thm:uig}(see \cite{Roberts69,Gardi07})
A graph $G$ is a unit interval graph if and only if it satisfies the following three  conditions:
\begin{description}
\item[(i)] $G$ is chordal, i.e., $G$ does not contain an induced cycle of length at least 4;
\item[(ii)] $G$ does not contain an  induced claw $K_{1,3}$;
\item[(iii)]
$G$ does not contain an asteroidal triple.
\end{description}
\end{theorem}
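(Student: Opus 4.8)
The plan is to obtain this characterization as a direct corollary of the two structural results recalled immediately above: the Lekkerkerker--Boland theorem \cite{Lekkerkerker62}, stating that a graph is an interval graph if and only if it is chordal and asteroidal-triple-free, and Roberts' result \cite{Roberts69} that the unit interval graphs are precisely the claw-free interval graphs. The argument is then a single chain of equivalences, which I would present as the two implications of the stated ``if and only if''.

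For the forward direction, if $G$ is a unit interval graph then by \cite{Roberts69} it is a claw-free interval graph, so (ii) holds; being an interval graph, \cite{Lekkerkerker62} gives that it is chordal and has no asteroidal triple, which are exactly (i) and (iii). For the converse, suppose $G$ satisfies (i), (ii) and (iii). Conditions (i) and (iii) together with \cite{Lekkerkerker62} imply that $G$ is an interval graph, and then claw-freeness (ii) together with \cite{Roberts69} upgrades this to $G$ being a unit interval graph. This establishes both directions. If one prefers to avoid invoking \cite{Roberts69} for claw-freeness in the forward direction, it suffices to note directly that the claw $K_{1,3}$ admits no unit interval representation: three pairwise disjoint unit intervals cannot all meet a common unit interval, as an elementary comparison of endpoints shows, and unit interval graphs are closed under taking induced subgraphs.

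The theorem therefore carries essentially no independent proof burden once the two cited results are granted; the only elementary verification is the remark about the claw just mentioned. The genuine mathematical difficulty is hidden inside the citations and would resurface only in a self-contained treatment: the hard part is the sufficiency direction of \cite{Lekkerkerker62}, namely producing an interval representation from chordality and the absence of asteroidal triples, which is the classical, technically delicate step (usually proved via the consecutive-ones property on the maximal cliques, or through the clique-tree structure of chordal graphs). Since in this paper the statement serves only as motivation and as a special case to be recovered from the main matrix result, leaning on the two established characterizations is the appropriate level of detail.
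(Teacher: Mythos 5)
Your proposal is correct and is essentially the paper's own justification: immediately before stating Theorem~\ref{thm:uig}, the paper derives it by combining the Lekkerkerker--Boland theorem of \cite{Lekkerkerker62} (a graph is an interval graph if and only if it is chordal and has no asteroidal triple) with Roberts' observation in \cite{Roberts69} that unit interval graphs are exactly the claw-free interval graphs, which is precisely your chain of equivalences. The paper does later recover Theorem~\ref{thm:uig} a second time from its main matrix result via Theorem~\ref{thm:at} (Lemmas~\ref{lem:uig1} and~\ref{lem:uig2}), but that is explicitly offered as an \emph{alternative} proof, so your citation-based argument matches the paper's primary treatment.
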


\subsection{Structural characterization of Robinsonian matrices}\label{secmainthm}

We now extend the above notion of asteroidal triple to the general setting of matrices and we then use it to state our main structural characterization  for Robinsonian matrices.

Let $V$ be a finite set and let  $A$ be a symmetric matrix indexed by $V$. 
Given $z\in V$, 
 a  {\em path  avoiding $z$ in $A$ 
 is of the form $P=(v_0,v_1,\cdots, v_k)$, where $v_0,\cdots,v_k$ are distinct elements of $V$ and, for each $1\le i\le k$,  the triple
$(v_{i-1},z,v_{i})$ is not Robinson, i.e.,  $A_{v_{i-1}v_i}>\min\{A_{v_{i-1}z},A_{v_iz}\}$. }
Throughout, for distinct elements $x,y,z\in V$, we will use the notation $x\overset{z}{\sim}y$ to denote that  there exists  a path from  $x$ to $y$ avoiding $z$ in $A$. 

This concept was introduced in \cite[Definition 2.3]{Laurent16}, where it is  used as a key tool for analyzing the new recognition  algorithm for  Robinsonian matrices. Indeed,  
saying that  the pair $(x,y)$ avoids $z$ means that   the triple $(x,z,y)$ is not Robinson (and the same for its reverse $(y,z,x)$), so that $z$ cannot be placed between $x$ and $y$ in {\em any} Robinson ordering of $A$.
An important consequence, as observed in~\cite[Lemma 2.4]{Laurent16}, is then that if there exists a path from $x$ to $y$ avoiding $z$, i.e., $x\overset{z}{\sim} y$, then $z$ cannot be placed between $x$ and $y$ in any Robinson ordering of $A$.

Therefore, if there exist three distinct elements $x, y, z$ with $x\overset{z}{\sim}y,\   y\overset{x}{\sim}z$ and $z\overset{y}{\sim}x$ in $A$, then we can conclude that no Robinson ordering of $A$ exists and thus that $A$ is not a Robinsonian similarity matrix.
This motivates the following definition.

\begin{definition}\label{defAT}
Let $A$ be a symmetric matrix. 
A triple $\{x,y,z\}$ of distinct elements of $V$ is called a {\em weighted asteroidal triple of $A$} 
   if $x\overset{z}{\sim}y,  \ y\overset{x}{\sim}z$ and $ z\overset{y}{\sim}x$ hold, i.e., for any two elements of $\{x,y,z\}$ there exists a path between them avoiding the third one.
\end{definition}

Our main result is that weighted asteroidal triples are the {\em only} obstructions to the Robinsonian property.

\begin{theorem}
\label{thm:main}
A symmetric matrix $A$ is a Robinsonian similarity matrix if and only if there does not exist a  weighted asteroidal triple in $A$. 
\end{theorem}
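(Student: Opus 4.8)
The plan is to prove the two implications separately; the forward (necessity) direction is short, and the converse is the substantial part.

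First I would dispatch the necessity direction. Suppose $\{x,y,z\}$ is a weighted asteroidal triple. In any linear order of $V$, exactly one of $x,y,z$ is the median of the three. If $z$ is the median it lies between $x$ and $y$, contradicting $x\overset{z}{\sim}y$ via the fact recalled before Definition~\ref{defAT} (see \cite[Lemma~2.4]{Laurent16}) that an avoiding path from $x$ to $y$ forbids $z$ from being placed between them. Symmetrically, $x$ being the median contradicts $y\overset{x}{\sim}z$, and $y$ being the median contradicts $z\overset{y}{\sim}x$. Since all three cases are excluded, no ordering of $V$ can be a Robinson ordering, so $A$ is not Robinsonian.

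For the converse I would argue the contrapositive: if $A$ is not Robinsonian, then it contains a weighted asteroidal triple. The first step is to reformulate avoiding paths as connectivity. For fixed $z$, define the auxiliary graph $H_z$ on $V\setminus\{z\}$ with an edge $\{u,v\}$ precisely when $(u,z,v)$ is not Robinson, i.e.\ $A_{uv}>\min\{A_{uz},A_{vz}\}$. Then $x\overset{z}{\sim}y$ holds if and only if $x$ and $y$ lie in the same connected component of $H_z$, and these components are exactly the classes that must lie on a common side of $z$ in any Robinson ordering. In this language, the absence of a weighted asteroidal triple says that for every triple $\{x,y,z\}$ at least one of the three ``same-component'' relations fails, i.e.\ at least one of the three can legitimately serve as the median. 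I would then set up an induction on $|V|$: since whether a triple is Robinson depends only on its three entries, every avoiding path in a principal submatrix remains an avoiding path in $A$, so a weighted asteroidal triple of any proper principal submatrix is already one of $A$. Hence I may assume $A$ is a minimal non-Robinsonian matrix, all of whose proper principal submatrices are Robinsonian.

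To locate the obstruction in the minimal case I would pass to threshold graphs: for a scalar $\alpha$ let $G_\alpha$ be the graph on $V$ with edges $\{u,v : A_{uv}\ge\alpha\}$. A short computation shows that an ordering $\pi$ is a Robinson ordering of $A$ if and only if $\pi$ simultaneously satisfies the $3$-vertex condition of every $G_\alpha$; moreover an edge $\{u,v\}$ of $H_z$ corresponds, at the threshold $\alpha=A_{uv}$, to $u,v$ being adjacent in $G_\alpha$ while $z$ misses one of them, so avoiding paths in $A$ are exactly the unit-interval missing paths aggregated over thresholds. Because $A$ is not Robinsonian, the orderings forced by the individual $G_\alpha$ cannot be made compatible; by Theorem~\ref{thm:uig} this manifests either as a forbidden substructure (claw, chordless cycle, or asteroidal triple) inside a single $G_\alpha$, or as a genuine cross-threshold conflict. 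In the first case I would lift the obstruction by taking its two nonadjacent ``far'' vertices together with a suitable third vertex, and checking that the classical missing paths are avoiding paths in $A$.

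The main obstacle is the cross-threshold consistency, which is where the genuinely weighted (non-$0/1$) nature of $A$ enters and which has no analogue in Theorem~\ref{thm:uig}. Concretely, the component partitions of the graphs $H_z$ for different $z$ must fit together into a single linear order, and I expect the crux to be showing that the no-weighted-triple hypothesis forces these partitions to align compatibly, so that one can always peel off an extreme component, order it, and feed the induction. Ensuring that the three required avoiding paths simultaneously stay clear of the full $A$-neighbourhoods of the three respective apex vertices---rather than merely being missing paths at one threshold---is the delicate point, and I would expect most of the effort to go into verifying that any failure of global orderability produces three such mutually avoiding paths, i.e.\ an honest weighted asteroidal triple.
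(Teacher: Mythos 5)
Your necessity direction is correct and is exactly the paper's argument: the three avoidance relations forbid each of $x,y,z$ from being the median of the triple in any ordering, so no Robinson ordering exists. Your preliminary reductions for the converse are also sound: a weighted asteroidal triple of a principal submatrix is one of $A$ (avoidance depends only on the entries involved), a Robinson ordering of $A$ is precisely a common ordering satisfying the 3-vertex condition of every threshold graph $G_\alpha$, and if some single $G_\alpha$ contains a claw, chordless cycle or asteroidal triple, then the corresponding missing paths do lift to avoiding paths in $A$ (an edge $\{u,v\}$ with $A_{uv}\ge\alpha$ and $\min\{A_{uz},A_{vz}\}<\alpha$ satisfies $A_{uv}>\min\{A_{uz},A_{vz}\}$), which is in essence Lemma~\ref{lem:uig1}.

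However, there is a genuine gap at exactly the point you flag as ``the crux'': the case where every $G_\alpha$ is individually a unit interval graph but no common ordering exists. Here you only assert that the no-weighted-triple hypothesis should force the component partitions of the graphs $H_z$ to align, and that any failure of global orderability should produce three mutually avoiding paths; no argument is given, and this case is the entire content of the theorem --- it is what occupies Sections~2 and~3 of the paper. Note also that your induction framework is too weak to carry this case: restricting to a minimal non-Robinsonian matrix whose proper principal submatrices are all Robinsonian does not by itself let you ``peel off an extreme component and feed the induction,'' because the paper's recursion is \emph{not} over principal submatrices of $A$. The proof of Theorem~\ref{thm:1} requires a strictly stronger inductive statement (a Robinson ordering \emph{compatible with the similarity layer structure} $\psi_a$ rooted at a critical element $a$), and the recursion passes to the matrices $A^X$ and $A^Y$, whose entries in the row/column of the glue element $c$ are \emph{modified} using the large constant $M$ of (\ref{eqM}) precisely so that the layer structures and criticality are preserved and the two recursively obtained orderings can be concatenated; plain principal submatrices lose this information. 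Supporting machinery that your sketch has no substitute for includes Lemma~\ref{lem:04} (every matrix has a critical element or a proper strongly homogeneous set), the reduction Lemma~\ref{lemreduce}, and the layer properties (L1*), (L2*), whose failure is itself converted into weighted asteroidal triples via Lemmas~\ref{lem:02} and~\ref{lem:03}. Since avoiding paths may mix thresholds (each edge can be witnessed at a different $\alpha$), the hypothesis ``no weighted asteroidal triple in $A$'' is strictly stronger than ``no asteroidal triple in any $G_\alpha$,'' and no per-threshold or PQ-tree-style compatibility argument is known to bridge this; the cross-threshold conflict you name is where a new construction is needed, and your proposal leaves it unproved.
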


If we apply this result to the adjacency matrix $A_G$ of a graph $G$ we obtain that $G$ is a unit interval graph if and only if there does not exist a weighted asteroidal triple in $A_G$. 
As we will show in Section \ref{secuig},  the structural characterization of unit interval graphs from Theorem \ref{thm:uig} can in fact be derived from our main result in Theorem \ref{thm:main}.
Indeed, we will show that  the notion of weighted asteroidal triple in $A_G$ subsumes the notions of claw, chordless cycle and  asteroidal triple in $G$.


\subsection{Organization of the paper}

In Section \ref{secprel} we group preliminary notions and  results that will be used in the rest of the paper, in particular, about  homogeneous sets, critical elements,  and similarity layer structures.  
Section \ref{secmain} is devoted to the proof of  our main result (Theorem~\ref{thm:main}). 
In Section \ref{secfinal}, we first show how to recover the known charaterization of unit interval graphs (Theorem~\ref{thm:uig}) from our main result. 
Then we give a simple algorithm for finding all weighted asteroidal triples (or decide that none exists), that runs in $O(n^3)$. Finally we conclude with some remarks about the problem of finding the largest Robinsonian submatrix when the given matrix is not Robinsonian.

\section{Preliminary results }\label{secprel}

In this section we introduce some notation and basic results that we will need throughout the paper.

\ignore{
\subsection{Ordered partitions}

We begin with  introducing the notion of ordered partitions.
An {\em ordered partition} $\psi=(X_1,\dots, X_k)$ of $V$ is an ordered list of mutually disjoint subsets of $V$
 that cover $V$.
An ordered partition $\psi$  defines a partial order $\preceq_{\psi}$ on $V$ such that 
 $x\preceq_{\psi} y$ if and only if $x\in X_i$ and $y\in X_j$ with $i\leq j$.
 If $i=j$ then we  denote $x=_{\psi} y$ while if $i<j$ we denote $x\prec_{\psi} y$.
 When all classes $X_i$ are singletons then $\psi$ is a linear order of $V$, usualy denoted by $\sigma$.
We will introduce later in this section a special class of ordered partitions, namely similarity layer partitions of a given matrix $A$.
Given a linear order $\sigma$ and an ordered partition $\psi$ of $V$, we say that 
{\em $\sigma$ is compatible with $\psi$} if, for any $x,y\in V$, $x\prec_\psi y$ implies $x\prec_\sigma y$.
}

 \subsection{Homogeneous sets and critical elements}
 
We first  introduce the notion of `homogeneous set' for  a given symmetric matrix $A$, which we then  use 
to  reduce the problem of checking whether $A$ is Robinsonian  to the same problem on two smaller submatrices of $A$.

\begin{definition}\label{defhomoS}
Let $A$ be a symmetric matrix indexed by $V$. A set $S\subseteq V$ is said to be:
\begin{itemize}
\item {\em homogeneous} for $A$  if $A_{xy}=A_{xz}$ for all $x\in V\setminus X$ and $y,z\in X$;
\item {\em strongly homogeneous} for $A$ if $A_{xy}=A_{xz}\leq A_{yz}$ for all $x\in V\setminus X$ and $y,z\in X$;
\item {\em proper} if $2\le |S|\le |V|-1$.
\end{itemize}
 Assume that $S$ is  proper strongly homogeneous for $A$.  We will consider the following two submatrices of $A$: 
\begin{itemize}
\item the {\em restriction $A[S]$ of $A$ to $S$}, which  is the submatrix of $A$ indexed by $S$;
\item the {\em contraction $A/S$ of $A$ by $S$}, which is defined as the submatrix $A[\overline S \cup\{s\}]$, where $\overline S=V\setminus S$ and $s$ is an arbitrary element of $S$ (thus contracting $S$ to a single element).
\end{itemize}
\end{definition}

\noindent
These definitions are motivated by the following lemma which shows that, if $S$ is a proper strongly homogeneous set for $A$, then  the problem of recognizing whether $A$ is Robinsonian  and if not of finding a weighted asteroidal triple can be reduced to the same problem for the two matrices $A[S]$ and $A\slash S$.

\begin{lemma}\label{lemreduce}
Let $A$ be a symmetric matrix  and let  $S$ be a  proper strongly homogeneous set for $A$.
Then $A$ is Robinsonian if and only if both  $A[S]$ and $A\slash S$ are Robinsonian. Moreover:
\begin{itemize}
\item[(i)]
If $\sigma_1=(x_1,\cdots,x_p)$ is a Robinson ordering of $S$ and $\sigma_2=(y_1,\cdots, y_{j-1}, s, y_{j},\cdots, y_q)$ is a Robinson ordering of $A\slash S$, then $\sigma=(y_1,\cdots, y_{j-1}, x_1,\cdots,x_p, y_{j},\cdots, y_q)$ is a Robinson ordering of $A$.
\item[(ii)]
Any  weighted asteroidal triple of $A[S]$ or of $A\slash S$   is  a weighted asteroidal triple of $A$.
\end{itemize}
\end{lemma}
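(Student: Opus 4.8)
The plan is to prove the two directions of the equivalence together with the structural claims (i) and (ii) by carefully analyzing how Robinson orderings and avoiding-paths interact with a proper strongly homogeneous set $S$. The guiding principle is that, because $S$ is strongly homogeneous, every element outside $S$ ``sees'' all of $S$ identically, so $S$ behaves like a single super-element as far as interactions with $\overline S$ are concerned, while internally $S$ keeps its own similarity structure. I would first establish (i), then use it to deduce one direction of the equivalence, and finally prove (ii), which simultaneously yields the other direction.

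\textbf{Step 1: the merging construction (i).}
First I would verify that the interleaved ordering $\sigma=(y_1,\dots,y_{j-1},x_1,\dots,x_p,y_j,\dots,y_q)$ is a Robinson ordering of $A$. The key observation is that inserting the block $x_1,\dots,x_p$ in place of $s$ respects all triples. I would split the verification by how many of the three indices of a triple lie inside $S$. If all three lie in $S$, the triple is Robinson because $\sigma_1$ is a Robinson ordering of $A[S]$. If all three lie outside $S$, it is Robinson because $\sigma_2$ is a Robinson ordering of $A/S$ (the relative order of the $y$'s is unchanged). The remaining mixed cases are where strong homogeneity does the work: whenever a triple uses some $x\in S$ together with elements of $\overline S$, the entries $A_{x'y}$ for $y\in\overline S$ depend only on $y$ and not on which $x'\in S$ we pick (homogeneity), so the triple inherits the Robinson property from the corresponding triple in $A/S$ where $S$ is represented by $s$; and in the case of two elements of $S$ flanking or flanked by one outside element, the extra inequality $A_{xy}=A_{xz}\le A_{yz}$ in the definition of \emph{strongly} homogeneous guarantees the entry between the two $S$-elements dominates, which is exactly what the Robinson condition toward the diagonal requires.

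\textbf{Step 2: one direction of the equivalence.}
Statement (i) immediately gives that if both $A[S]$ and $A/S$ are Robinsonian then $A$ is Robinsonian, since we can take Robinson orderings $\sigma_1,\sigma_2$ and merge them. Conversely, if $A$ is Robinsonian with Robinson ordering $\pi$, I would argue that the elements of $S$ must appear consecutively in $\pi$: if some $w\in\overline S$ were placed strictly between two elements $x,x'\in S$, then by strong homogeneity $A_{xx'}\ge A_{xw}=A_{x'w}$, while the Robinson condition applied to the triple ordered as $x\prec_\pi w\prec_\pi x'$ forces $A_{xx'}\le\min\{A_{xw},A_{x'w}\}$, so in fact all these entries are equal and $w$ can be moved out of the block without violating any Robinson triple (here one checks the swap preserves all affected triples using homogeneity). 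Hence $S$ can be assumed consecutive in $\pi$, and then restricting $\pi$ to $S$ yields a Robinson ordering of $A[S]$, while collapsing the $S$-block to $s$ yields a Robinson ordering of $A/S$.

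\textbf{Step 3: lifting weighted asteroidal triples (ii), and the other direction.}
For (ii) I would show that any avoiding-path in $A[S]$ or in $A/S$ lifts to an avoiding-path in $A$, so that a weighted asteroidal triple of the submatrix is one of $A$. The crux is that the relation ``$(u,w,v)$ is not Robinson'' is preserved under passing from the submatrix to $A$: for $A[S]$ this is immediate since $A[S]$ is literally a principal submatrix of $A$ and the triples involved use only entries of $A$ indexed within $S$; for $A/S$ one replaces the representative $s$ by any fixed element of $S$ and uses homogeneity to see that the value $\min\{A_{uw},A_{vw}\}$ and the edge weight $A_{uv}$ are unchanged, so non-Robinson triples stay non-Robinson. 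Thus each of the three avoiding-paths witnessing the weighted asteroidal triple in the submatrix is a valid avoiding-path in $A$ on the same three vertices, giving a weighted asteroidal triple of $A$. Combined with Theorem~\ref{thm:main}, statement (ii) yields the missing direction: if $A$ is \emph{not} Robinsonian then (by Step 2) at least one of $A[S]$, $A/S$ is not Robinsonian, hence contains a weighted asteroidal triple, which lifts to one in $A$, consistent with $A$ not being Robinsonian.

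\textbf{Main obstacle.}
I expect the delicate point to be Step~2, the consecutiveness argument: one must show not merely that no outside element can be ``strictly trapped'' inside $S$ in a Robinson ordering, but that whenever equalities force a near-trapped configuration, the block can be reshuffled into a genuinely consecutive one \emph{without destroying the Robinson property of any triple}, including triples involving a third element far away. Managing these ties carefully — rather than assuming strict inequalities — is where the strong (as opposed to plain) homogeneity hypothesis is essential, and it is the step most likely to require a careful case analysis.
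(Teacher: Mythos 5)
Your Steps 1 and 3 are essentially the ``direct verification'' the paper intends: (i) is proved by a case analysis on how many of the three indices lie in $S$, using homogeneity to reduce mixed triples to triples of $A\slash S$ and the extra inequality $A_{xy}=A_{xz}\le A_{yz}$ of \emph{strong} homogeneity for the two-in-$S$ case, exactly as you describe; and the lifting in (ii) works. The problem is Step 2, and the difficulty you single out as the ``main obstacle'' is self-inflicted: it stems from misreading what $A\slash S$ is. By Definition \ref{defhomoS}, $A\slash S$ is not a quotient that requires the elements of $S$ to appear consecutively before being ``collapsed''; it is literally the principal submatrix $A[\overline S\cup\{s\}]$ for an arbitrary representative $s\in S$. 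Since the Robinson property is hereditary --- if $\pi$ is a Robinson ordering of $A$ and $I\subseteq V$, then the restriction of $\pi$ to $I$ is a Robinson ordering of $A[I]$, because every triple of $I$ ordered by $\pi|_I$ is ordered the same way by $\pi$ --- the direction ``$A$ Robinsonian $\Rightarrow$ $A[S]$ and $A\slash S$ Robinsonian'' is a one-liner needing neither consecutiveness nor any homogeneity hypothesis. Your consecutiveness-via-swaps argument is therefore an unnecessary detour, and, as you yourself concede, it is left unfinished: you never verify that extracting a trapped element $w$ preserves Robinson triples involving a distant third element (homogeneity gives you no control over entries between two elements outside $S$, so this check is not routine). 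As written, the only nontrivial part of your Step 2 is a hole; it just happens to be a hole in a step the lemma does not need.

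Two further remarks. First, the same heredity observation trivializes (ii): both $A[S]$ and $A\slash S$ are principal submatrices of $A$, and a path avoiding $z$ in a principal submatrix involves only entries of $A$ indexed within that subset, so it is verbatim a path avoiding $z$ in $A$; no replacement of $s$ by another element of $S$ and no appeal to homogeneity is needed. Second, your closing invocation of Theorem \ref{thm:main} to supply a ``missing direction'' is both superfluous (after a corrected Step 2 the equivalence is already complete, and the implication you state there is the contrapositive of the direction Step 1 gives, not of Step 2) and logically dangerous: Theorem \ref{thm:main} is obtained from Corollary \ref{cor:main}, whose proof uses this very lemma, so relying on it here would be circular. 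Fortunately nothing in your argument genuinely depends on it, so the repair is simply to delete that remark and replace Step 2 by the restriction argument.
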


\begin{proof}
Direct verification.
\end{proof}

In view of this lemma,  the core difficulty  in the proof of Theorem \ref{thm:main} is  the case when $A$ has no proper strongly homogeneous  set. The following notion of {\em critical} element will play a key role for analyzing this case.

\begin{definition}\label{defcritical}
Let $A$ be a symmetric matrix indexed by $V$. Then $a\in V$ is said to be {\em critical for $A$} 
if $x\overset{a}{\sim}y$ holds for all distinct elements  $x,y\in V\setminus\{a\}$.
\end{definition}

Note that if $a$ is a critical element of a Robinsonian matrix $A$, then it must be an end point of \textit{any} Robinson ordering of $A$.
On the other hand, an end point of a Robinson ordering might not be critical.
In this work, we will study critical elements for arbitrary  (not necessarily Robinsonian) matrices.
The following lemma shows that  any symmetric matrix $A$ has  a critical element or
a proper strongly homogeneous  set.

\begin{lemma}\label{lem:04}
Given a symmetric matrix $A$,  one can find   a  critical element   or 
a proper strongly homogeneous set for $A$.
  \end{lemma}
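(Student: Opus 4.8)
The plan is to recast criticality as a connectivity condition and then argue that, whenever no critical element exists, an extremal choice produces a proper strongly homogeneous set. For fixed $z\in V$, the relation ``the pair $(u,w)$ avoids $z$'' defines a graph $\Gamma_z$ on vertex set $V\setminus\{z\}$, in which $u$ and $w$ are adjacent precisely when $A_{uw}>\min\{A_{uz},A_{wz}\}$. By definition $u\overset{z}{\sim}w$ means $u,w$ lie in the same connected component of $\Gamma_z$, so $z$ is critical for $A$ if and only if $\Gamma_z$ is connected. Equivalently, $z$ fails to be critical exactly when $V\setminus\{z\}$ splits into two nonempty parts $P\sqcup Q$ with $A_{uw}\le\min\{A_{uz},A_{wz}\}$ (that is, $(u,z,w)$ Robinson) for all $u\in P$, $w\in Q$. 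I would organize the proof around this reformulation: if some element is critical we are done, so I assume every $\Gamma_z$ is disconnected and build a proper strongly homogeneous set.

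The decisive simplification is to work at the global minimum entry. Let $\mu:=\min_{u\ne v}A_{uv}$, so $A_{uv}\ge\mu$ for all $u\ne v$, fix an element $a$ incident to a minimizing pair, and set $F:=\{v\in V\setminus\{a\}:A_{av}=\mu\}$, the farthest layer of $a$ (nonempty by the choice of $a$). Since $\mu$ is the global minimum, every internal similarity of $F$ satisfies $A_{yy'}\ge\mu=A_{ay}$, so the required ``strong'' inequality holds for free relative to $a$: the external value $\mu$ contributed by $a$ never exceeds an internal similarity of $F$. Together with $a\notin F$, which gives $|F|\le|V|-1$, this makes $F$ the natural candidate for the proper strongly homogeneous set. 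Note also that in $\Gamma_a$ a vertex $y\in F$ is adjacent to $w$ exactly when $A_{yw}>\mu$, because $\min\{A_{ya},A_{wa}\}=\mu$; this clean adjacency rule is what links $F$ to the disconnectedness hypothesis.

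Three conditions remain: that $|F|\ge 2$; that every outside element $x\in V\setminus(F\cup\{a\})$ assigns a single common value $c_x=A_{xy}$ to all $y\in F$ (homogeneity); and that $c_x\le A_{yy'}$ for $y,y'\in F$ (the strong inequality away from $a$). My plan is to prove the implication: \emph{if $F$ is not a proper strongly homogeneous set, then some element of $V$ is critical}, so that we succeed either way. Here I would exploit that the various $\Gamma_z$ are disconnected, reading each separation against the adjacency rule above: an outside element $x$ distinguishing two vertices $y,y'\in F$, or a value $c_x$ exceeding some internal $A_{yy'}$, produces avoiding edges that I would use to reconnect an otherwise separated graph $\Gamma_z$, forcing criticality at a suitable vertex. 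To make this systematic I would use a similarity-layer decomposition around $a$, peeling off an extremal layer-confined component and tracking how the separations interact across layers.

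I expect this implication to be the main obstacle, and the reason the statement is not a one-line extremal observation. The examples are instructive: the farthest layer of an arbitrary minimizing element need not be homogeneous, and in exactly those cases criticality resurfaces at a different vertex (typically one of the min-incident elements that the distinguishing $x$ separates from the bulk). So the argument must simultaneously offer a module when the layer is uniform and pinpoint the relocated critical vertex when it is not, which is precisely the bookkeeping the layer structure is designed to support. The degenerate cases are routine and handled separately: when $|V|\le 2$ every element is vacuously critical, and when $|F|=1$ one passes to the next similarity layer or to a neighbouring minimizing element before repeating the analysis.
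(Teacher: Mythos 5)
Your reformulation of criticality as connectivity of the avoidance graph $\Gamma_z$ is correct, and the observation that the farthest layer $F$ of a min-incident element $a$ satisfies the ``strong'' inequality relative to $a$ for free is also fine. But the step you yourself flag as the main obstacle --- the implication \emph{if $F$ is not a proper strongly homogeneous set, then some element of $V$ is critical} --- is not merely left unproven in your sketch; it is false. Take $A$ to be the adjacency matrix of the disjoint union of a path $(a,x,y)$ and an edge $\{y',f\}$, i.e., $V=\{a,x,y,y',f\}$ with $A_{ax}=A_{xy}=A_{y'f}=1$ and all other off-diagonal entries equal to $0$. Here $\mu=0$, the element $a$ is incident to a minimizing pair, and its farthest layer is $F=\{y,y',f\}$, which is not even homogeneous since $A_{xy}=1\neq 0=A_{xy'}$. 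Yet no element of $V$ is critical: each $\Gamma_z$ is disconnected (for instance $\Gamma_a$ has components $\{x,y\}$ and $\{y',f\}$; $\Gamma_y$ has components $\{a,x\}$ and $\{y',f\}$; $\Gamma_{y'}$ has components $\{a,x,y\}$ and $\{f\}$; and similarly for $\Gamma_x$ and $\Gamma_f$). Indeed this matrix is Robinsonian (a disconnected unit interval graph), and since its connected components can be permuted in Robinson orderings, no vertex is an endpoint of every Robinson ordering, so criticality cannot ``resurface at a different vertex'' as your plan requires. This refutes your claimed dichotomy, and with it the whole strategy; the case $|F|=1$ and the fallback of ``passing to a neighbouring minimizing element'' are likewise unsubstantiated (in the example, the min-incident element $x$ does have strongly homogeneous farthest layer $\{y',f\}$, but you give no criterion that would select it).

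The paper avoids exactly this trap by not committing to a single extremal candidate. Its proof of Lemma~\ref{lem:04} runs an iterative refinement: starting from $Z=V\setminus\{a\}$, it repeatedly replaces $Z$ by ${\rm argmin}\{A_{vz}:z\in Z\}$ whenever some witness $v\notin Z$ breaks homogeneity of $Z$, or deletes an element $z$ when $A_{xy}=A_{xz}>A_{yz}$ with $y,z\in Z$, stopping either at a strongly homogeneous set $Z$ (which is then proper) or at a singleton $Z=\{b\}$; in the latter case the shrinking history itself supplies, by induction on the steps, the paths avoiding $b$ that certify $b$ is critical. In the counterexample above this refinement does precisely the work your one-shot extremal choice cannot: from $Z=\{y,y',f\}$ the witness $v=x$ yields ${\rm argmin}\{A_{xz}:z\in Z\}=\{y',f\}$, which is the desired proper strongly homogeneous set. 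So the extremal-layer idea can only be salvaged by building in such a refinement loop, at which point it essentially becomes the paper's argument.
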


\begin{proof}
The proof relies on the following algorithm. Pick an arbitrary element $a\in V$ and construct a set $Z$ as follows.
\begin{itemize}
\item Initially set  $Z=V\setminus \{a\}$.
\item Repeat the following until $|Z|=1$.
\begin{description}
\item[(i)] If there exists an element $v\in V\setminus Z$ such that
${\rm argmin}\{A_{vz}: z\in Z\}\neq Z$, 
then pick any such $v$ and let  $Z\leftarrow {\rm argmin}\{A_{vz}: z\in Z\}$. 

\item[(ii)] Otherwise $Z$ is  homogeneous for $A$. 
If there exist   distinct elements $x, y, z$ with $x\in V\setminus Z$ and $y, z\in Z$ such that 
$A_{xy}=A_{xz}> A_{yz}$, then  let $Z\leftarrow Z\setminus \{z\}$. 
Otherwise  $Z$ is strongly homogeneous and output $Z$. 
\end{description}
\item If $|Z|=1$  then output the element in $Z$.
\end{itemize}
The proof of the  lemma will be complete if we can  show that, if the final set $Z$ is  a singleton set with (say) $Z=\{b\}$,  then $b$  is critical for $A$.
For this it suffices to  show:
\begin{equation}
\label{eq:lem:04-1}
\text{$a\overset{b}{\sim} v$ for any $v\in V\setminus \{a,b\}$},
\end{equation}
since then, for any distinct $x,y\in V\setminus \{a,b\}$, we have $x\overset{b}{\sim} a \overset {b}{\sim} y$.
We denote by $Z_i$ the set $Z$ obtained at the $i$-th step in the above algorithm. Then, 
$Z_0=V\setminus\{a\}$ and $Z_{i+1}\subsetneq Z_i$ for all $i\ge 0$, with  $Z_k=\{b\}$ at the last $k$-th iteration.
Relation (\ref{eq:lem:04-1}) follows if we can show that, for any $0\le i\le k-1$,
\begin{equation}\label{eq:lem:04-2}
\text{$a\overset{b}{\sim} v$ for any $v\in Z_i\setminus Z_{i+1}$}.
\end{equation}
We prove (\ref{eq:lem:04-2}) using induction on $i$. Let  $0\le i\le k-1$ and assume that (\ref{eq:lem:04-2}) holds for all 
$j\le i-1$ (when $i\ge 1$); we show that (\ref{eq:lem:04-2}) also holds for index $i$. For this let $v\in  Z_i\setminus Z_{i+1}$.

Assume first that  $Z_{i+1}$ is constructed from $Z_i$ as in  (i). Then there exists  $v_i\in V\setminus Z_i$
such that $Z_{i+1}={\rm{argmin}}\{A_{v_iz}: z\in Z_i\} \subsetneq Z_i$.
As $v\in Z_i\setminus Z_{i+1}$ and $b\in Z_k\subseteq Z_{i+1}$ we have $A_{v_iv}>A_{v_ib}$ and thus $(v_i,v)$ avoids $b$. 
If $v_i=a$ we are done. Otherwise, $v_i$ belongs to one of the sets $Z_{j-1}\setminus Z_j$ for some $0\le j\le i-1$ and thus, using the induction assumption, we can find a path from $a$ to $v_i$ avoiding $b$. Concatenating it with  $(v_i,v)$ we get a path from $a$ to $v$ avoiding $b$, i.e., $v\overset{b}{\sim} a$.

Assume now that  $Z_{i+1}$ is constructed from $Z_i$ as in  (ii). Then $Z_i$ is homogeneous  for $A$  and there exist elements 
$x\not\in Z_i$, $y\in Z_{i+1}$ and $v\in Z_{i}\setminus Z_{i+1}$ such that $Z_{i+1}=Z_i\setminus \{v\}$
and $A_{xy}=A_{xv}>A_{yv}$. As $b\in Z_{i+1}$ we have $v\ne b$.
We first claim that 
$A_{yv}\ge A_{bv}$. For this  assume  for contradiction that $A_{yv}<A_{bv}$.  As  $b,y\in Z_{i+1}$ we have $i\le k-2$. Moreover, as  $v\not\in Z_{i+1}$ and 
$A_{yv}<A_{bv}$, the set ${\rm{argmin}}\{A_{vu}: u\in Z_{i+1}\}$ does not contain  $b$ and thus  is a strict subset of $Z_{i+1}$. 
Note that $v$ is the only element in $V\setminus Z_{i+1}$ with this propery (i.e., ${\rm{argmin}}\{A_{wu}: u\in Z_{i+1}\}=Z_{i+1}$ for $w\in V\setminus (Z_{i+1}\cup \{v\})$) since $Z_i$ is homogeneous.
Hence,  at the next step we would construct $Z_{i+2}$ from $Z_{i+1}$ as in (i) and thus 
we would have $Z_{i+2}\subsetneq Z_{i+1}\setminus \{b\}$, a contradiction.
Therefore, $A_{xb}=A_{xy}=A_{xv}> A_{yv}\ge A_{vb}$ holds. 
Hence the path $(x,v)$ avoids $b$. 
If $x=a$ we are done. Otherwise, as $x\not\in Z_i$, $x$ lies in $Z_j\setminus Z_{j-1}$ for some $0\le j\le i-1$ and thus, by the induction assumption, there is a path from $a$ to $x$ avoiding $b$. Concatenating it with $(x,v)$ we get a path from $a$ to $v$ avoiding $b$, i.e., $v\overset{b}{\sim} a$.
\end{proof}

\subsection{Similarity layer partitions}

We begin with  the notion of ordered partition.
An {\em ordered partition} $\psi=(X_1,\dots, X_k)$ of $V$ is an ordered list of mutually disjoint subsets of $V$
 that cover $V$.
Then  $\psi$  defines a partial order $\preceq_{\psi}$ on $V$ such that 
 $x\preceq_{\psi} y$ if and only if $x\in X_i$ and $y\in X_j$ with $i\leq j$.
 If $i=j$ then we  denote $x=_{\psi} y$ while if $i<j$ we denote $x\prec_{\psi} y$.
 When all classes $X_i$ are singletons then $\psi$ is a linear order of $V$, usually denoted by $\sigma$.

Given a linear order $\sigma$ and an ordered partition $\psi$ of $V$, we say that 
{\em $\sigma$ is compatible with $\psi$} if, for any $x,y\in V$, $x\prec_\psi y$ implies $x\prec_\sigma y$.

\medskip
A key ingredient in the proof of Theorem \ref{thm:main} is the notion of {\em similarity layer structure}, which was introduced in \cite[Section 4.2]{Laurent16} and played a crucial role there in the study of the multisweep SFS algorithm. 
Fix an element $a\in V$.  We define  subsets $X_i$ ($i\ge 0$)  of $V$ in the following iterative manner: set $X_0=\{a\}$ and for $i\ge 1$ 
\begin{equation}\label{eqlayer}
X_i=\{ y\not\in X_0\cup\cdots \cup X_{i-1}: A_{xy}\ge A_{xz} \ \forall x\in X_0\cup\cdots \cup X_{i-1}, \forall z\not\in X_0\cup\cdots \cup X_{i-1}\}.
\end{equation}
We let $k$ denote the largest integer for which $X_k\ne \emptyset$. The sets $X_0, \cdots, X_k$ are called the {\em similarity layers of $A$ rooted at $a$}  and we denote by 
$\psi_a=(X_0,X_1,\cdots,X_k)$ the ordered collection of the similarity layers and call it the {\em similarity layer structure} rooted at $a$. As we will see in Lemma \ref{lem:02} below, when $A$ is Robinsonian and $a$ is critical for $A$, $\psi_a$ is an ordered partition of $V$.
The following basic properties of the similarity layers follow easily from their definition. 

\begin{lemma}\label{lemlayerbasic}
The following properties hold for   $\psi_a=(X_0, X_1,\cdots,X_k)$,  the similarity layer structure of $A$ rooted at $a$:
\begin{description}
\item[(L1)]
If $x\in X_i$, $y,z\in X_j$ with $0\le i<j\le k$, then $A_{xy}=A_{xz}$.
\item[(L2)]
If $x\in X_i$, $y\in X_j$, $z\in X_h$ with $0\le i<j<h\le k$, then
$A_{xz}\le A_{xy}$.
\item[(L3)]
If $x\in X_i$ with $0\le i\le k$ and $y\in V\setminus (X_0\cup\cdots \cup X_i)$, then  $a\overset{y}{\sim} x$.
\item[(L4)]
Assume $V=X_0\cup \cdots \cup X_k$. Then the set   $X_k$ is  homogeneous for $A$. Moreover, if $|X_k|=1$ with  (say)  $X_k=\{b\}$, then $b$ is  critical for $A$.
\end{description}
\end{lemma}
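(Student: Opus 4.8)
The plan is to derive (L1) and (L2) directly from the defining condition \eqref{eqlayer}, to prove the path statement (L3) by induction on the layer index, and finally to obtain (L4) as a consequence of (L1) and (L3). Throughout I would abbreviate $U_i := X_0\cup\cdots\cup X_{i-1}$, so that \eqref{eqlayer} reads: $y\in X_i$ if and only if $A_{xy}\ge A_{xz}$ for all $x\in U_i$ and all $z\in V\setminus U_i$; I record the nesting $U_j\subseteq U_h$, hence $V\setminus U_h\subseteq V\setminus U_j$, whenever $j\le h$.

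For (L1), take $x\in X_i$ and $y,z\in X_j$ with $i<j$; then $x\in U_j$ while $y,z\in X_j\subseteq V\setminus U_j$, so the defining inequality applied to $y$ (against $z$) gives $A_{xy}\ge A_{xz}$ and applied to $z$ (against $y$) gives $A_{xz}\ge A_{xy}$, whence equality. For (L2), take $x\in X_i$, $y\in X_j$, $z\in X_h$ with $i<j<h$; then $x\in U_j$, $y\in X_j$, and $z\in X_h\subseteq V\setminus U_h\subseteq V\setminus U_j$, so the defining inequality for $y$ against $z$ yields $A_{xy}\ge A_{xz}$. Both items are pure unfoldings of \eqref{eqlayer}.

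The heart of the argument is (L3), which I would prove by induction on $i$, the statement being that $a\overset{y}{\sim} x$ for every $x\in X_i$ and every $y\notin U_{i+1}$. The base case $i=0$ is immediate, since then $x=a$ and the trivial path avoids $y$. For the step, fix $i\ge 1$, $x\in X_i$ and $y\notin U_{i+1}$ (so also $y\in V\setminus U_i$). The crucial point is to produce a witness $w\in U_i$ with $A_{wx}>A_{wy}$. Since $x\in X_i$, taking $z=y$ in \eqref{eqlayer} already gives $A_{wx}\ge A_{wy}$ for all $w\in U_i$; and were these all equalities, then $y$ would itself satisfy $A_{wy}=A_{wx}\ge A_{wz}$ for all $w\in U_i$ and $z\in V\setminus U_i$, forcing $y\in X_i$ and contradicting $y\notin U_{i+1}$. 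With such a $w$ fixed, $A_{wx}>A_{wy}\ge\min\{A_{wy},A_{xy}\}$ shows that $(w,x)$ avoids $y$; since $w\in X_{i'}$ for some $i'<i$ and $y\notin U_{i'+1}$, the induction hypothesis supplies a path from $a$ to $w$ avoiding $y$, to which I append the edge $(w,x)$ (shortcutting at $x$ if it already occurs) to obtain $a\overset{y}{\sim} x$.

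Finally, (L4) falls out of the previous items. Homogeneity of $X_k$ is precisely (L1) with $j=k$: for $x\in X_i$ ($i<k$) and $y,z\in X_k$ one has $A_{xy}=A_{xz}$. For criticality when $X_k=\{b\}$, I may assume $|V|\ge 2$, hence $a\ne b$; then every $x\in V\setminus\{b\}=U_k$ lies in some $X_i$ with $i<k$, and since $b\in X_k\subseteq V\setminus U_{i+1}$, (L3) with $y=b$ gives $a\overset{b}{\sim} x$. Concatenating a path from $x$ to $a$ avoiding $b$ with a path from $a$ to $y$ avoiding $b$ (again reducing the resulting walk to a genuine path) yields $x\overset{b}{\sim} y$ for all distinct $x,y\in V\setminus\{b\}$, so $b$ is critical. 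I expect the only real obstacle to be the witness step inside (L3): it is what drives the induction, and it rests on the slightly delicate observation that equality of $A_{wx}$ and $A_{wy}$ across all earlier $w$ would have forced $y$ into the same layer as $x$. The concatenation and walk-to-path reductions are routine, in line with their use in the proof of Lemma~\ref{lem:04}.
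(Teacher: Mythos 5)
Your proposal is correct: the paper gives no proof of this lemma, stating only that the properties ``follow easily from their definition,'' and your argument is precisely the intended routine unfolding --- (L1) and (L2) by applying the defining inequality (\ref{eqlayer}) in both or one direction, (L3) by induction on the layer index with the key witness observation that $A_{wx}=A_{wy}$ for all $w\in X_0\cup\cdots\cup X_{i-1}$ would force $y$ into the layer $X_i$, and (L4) by specializing (L1) and concatenating the paths from (L3). The witness step and the walk-to-path reductions are handled correctly, so nothing is missing.
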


In the above lemma no assumption is made on the root of the similarity layer structure. We now group further properties that hold when the root $a$ is assumed to be critical.

\begin{lemma}\label{lem:02}
Assume that $a$ is critical for $A$ and let $\psi_a=(X_0,X_1,\cdots,X_k)$ be the ordered collection of similarity layers of $A$ rooted at $a$. 
If $X_0\cup X_1\cup \cdots \cup X_k\ne V$ then we can find a weighted asteroidal triple of $A$.
\end{lemma}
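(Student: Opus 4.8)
The plan is to produce a weighted asteroidal triple of the special form $\{a,p,q\}$, where $p,q$ are two suitably chosen elements of the nonempty leftover set $R:=V\setminus(X_0\cup\cdots\cup X_k)$, and to set $U:=X_0\cup\cdots\cup X_k$. Unwinding Definition \ref{defAT}, such a triple requires the three relations $a\overset{q}{\sim}p$, $p\overset{a}{\sim}q$ and $a\overset{p}{\sim}q$ (the last being $q\overset{p}{\sim}a$ read backwards). The middle relation is immediate: since $a$ is critical for $A$ and $p,q\in V\setminus\{a\}$ are distinct, $p\overset{a}{\sim}q$ holds by Definition \ref{defcritical}. So the real task is to exhibit, for one well-chosen pair $p\neq q$ in $R$, an avoiding path from $a$ to $p$ missing $q$ and an avoiding path from $a$ to $q$ missing $p$.

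The key simplification is that each of these two avoiding paths can be obtained from property (L3) by appending a single edge. Concretely, I claim that whenever there is an $x\in U$ with $A_{xp}>\min\{A_{xq},A_{pq}\}$, then $a\overset{q}{\sim}p$. Indeed, $x$ lies in some layer $X_i$ while $q\in R$ lies outside $X_0\cup\cdots\cup X_i$, so (L3) supplies an avoiding path $P$ realizing $a\overset{q}{\sim}x$; the displayed inequality says precisely that the edge $(x,p)$ avoids $q$ (the triple $(x,q,p)$ is not Robinson), so appending $p$ to $P$ — or, if $p$ already occurs on $P$, truncating $P$ at $p$ — yields an avoiding path from $a$ to $p$ missing $q$. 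Symmetrically, an $x'\in U$ with $A_{x'q}>\min\{A_{x'p},A_{pq}\}$ gives $a\overset{p}{\sim}q$. Hence it suffices to find a pair $p,q\in R$ together with $x_1,x_2\in U$ such that $A_{x_1p}>A_{x_1q}$ and $A_{x_2q}>A_{x_2p}$, since $A_{x_1p}>A_{x_1q}\ge\min\{A_{x_1q},A_{pq}\}$ and likewise at $x_2$.

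Finding such a crossing pair is where the hypothesis that the layers do not exhaust $V$ enters, and this is the crux. Since $k$ is the last nonempty layer, $X_{k+1}=\emptyset$, which by (\ref{eqlayer}) means that no $y\in R$ satisfies $A_{xy}\ge A_{xz}$ for all $x\in U$ and all $z\in R$. In terms of the domination preorder on $R$ defined by $p\succeq q\iff A_{xp}\ge A_{xq}\ \forall x\in U$, this says $R$ has no greatest element. A finite set carrying a \emph{total} preorder always has a greatest element, so $\succeq$ cannot be total: there is an incomparable pair $p,q\in R$, i.e.\ elements with $A_{x_1p}>A_{x_1q}$ and $A_{x_2q}>A_{x_2p}$ for some $x_1,x_2\in U$ — exactly the crossing pair required. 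Feeding it into the construction above produces $a\overset{q}{\sim}p$ and $a\overset{p}{\sim}q$, and with the free relation $p\overset{a}{\sim}q$ this shows $\{a,p,q\}$ is a weighted asteroidal triple. I expect the main obstacle to be the careful bookkeeping of the single-edge extension (especially the truncation case where the (L3)-path already visits the target endpoint) and the clean extraction of the crossing pair from $X_{k+1}=\emptyset$; once these are in place the rest is routine.
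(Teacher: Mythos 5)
Your proof is correct and takes essentially the same route as the paper's: you extract a crossing pair $p,q\in V\setminus U$ with witnesses $x_1,x_2\in U$ from $X_{k+1}=\emptyset$ (the paper phrases this as the argmax sets $M_x$ not forming a chain, you as the domination preorder on the leftover set having no greatest element — the same combinatorial fact), then obtain $a\overset{q}{\sim}p$ and $a\overset{p}{\sim}q$ by appending one avoiding edge to the paths from (L3), and close with criticality of $a$ for $p\overset{a}{\sim}q$, exactly as the paper concludes that $\{a,u,v\}$ is a weighted asteroidal triple. Your explicit truncation remark for repeated vertices is a harmless refinement the paper leaves implicit.
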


\begin{proof}
Assume that $U:= X_0\cup X_1\cup \cdots \cup X_k \ne V$.
By assumption, $X_{k+1}=\emptyset$. 
We use the following notation: for $x\in U$ set $M_x:= {{\rm argmax}}\{A_{xv}: v\in V\setminus U\}$.
We claim  that there exist elements $x\ne x'\in U$ such that $M_x\setminus M_{x'}\ne \emptyset$ and $M_{x'}\setminus M_x\ne \emptyset$.
For, if not, then $M_{x_1}\subseteq \cdots \subseteq M_{x_p}$ for some ordering of the elements of 
$U=\{x_1,\cdots,x_p\}$ and thus $X_{k+1}=\bigcap_{x\in U} M_x=M_{x_1}$, contradicting the assumption  $X_{k+1}=\emptyset$.
\ignore{
As $U\ne V$ there exists $z_1\in V\setminus U$. 
 As $z_1\not\in X_{k+1}$ there exist $x_1\in U$ and $z_2\in V\setminus U$ such that $A_{x_1z_1}<A_{x_1z_2}$.
Moreover we can choose $z_2\in M_{x_1}$ and we have   $z_1\not\in M_{x_1}$.
Next, as $z_2\not\in X_{k+1}$ there exist $x_2\in U$ and $z_3\in V\setminus U$ such that $A_{x_2z_2}<A_{x_2z_3}$. Again we can choose 
$z_3\in M_{x_2}$ and we have $z_2\not\in M_{x_2}$.
Iterating we obtain elements $x_1,x_2, \cdots x_i,\cdots$ of $U$ and elements $z_1,z_2,\cdots z_i,\cdots $ of $V\setminus U$ such that 
$z_{i+1}\in M_{x_i}$ and $z_i\not\in M_{x_i}$.
Hence $z_{i+1}\in M_{x_i}\setminus M_{x_{i+1}}$ for all $i\ge 1$.  
We claim that $M_{x_{i+1}}\setminus M_{x_i}\ne \emptyset$ for some $i$. For otherwise we would have $M_{x_{i+1}}\subsetneq M_{x_i}$ for all $i$, implying a contradiction.
So we have shown that there exist elements $x\ne x'\in U$ such that $M_x\setminus M_{x'}\ne \emptyset$ and $M_{x'}\setminus M_x\ne \emptyset$.
}
Let  $u\in M_x\setminus M_{x'}$ and $v\in M_{x'}\setminus M_x$.
Then $A_{xu}>A_{xv}$ implying $x\overset{v}{\sim} u$, and
$A_{x'v}>A_{x'u}$, implying $x'\overset{u}{\sim} v$.
Combining with $a\overset{v}{\sim} x$, and $a\overset{u}{\sim} x'$ obtained from (L3) since $x,x'\in U$ and $u,v\not\in U$, we obtain 
$a\overset{v}{\sim} u$ and $a\overset{u}{\sim} v$.
Finally, $u\overset{a}{\sim} v$ since $a$ is critical, so that $\{a,u,v\}$ is a weighted asteroidal triple of $A$.
\end{proof}

\begin{lemma}\label{lem:03}
Assume that $a$ is critical for $A$ and let $\psi_a=(X_0,X_1,\cdots,X_k)$ be the ordered collection of similarity layers of $A$ rooted at $a$. 
Consider the properties:
\begin{description}
\item[(L1*)] For all  $x\in X_i$ and  $y,z\in X_j$ with $0\le i<j\le k$,  we have $A_{xy}=A_{xz}\le A_{yz}$. 
\item[(L2*)] For all  $x\in X_i$, $y\in X_j$, $z\in X_h$ with $0\le i<j<h\le k$, 
$A_{xz}\le \min\{A_{xy},A_{yz}\}$.
\end{description}
The following holds:
\begin{description}
\item[(i)] If (L1*)  or (L2*) does not hold then we can find a weighted asteroidal triple in $A$.
\item[(ii)]
If (L1*) holds then the last layer $X_k$ is strongly homogeneous.
\end{description}
\end{lemma}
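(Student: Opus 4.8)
The plan is to handle the two parts separately, with part (i) carrying all the weight. The key starting observation is that (L1) and (L2) already hold unconditionally, so a failure of the starred versions can only come from the extra inequality. Since (L1) gives $A_{xy}=A_{xz}$ whenever $x\in X_i$ and $y,z\in X_j$ with $i<j$, a failure of (L1*) must produce such $x,y,z$ with $A_{xy}=A_{xz}>A_{yz}$; and since (L2) gives $A_{xz}\le A_{xy}$, a failure of (L2*) must produce $x\in X_i$, $y\in X_j$, $z\in X_h$ ($i<j<h$) with $A_{xz}>A_{yz}$. In both cases I aim to exhibit $\{a,y,z\}$ as a weighted asteroidal triple, mimicking the proof of Lemma \ref{lem:02}: the relation $y\overset{a}{\sim}z$ is free from the criticality of $a$, so it remains to assemble $a\overset{z}{\sim}y$ and $a\overset{y}{\sim}z$ from single-edge avoiding pairs together with the paths to the root supplied by (L3).

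For the (L1*) failure I would first read off two single-edge avoiding pairs from $A_{xy}=A_{xz}>A_{yz}$: the edge $(x,y)$ avoids $z$, because $A_{xy}>\min\{A_{xz},A_{yz}\}=A_{yz}$, and symmetrically the edge $(x,z)$ avoids $y$; that is, $x\overset{z}{\sim}y$ and $x\overset{y}{\sim}z$. Since $y,z\in X_j$ with $j>i$ lie outside $X_0\cup\cdots\cup X_i$, property (L3) yields $a\overset{z}{\sim}x$ and $a\overset{y}{\sim}x$. Concatenating avoiding paths, exactly as is already done in the proof of Lemma \ref{lem:04}, gives $a\overset{z}{\sim}y$ and $a\overset{y}{\sim}z$; together with $y\overset{a}{\sim}z$ from criticality of $a$, this shows $\{a,y,z\}$ is a weighted asteroidal triple.

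For the (L2*) failure I would combine $A_{xz}>A_{yz}$ with the bound $A_{xy}\ge A_{xz}$ from (L2) to get $A_{xy}>A_{yz}$, whence $A_{xz}>\min\{A_{xy},A_{yz}\}=A_{yz}$, so the edge $(x,z)$ avoids $y$, i.e. $x\overset{y}{\sim}z$. Now (L3) provides $a\overset{y}{\sim}x$ (as $y\in X_j$ with $j>i$) and $a\overset{z}{\sim}y$ (as $z\in X_h$ with $h>j$); concatenating $a\overset{y}{\sim}x$ with $x\overset{y}{\sim}z$ gives $a\overset{y}{\sim}z$, and $y\overset{a}{\sim}z$ follows again from criticality, so $\{a,y,z\}$ is once more a weighted asteroidal triple. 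In both cases $a,y,z$ are distinct because $y$ and $z$ lie in layers of positive index, while $a\in X_0$.

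Part (ii) should then be immediate: assuming (L1*) holds, and reducing to the case $V=X_0\cup\cdots\cup X_k$ (otherwise Lemma \ref{lem:02} already returns a weighted asteroidal triple), every $x\in V\setminus X_k$ lies in some $X_i$ with $i<k$, so (L1*) applied with $j=k$ gives $A_{xy}=A_{xz}\le A_{yz}$ for all $y,z\in X_k$, which is exactly the assertion that $X_k$ is strongly homogeneous. The only genuine care needed throughout is the bookkeeping of which pair avoids which vertex and the legitimacy of concatenating the avoiding paths; the latter is the point I expect to be the main (if minor) technical hurdle, and it is already licensed by the concatenation argument used in Lemma \ref{lem:04}.
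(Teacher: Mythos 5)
Your proposal is correct and follows essentially the same route as the paper's proof: in both failure cases you extract the single-edge avoiding pairs $x\overset{z}{\sim}y$ and/or $x\overset{y}{\sim}z$, route to the root via (L3) by concatenation, and invoke criticality of $a$ for $y\overset{a}{\sim}z$ to exhibit $\{a,y,z\}$ as the weighted asteroidal triple, while (ii) is read off from (L1*) together with (L4) (your explicit reduction to the covered case $V=X_0\cup\cdots\cup X_k$ matches the hypothesis under which the paper applies (L4)). The only cosmetic deviation is that in the (L2*) case you obtain $a\overset{z}{\sim}y$ directly from (L3) instead of via $x\overset{z}{\sim}y$, which changes nothing of substance.
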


\begin{proof}
(i) Assume (L1*) does not hold, i.e., in view of (L1), there exist $x\in X_i$, $y\ne z\in X_j$ with $0\le i<j\le k$ and $A_{xy}=A_{xz}>A_{yz}$.
Then $x\overset{z}{\sim} y$ and $x\overset{y}{\sim} z$. Combining with  $a\overset{z}{\sim} x$ and $a\overset{y}{\sim} x$ from (L3) we deduce that $a\overset{z}{\sim} y$  and $a\overset{y}{\sim} z$. Finally, $y\overset{a}{\sim} z$ since $a$ is critical and thus   $\{a,y,z\}$ is a weighted asteroidal triple.
Assume now that (L2*) does not hold, i.e., in view of (L2), there exist $x\in X_i$, $y\in X_j$, $z\in X_h$ with $0\le i<j<h\le k$ and 
$A_{yz}<A_{xz}\le A_{xy}$. Then, again we have  $x\overset{z}{\sim} y$ and $x\overset{y}{\sim} z$ and thus $\{a,y,z\}$ is a  weighetd asteroidal triple for $A$.
Finally,  (ii)  follows directly from (L4) and (L1*).
 \end{proof}

\section{Structural characterization}\label{secmain}

\subsection{Main result}
We can now formulate the main technical result of this paper, from which our main Theorem \ref{thm:main} will easily follow. The  proof of Theorem \ref{thm:1} will be given in the next section. 
\begin{theorem}
\label{thm:1}
Let $A$ be a symmetric matrix indexed by $V$ and let $a$ be a critical element for $A$.
Then one can find 
one of the following three objects:
\begin{description}
\item[(i)] a proper strongly homogeneous set;
\item[(ii)] a weighted asteroidal triple;
\item[(iii)] a Robinson ordering of $A$ compatible with the similarity layer structure $\psi_a$ of $A$ rooted at $a$.
\end{description}
\end{theorem}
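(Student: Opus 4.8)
The plan is to reduce Theorem~\ref{thm:1} to an ordering problem \emph{inside} each similarity layer. First I would build $\psi_a=(X_0,X_1,\dots,X_k)$ with $X_0=\{a\}$ and clear the easy exits. If $\psi_a$ fails to cover $V$, Lemma~\ref{lem:02} returns a weighted asteroidal triple and I output~(ii). Otherwise I apply Lemma~\ref{lem:03}: if (L1*) or (L2*) fails I again obtain a weighted asteroidal triple (output~(ii)), so I may assume both hold, and then Lemma~\ref{lem:03}(ii) makes $X_k$ strongly homogeneous. If $|X_k|\ge 2$ then, since $a\in X_0$ forces $X_k\ne V$ once $k\ge1$, the set $X_k$ is \emph{proper} strongly homogeneous and I output~(i). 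The remaining situation is $X_k=\{b\}$ with $b$ critical by (L4), and here I must construct the ordering in~(iii).

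Next I would show that (L1*) and (L2*) are exactly the right normal form through a triple-by-triple check. Any ordering $\sigma$ compatible with $\psi_a$ concatenates the layers in the order $X_0,X_1,\dots,X_k$, so for a triple $u\prec_\sigma v\prec_\sigma w$ the layer indices are nondecreasing. If the three indices are distinct, (L2*) gives the Robinson inequality; if exactly two of $u,v,w$ share a layer and the third lies in a \emph{strictly earlier} layer, then (L1) collapses two of the three relevant similarities to a common value and (L1*) supplies the missing inequality. The only residual requirements are those triples whose two smallest-index vertices share a layer: these amount to asking, for each layer $X_i$, that its internal order be a Robinson ordering of $A[X_i]$ and in addition be \emph{monotone toward the later layers}, i.e. $u$ before $v$ in $X_i$ must force $A_{uw}\le\min\{A_{uv},A_{vw}\}$ for every $w$ in a strictly later layer. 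Note that by (L1) the value $A_{uw}$ depends only on $u$ and the index of $w$'s layer, which is what makes this a condition on the internal order alone.

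To realize such an internal order I would induct on $|V|$. For each layer $X_i$ I would either exhibit a proper strongly homogeneous subset of $A$ — here (L1) and (L1*) already guarantee the required behaviour of $X_i$ towards all \emph{earlier} layers, so only the equal-similarity condition coming from the \emph{later} layers must be inspected — and output it as~(i); or else order $X_i$ by applying the induction hypothesis to the strictly smaller matrix obtained from $A[X_i]$ by adjoining one representative of each later layer, chosen so that the monotone-lookahead conditions become ordinary Robinson conditions once these representatives are forced into the tail in layer order. The layer-orders are then spliced together in the order $X_0,\dots,X_k$. A simplification I would exploit is that avoiding relations depend only on the three entries they involve, so a weighted asteroidal triple found in any induced submatrix is automatically one of $A$; this lets me lift obstructions out of the recursion without extra hypotheses.

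The step I expect to be the main obstacle is the faithfulness of this last reduction: proving that the auxiliary instance attached to $X_i$ has a compatible Robinson ordering \emph{if and only if} $X_i$ admits an internal order meeting both the internal Robinson and the monotone-lookahead conditions, and that every failure produces one of the three allowed objects rather than a genuinely new phenomenon. Concretely, when the lookahead requirement clashes with the internal requirement, the offending pair $(u,v)$ should yield a non-Robinson triple $(u,w,v)$ with $w$ in a later layer, hence an avoiding relation $u\overset{w}{\sim}v$, which combined with $u\overset{a}{\sim}v$ from criticality and the relations $a\overset{w}{\sim}u$, $a\overset{w}{\sim}v$ from (L3) assembles into a weighted asteroidal triple, in the same spirit as Lemmas~\ref{lem:02} and~\ref{lem:03}. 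Keeping track of which third vertex closes the triangle — the critical endpoint $a$, the critical endpoint $b$, or a later-layer element — and exploiting the symmetry between the two critical endpoints $a$ and $b$, is where the bookkeeping has to be done carefully.
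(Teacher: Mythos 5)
Your opening reductions coincide with the paper's (Lemmas \ref{lem:02} and \ref{lem:03} give the exits to outputs (ii) and (i), leaving $X_k=\{b\}$ with $b$ critical by (L4)), and your triple-by-triple analysis is sound: under (L1*)--(L2*), the only nontrivial constraints on an ordering compatible with $\psi_a$ are, for each layer $X_i$, an internal Robinson condition on $A[X_i]$ together with the monotone-lookahead condition you formulate via (L1). The genuine gap is in how you discharge this per-layer problem. Theorem \ref{thm:1}, applied inductively to $A[X_i]$ with one adjoined representative per later layer, returns a Robinson ordering compatible with the similarity layer structure of that \emph{auxiliary} matrix rooted at one of \emph{its} critical elements; you never exhibit a critical element for this matrix, and, more importantly, nothing in the conclusion of the theorem forces the representatives into the tail in layer order --- generically the auxiliary matrix's own layer structure will interleave them with the elements of $X_i$. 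To force the tail you must modify entries, and this is precisely what the paper's big-$M$ construction (\ref{eq:AX})--(\ref{eq:AY}) does, but only for a \emph{single} adjoined pivot $c$, after first building the second layer structure $\psi_b$, proving it compatible with the reverse of $\psi_a$ (Claim \ref{claim:consistency}), and locating $c$ as the singleton $X_{k-1}\cap Y_{j^*}$ (Claim \ref{claim:core5}); Claim \ref{claim:core3} then verifies that $a$ (resp.\ $b$) remains critical in $A^X$ (resp.\ $A^Y$) and that $\{c\}$ is the last layer --- exactly the property your representatives are missing.

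The entry modification also invalidates your lifting shortcut. The remark that ``a weighted asteroidal triple found in any induced submatrix is automatically one of $A$'' is correct for principal submatrices of $A$, but false once entries are altered, as they must be to force the tail. In the paper this is where most of the labor sits: Claims \ref{claim:constructing}--\ref{claim:constructing2} convert avoiding paths of $A^X$ that use the fabricated column of $c$ into genuine avoiding paths in $A$ (using (L3) for $\psi_b$ and criticality of $b$), and Claim \ref{claim:case1} shows that a proper strongly homogeneous set of $A^X$ either remains strongly homogeneous in $A$ or itself yields a weighted asteroidal triple --- neither transfer is free, and even the concatenated ordering in Case~3 needs Claims \ref{claim:checking2}--\ref{claim:checking3} to show a failed Robinson triple produces a weighted asteroidal triple. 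Your closing sketch (a clash between the internal and lookahead requirements yields a non-Robinson triple, closed into a weighted asteroidal triple by criticality and (L3)) addresses the case where \emph{no} admissible internal order exists, but not the actual failure mode: the induction returning a perfectly valid object for the auxiliary matrix that does not translate back to $A$. Unless you strengthen the induction statement so as to prescribe the positions of designated elements, or carry out the paper's modified-matrix analysis for a whole block of representatives (with the attendant criticality, layer-shape, and lifting claims), the per-layer recursion does not close.
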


As an application we obtain the following result, which directly implies Theorem~\ref{thm:main}.
\begin{corollary} \label{cor:main}
Let $A$ be a symmetric matrix indexed by $V$.
Then one can find 
either a weighted asteroidal triple, or a Robinson ordering of $A$.
\end{corollary}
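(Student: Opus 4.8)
The plan is to derive Corollary \ref{cor:main} from Theorem \ref{thm:1} together with the reduction Lemma \ref{lemreduce} and the existence Lemma \ref{lem:04}, by induction on $|V|$. The base case $|V|\le 2$ is trivial, since any such matrix is automatically Robinson (there is no triple to violate condition (\ref{eq:1-Robinson})), and we simply return the identity ordering. For the inductive step, assume the result holds for all symmetric matrices on fewer than $|V|$ elements.

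First I would invoke Lemma \ref{lem:04} to find, for $A$, either a critical element $a$ or a proper strongly homogeneous set $S$. These two outcomes give the two branches of the argument. In the first branch, $A$ has a critical element $a$, and I apply Theorem \ref{thm:1} directly: it returns one of three objects. If it returns a weighted asteroidal triple (outcome (ii)) or a Robinson ordering (outcome (iii)), we are immediately done, since a Robinson ordering of $A$ compatible with $\psi_a$ is in particular a Robinson ordering of $A$. The only remaining possibility from Theorem \ref{thm:1} is outcome (i), a proper strongly homogeneous set $S$; this feeds into the same reduction used in the second branch below.

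In the second branch, we have a proper strongly homogeneous set $S$ (arising either from Lemma \ref{lem:04} or from outcome (i) of Theorem \ref{thm:1}). Since $S$ is proper, both $A[S]$ and $A/S$ have strictly fewer than $|V|$ elements, so the induction hypothesis applies to each. For each of $A[S]$ and $A/S$, induction yields either a weighted asteroidal triple or a Robinson ordering. If either $A[S]$ or $A/S$ produces a weighted asteroidal triple, then by Lemma \ref{lemreduce}(ii) it is also a weighted asteroidal triple of $A$, and we are done. Otherwise both produce Robinson orderings $\sigma_1$ of $A[S]$ and $\sigma_2$ of $A/S$; then Lemma \ref{lemreduce}(i) splices them into a Robinson ordering $\sigma$ of $A$ by inserting the ordering of $S$ into the slot occupied by the contracted element $s$ in $\sigma_2$. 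This closes the induction.

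The whole argument is essentially a bookkeeping assembly of results already in hand, so I expect no serious obstacle here; the genuine difficulty is entirely absorbed into Theorem \ref{thm:1}, whose proof handles the hard case of a matrix with no proper strongly homogeneous set. The one point requiring a little care is verifying that the recursion terminates and that the three sources of a proper strongly homogeneous set all correctly route into the same reduction step; since every reduction strictly decreases the size of the index set and the critical-element branch either terminates or produces such a reduction, well-foundedness of the induction is immediate. Finally, Theorem \ref{thm:main} follows at once: if a weighted asteroidal triple exists then (by Definition \ref{defAT} and the discussion preceding it) $A$ is not Robinsonian, and conversely Corollary \ref{cor:main} guarantees that in the absence of such a triple a Robinson ordering is produced, so $A$ is Robinsonian.
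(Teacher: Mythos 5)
Your proposal is correct and follows essentially the same argument as the paper: induction on $|V|$, using Lemma~\ref{lem:04} to obtain a critical element or a proper strongly homogeneous set, invoking Theorem~\ref{thm:1} in the critical-element case, and routing any proper strongly homogeneous set (from either source) into the reduction of Lemma~\ref{lemreduce} applied to $A[S]$ and $A/S$. The only difference is cosmetic --- you spell out the trivial base case and the termination check, which the paper leaves implicit.
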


\begin{proof}
The proof is by induction on the size $|V|$ of $A$.  In view of  Lemma \ref{lem:04}, one can find   either a critical element $a$, or a proper strongly homogeneous set  for $A$. 
If we have  a critical element $a$,  then we can apply Theorem \ref{thm:1} to $(A,a)$ and find either a proper strongly homogeneous set, or a weighted asteroidal triple, or a Robinson ordering compatible with $\psi_a$. In the latter two cases we obtain the desired conclusion.
So we now only need to consider the case when a proper strongly homogeneous set $S$ has been obtained in one of the above steps.
 Then we consider the two matrices $A[S]$ and $A\slash S$ which have smaller size than $A$ since $S$ is proper. By the induction assumption applied to both $A[S]$ and $A\slash S$, either we find a weighted asteroidal triple in $A[S]$ or in $A\slash S$, in which case we alse have an asteroidal triple in $A$ (by Lemma \ref{lemreduce} (ii)), or we find  Robinson orderings of $A[S]$ and $A\slash S$, which we can then combine to get a Robinson ordering of $A$ (by Lemma~\ref{lemreduce}~(i)). 
\end{proof}

\subsection{Proof of Theorem \ref{thm:1}}\label{sec:proof}

This section is devoted to the proof of Theorem \ref{thm:1}. Let $A$ be a symmetric matrix indexed by $V$ and let $a\in V$ be a critical element for $A$. The proof is by induction on the size $|V|$ of  $A$. Moreover, it is algorithmic. It will go through a number of steps where,  either we stop and return a proper strongly homogeneous set or a weighted asteroidal triple, or we end up with constructing a Robinson ordering compatible with the similarity layer structure $\psi_a$ rooted at $a$.

\medskip
We start with computing the similarity layer structure $\psi_a=(X_0,\cdots, X_k)$ rooted at $a$. If $X_0\cup \cdots \cup X_k\neq V$ or if property (L1*) or (L2*) does not hold, then we can find a weighted asteroidal triple (by Lemma \ref{lem:02} and Lemma \ref{lem:03} (i)) and we are done.
Hence we now assume that $X_0\cup\cdots \cup X_k=V$ and that (L1*) and (L2*) hold for $\psi_a$. If $|X_k|\ge 2$ then $X_k$ is proper strongly homogeneous
(by Lemma \ref{lem:03} (ii)) and we are done.
Hence we now assume that $|X_k|=1$, say $X_k=\{b\}$, and, in view of property (L4), we know that $b$ too is critical for $A$.

We can repeat the above reasoning to the similarity layer structure $\psi_b=(Y_0,\cdots, Y_{\ell})$ rooted at $b$.
Hence we may now also assume that $Y_0\cup\cdots \cup Y_{\ell}=V$, (L1*) and (L2*) hold for $\psi_b$, and $|Y_{\ell}|=1$. 

Next we check if the  similarity layer structure$\psi_a$ is compatible with the reverse of the similarity layer structure $\psi_b$, which will imply  in particular that  the last layer of $\psi_b$ is $Y_{\ell}=\{a\}$.

\begin{claim}
\label{claim:consistency}
If there exist two distinct elements $x, y\in V$ with 
$x\prec_{\psi_a} y$ and $x\prec_{\psi_b} y$, then 
one can find a weighted asteroidal triple of $A$.
\end{claim}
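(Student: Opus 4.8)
The plan is to show that the triple $\{a,b,y\}$ is a weighted asteroidal triple, where $y$ is the ``later'' of the two given elements. The intuition is that $x\prec_{\psi_a}y$ places $x$ strictly before $y$ when moving outward from $a$, while $x\prec_{\psi_b}y$ places $x$ strictly before $y$ when moving outward from $b$; since $a$ and $b$ are the two extremes (the roots of $\psi_a$ and $\psi_b$), the element $y$ cannot lie beyond $x$ on both sides at once, and this clash is exactly what produces the obstruction. Accordingly I would verify the three path conditions of Definition \ref{defAT} for $\{a,b,y\}$.

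First I would record the layer data: $x\prec_{\psi_a}y$ means $x\in X_i$, $y\in X_j$ with $i<j$, and $x\prec_{\psi_b}y$ means $x\in Y_p$, $y\in Y_q$ with $p<q$. Applying property (L3) of Lemma \ref{lemlayerbasic} to $\psi_a$ (with $x\in X_i$ and $y\notin X_0\cup\cdots\cup X_i$) gives $a\overset{y}{\sim}x$, and applying the same property to $\psi_b$ (with $x\in Y_p$ and $y\notin Y_0\cup\cdots\cup Y_p$) gives $b\overset{y}{\sim}x$. Reversing the second path and splicing it to the first at $x$ yields a walk from $a$ to $b$ avoiding $y$; here one uses that the relation ``$(u,y,v)$ is not Robinson'' is symmetric in $u$ and $v$, so reversal preserves the avoiding property, and that a walk avoiding $y$ contains a simple path avoiding $y$. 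This establishes $a\overset{y}{\sim}b$, the first of the three required legs (a path from $a$ to $b$ missing $y$).

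The remaining two legs are immediate from criticality: since $a$ is critical, Definition \ref{defcritical} yields $b\overset{a}{\sim}y$, and since $b$ is critical it yields $a\overset{b}{\sim}y$. Together with $a\overset{y}{\sim}b$ these are exactly the three conditions of Definition \ref{defAT}. The only point needing care — and the single place the argument could slip — is distinctness of the three vertices, which is also what licenses the two appeals to criticality. Here $y\neq a$ because $X_0=\{a\}$ forces $y$ out of the first $\psi_a$-layer, $y\neq b$ because $Y_0=\{b\}$ forces $y$ out of the first $\psi_b$-layer, and $a\neq b$ since $b$ lies in the last layer $X_k$ with $k\geq 1$. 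Thus $b,y\in V\setminus\{a\}$ and $a,y\in V\setminus\{b\}$, so both criticality applications are valid; the degenerate possibility $x=a$ is harmless, since then $a\overset{y}{\sim}x$ is trivial and one simply invokes $b\overset{y}{\sim}a$ directly.
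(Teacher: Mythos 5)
Your proposal is correct and follows essentially the same route as the paper's own proof: both exhibit $\{a,b,y\}$ as the weighted asteroidal triple, obtaining $a\overset{y}{\sim}b$ by splicing the (L3) paths $a\overset{y}{\sim}x$ and $b\overset{y}{\sim}x$ at $x$, and getting the remaining two legs $b\overset{a}{\sim}y$ and $a\overset{b}{\sim}y$ from the criticality of $a$ and of $b$ (both established before the claim). The only cosmetic difference is in the degenerate case $x=a$, where the paper derives $a\overset{b}{\sim}y$ from (L3) applied to $\psi_a$ instead of invoking the criticality of $b$; your distinctness checks ($y\neq a$ from $X_0=\{a\}$, $y\neq b$ from $Y_0=\{b\}$) match what the paper uses implicitly.
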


\begin{proof}
Assume $a\preceq_{\psi_a} x\prec_{\psi_a} y \preceq_{\psi_a} b$ and $b\preceq_{\psi_b}  x\prec_{\psi_b} y$, so $y\ne b$. Assume first $x=a$. 
Then, using (L3) applied to $\psi_a$ and $\psi_b$ we get, respectively,  $a\overset{b}{\sim} y$ and $b\overset{y}{\sim} a$. As $a$ is critical we also have 
$b\overset{a}{\sim} y$ and thus $\{a,b,y\}$ is a weighted asteroidal triple.

Assume now $x\ne a$. Using again (L3) applied to $\psi_a$ and $\psi_b$, we get $a\overset{y}{\sim} x$ and 
$b\overset{y}{\sim} x$, implying $a\overset{y}{\sim} b$. Moreover, $b\overset{a}{\sim} y$ since $a$ is critical,  and $a\overset{b}{\sim} y$ since 
$b$ is critical. Hence $\{a,b,y\}$ is again a weighted asteroidal triple.
\end{proof}

To recap, from now on we will assume that $\psi_a=(X_0,\cdots, X_{k})$ is compatible with the reverse of  $\psi_b=(Y_0,\cdots, Y_{\ell})$, i.e., there do not exist  $x, y\in V$ with $x\prec_{\psi_a} y$ and $x\prec_{\psi_b} y$, and therefore $X_{0} = Y_{\ell}=\{a\}$ and $X_{k} = Y_{0}=\{b\}$.
We show the shape of the similarity layer partitions $\psi_a$ and $\psi_b$ in Figure \ref{fig:proof}, where the similarity layers $X_i$ and $Y_j$ are indicated by ellipses and rectangles, respectively. We also indicate the set $X_{k-1}\cap Y_{j^*}$, where $j^*$ is the largest integer $j\ge 1$ for which $X_{k-1}\cap Y_j\neq \emptyset$, which will play a crucial role in the rest of the proof.

\begin{claim}\label{claim:core5}
Let $j^*$ be the largest integer $j\ge 1$ such that $X_{k-1}\cap Y_{j}\ne \emptyset$.
If $|X_{k-1}\cap Y_{j^*}|\ge 2$ then $X_{k-1}\cap Y_{j^*}$ is proper strongly homogeneous for $A$.
\end{claim}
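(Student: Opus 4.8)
The plan is to set $S := X_{k-1}\cap Y_{j^*}$ and to check the defining conditions of a proper strongly homogeneous set directly against Definition~\ref{defhomoS}. Properness is the easy half: the hypothesis supplies $|S|\ge 2$, and since $S\subseteq X_{k-1}$ while $b\in X_k$ lies in a different (hence disjoint) layer, we have $b\notin S$ and so $|S|\le |V|-1$.

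The real content is the strong-homogeneity condition $A_{xy}=A_{xz}\le A_{yz}$ for every $x\in V\setminus S$ and all $y,z\in S$. The key observation I would exploit is that both $y$ and $z$ sit simultaneously in the layer $X_{k-1}$ of $\psi_a$ and in the layer $Y_{j^*}$ of $\psi_b$; hence the whole inequality comes out of a single application of property (L1*), which by our standing assumptions holds for \emph{both} $\psi_a$ and $\psi_b$, the moment I can exhibit $x$ as lying in a strictly earlier layer than $y,z$ in at least one of the two structures. So the proof reduces to a case analysis on the location of $x$, and I would split $V\setminus S$ according to which layer $X_i$ of $\psi_a$ contains it. If $i<k-1$, then $y,z\in X_{k-1}$ form a strictly later $\psi_a$-layer and (L1*) for $\psi_a$ finishes it. If $i=k$, i.e.\ $x=b$, then $x$ is the root of $\psi_b$ and lies in $Y_0$; since $j^*\ge 1$, (L1*) for $\psi_b$ applies. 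The remaining case $i=k-1$, i.e.\ $x\in X_{k-1}\setminus Y_{j^*}$, is the one where the definition of $j^*$ enters.

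The main obstacle is precisely this last case. Here I would argue from the maximality of $j^*$: since $X_{k-1}\cap Y_j=\emptyset$ for every $j>j^*$, each element of $X_{k-1}$ lies in some $Y_j$ with $j\le j^*$; combining this with $x\notin Y_{j^*}$ forces $x\in Y_{j'}$ for some $j'<j^*$, and then (L1*) for $\psi_b$ again yields $A_{xy}=A_{xz}\le A_{yz}$. Collecting the three cases proves that $S$ is strongly homogeneous. I expect (L2*) will not be needed, and that beyond the already-recorded normalizations $X_k=Y_0=\{b\}$ (which place $b$ as the sole element of the root layer $Y_0$ of $\psi_b$) the reverse-compatibility of the two structures plays no further role in this particular claim—it will matter more when assembling the final Robinson ordering afterwards.
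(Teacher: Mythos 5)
Your proof is correct and follows essentially the same route as the paper: a case analysis on which layer of $\psi_a$ contains $x$, applying (L1*) to $\psi_a$ when $x\in X_i$ with $i\le k-2$, and applying (L1*) to $\psi_b$ when $x\in X_{k-1}\cup X_k$, using the maximality of $j^*$ (together with $Y_0=\{b\}$) to place such an $x$ in some $Y_{j'}$ with $j'<j^*$. The paper merely merges your two cases $x=b$ and $x\in X_{k-1}\setminus Y_{j^*}$ into a single case and leaves properness implicit, which you correctly supply via $b\in X_k$, hence $b\notin S$.
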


\ignore{
 We also  remark the following:
\begin{equation}
\label{eq:core5}
\begin{split}
&\text{For each $X_i\in \psi_a$, let $j'$ be the largest integer $j$ such that $X_i\cap Y_j\neq \emptyset$.} \\
&\text{Then $X_i\cap Y_{j'}$ is strongly homogeneous.}
\end{split}
\end{equation}
}
\begin{proof}
For this pick $x\not \in X_{k-1}\cap Y_{j^*}$ and distinct elements $y, z\in X_{k-1}\cap Y_{j^*}$.
If $x$ lies in $X_{k-1}\cup X_k$ then $x\in Y_j$ for some $0\le j<j^*$ and thus $A_{xy}=A_{xz}\le A_{yz}$ follows from property (L1*) applied to $\psi_b$.
Otherwise $x$ lies in some $X_i$ with $i\le k-2$ and $A_{xy}=A_{xz}\le A_{yz}$ follows from property (L1*) applied to $\psi_a$.
\end{proof}

From now on we  assume that   $|X_{k-1}\cap Y_{j^*}|=1$ and we  set $X_{k-1}\cap Y_{j^*}=\{c\}$.
Thus we may partition the set $V$ as
$$V= \underbrace{X_0\cup \cdots \cup X_{k-2}}_{=: X} \cup \{c\}\cup 
\underbrace{ Y_{j^*-1}\cup \cdots \cup Y_0}_{=: Y} = X\cup\{c\} \cup Y.$$

\begin{figure}
\begin{center}
\resizebox{\textwidth}{!}{
\includegraphics[scale=1]{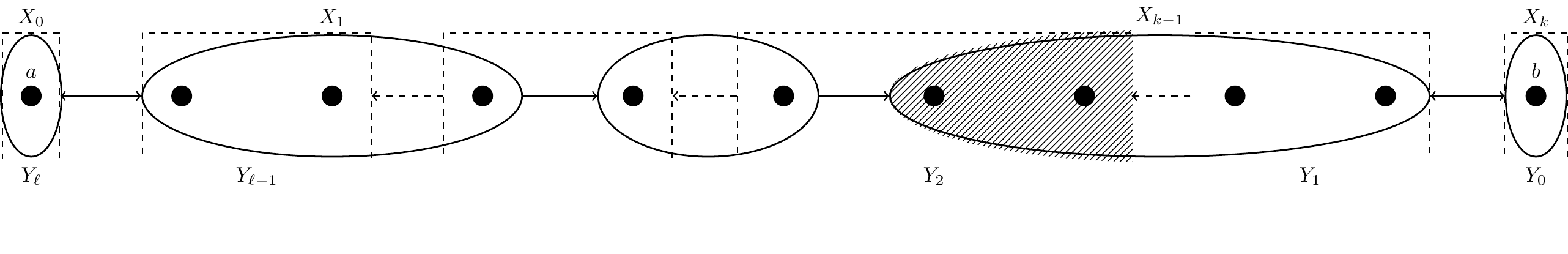}
}
\end{center}
\caption{The siimilarity layer structures $\psi_a$, $\psi_b$ and the set $X_{k-1} \cap Y_{j^*}$ for $j^*=2$.}
\label{fig:proof}
\end{figure}

For further use we record the following consequence of  (L3) applied to $\psi_a$ and $\psi_b$:
\begin{equation}
\label{eq:core7}
\text{For any $u, v\in X$ (resp.,~$u, v\in Y$),\  $u\overset{c}{\sim} v$ in $A$.}
\end{equation}

At this step we now need to work with two new matrices $A^X$ and $A^Y$ that are indexed, respectively, by $X\cup\{c\}$ and $Y\cup \{c\}$ and constructed by modifying the entries of $A$ in the following way.
Let $M$ be a positive integer, chosen sufficiently large, so that   
\begin{equation}\label{eqM}
M> 2\max\{|A_{uv}|: u,v\in V\}.
\end{equation}
Let $A^X$ be the symmetric matrix indexed by $X\cup\{c\}$,  obtained from $A[X]$ by adjoining a new column/row indexed by $c$  with entries:
\begin{equation}\label{eq:AX}
A^X_{cv}=-M-j+\frac{A_{cv}}{M} \quad \text{ for }v\in X \text{ with }  v\in Y_j.
\end{equation}
Similarly, let $A^Y$ be the symmetric matrix indexed by $Y\cup\{c\}$, obtained from $A[Y]$ by adding a new column/row indexed by $c$   with entries: 
\begin{equation}\label{eq:AY}
A^Y_{cv}=-M-i+\frac{A_{cv}}{M} \quad \text{ for }v\in Y \text{ with } v\in X_i.
\end{equation}
Note that $j^*\le j\le \ell$ in (\ref{eq:AX}) and $  k-1\le i\le k$ in (\ref{eq:AY}).

\begin{claim}\label{claim:core3}
\begin{description}
\item[(i)] 
The element $a$ is critical in $A^X$ and the similarity layer structure of $A^X$ rooted at $a$, denoted as $\psi^X_a$,  is equal to 
$\psi^X_a=(\{a\}, X_1,\cdots, X_{k-2},\{c\})$.
\item[(ii)] The element $b$ is critical in $A^Y$ and the similarity layer structure of $A^Y$ rooted at $b$, denoted as $\psi^Y_b$,  is equal to 
$\psi^Y_b=(\{b\}, Y_1,\cdots, Y_{j^*-1},\{c\})$.
\end{description}
\end{claim}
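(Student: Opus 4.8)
The plan is to prove (i) in full and to obtain (ii) by the symmetric argument, interchanging the roles of $a$ and $b$, of $X$ and $Y$, and of $\psi_a$ and $\psi_b$ (using that $b$ too is critical for $A$). Part (i) splits into two independent assertions: that $a$ is critical in $A^X$, and that the similarity layers of $A^X$ rooted at $a$ are exactly $(\{a\},X_1,\dots,X_{k-2},\{c\})$. Throughout I would use the following elementary consequence of the choice of $M$ in (\ref{eqM}): every entry $A^X_{cv}$ (with $v\in X$) lies strictly below every unmodified entry of $A$, and for $v\in Y_j$, $v'\in Y_{j'}$ in $X$ with $j<j'$ one has $A^X_{cv}>A^X_{cv'}$, since the integer gap $j'-j\ge 1$ dominates the fluctuation $|A_{cv}-A_{cv'}|/M<1$.

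For criticality I would use that $Y_\ell=\{a\}$ (established above), so that every $v\in X\setminus\{a\}$ lies in some layer $Y_j$ with $j<\ell$; by the monotonicity just noted this gives $A^X_{cv}>A^X_{ca}$. As $A^X_{va}=A_{va}$ is unmodified and thus exceeds the very negative $A^X_{ca}$, we obtain $\min\{A^X_{ca},A^X_{va}\}=A^X_{ca}<A^X_{cv}$, so that $(c,v)$ avoids $a$ and $c\overset{a}{\sim}v$ for every $v\in X\setminus\{a\}$. Consequently any two distinct elements of $(X\cup\{c\})\setminus\{a\}$ are joined by a path through $c$ avoiding $a$, which shows that $a$ is critical in $A^X$.

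For the layer identity I would induct on the layer index, proving that the $i$-th layer $X^X_i$ of $\psi^X_a$ equals $X_i$ for $0\le i\le k-2$. Suppose $U_{i-1}:=X_0\cup\cdots\cup X_{i-1}$ is the common prefix. First $c\notin X^X_i$, since for any $x\in U_{i-1}$ and $z\in X_i$ (nonempty) the unmodified entry $A^X_{xz}=A_{xz}$ exceeds $A^X_{xc}$. For $y\in X\setminus U_{i-1}$ the comparison against $c$ is automatic, so membership of $y$ in $X^X_i$ reduces to $A_{xy}\ge A_{xz}$ for all $x\in U_{i-1}$ and all $z\in X\setminus U_{i-1}$; as this quantifies over fewer $z$ than the defining condition of $X_i$, which ranges over all of $V\setminus U_{i-1}$, we get $X_i\subseteq X^X_i$ at once. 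For the reverse inclusion, the extra indices in $V\setminus U_{i-1}$ are exactly those in $X_{k-1}\cup X_k$, and property (L2) of Lemma~\ref{lemlayerbasic} forbids them as witnesses: for $x\in X_p$ ($p\le i-1$), $y\in X_m$ ($i\le m\le k-2$) and $z\in X_q$ ($q\ge k-1$) one has $p<m<q$, whence $A_{xz}\le A_{xy}$. Thus any witness to $y\notin X_i$ already lies in $X$, forcing $y\notin X^X_i$, and so $X^X_i=X_i$. Once $X_0,\dots,X_{k-2}$ are placed, only $c$ remains, and it forms the final layer $\{c\}$.

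The step I expect to be the main obstacle is this last verification: one must ensure that deleting $X_{k-1}\setminus\{c\}$ and $X_k$ from $V$ and sending $c$ to the end neither merges nor reorders the earlier layers, which is precisely the content of the monotonicity property (L2). Part (ii) then follows by the substitution described at the outset: now $b\in X_k$ is the last layer of $\psi_a$ and every $v\in Y\setminus\{b\}$ lies in $X_{k-1}$, giving $A^Y_{cv}>A^Y_{cb}$, and the two arguments go through unchanged with $(a,\psi_a,X,k)$ replaced by $(b,\psi_b,Y,j^*)$.
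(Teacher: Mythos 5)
Your proof is correct and takes essentially the same route as the paper: the same key inequality $A^X_{vc}-A^X_{ac}= \ell-j+\frac{A_{vc}-A_{ac}}{M}\ge 1+\frac{A_{vc}-A_{ac}}{M}>0$ (using $Y_\ell=\{a\}$) shows each edge $(v,c)$ avoids $a$, paths through $c$ then give criticality, and (ii) follows symmetrically. Your explicit layer-by-layer induction using (L2) to rule out witnesses in $X_{k-1}\cup X_k$ is a careful elaboration of what the paper compresses into the single remark that $A^X_{vc}<A^X_{xy}$ for all $v,x,y\in X$, so it fills in a glossed-over detail rather than taking a different approach.
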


\ignore{
\begin{equation}
\label{eq:core3}
\text{$a$ (resp.~$b$) is critical in $A^X$ (resp.~$A^Y$).}
\end{equation}
}

\begin{proof}
(i) We show that $a$ is critical for $A^X$. For  any $v\in X\setminus \{a\}$, note first using definition (\ref{eq:AX}) that $A^X_{vc}>A^X_{ac}$ holds, since
\[
A^X_{vc}-A^X_{ac}=-j+\ell+\frac{A_{vc}-A_{ac}}{M} \geq 1+\frac{A_{vc}-A_{ac}}{M}>0,
\]
where the first two relations  follow from  $v\in Y_j$ and $a\in Y_{\ell}$ with $j\le \ell-1$ and the third inequality follows from (\ref{eqM}). 
 Hence, the path $(v,c)$ avoids $a$ in $A^X$. 
Now, for $x\ne y\in X\setminus \{a\}$, the path $(x,c,y)$ avoids $a$ in $A^X$, which shows that $a$ is critical for $A^X$.
Moreover, as $A^X_{vc}<A^X_{xy}$ for all $v,x,y\in X$, it follows that the similarity layer structure of $A^X$ rooted at $a$ has indeed the desired form.

The proof of (ii) is analogous.
 \end{proof}


Since the size of  both matrices  $A^X$ and $A^Y$ is smaller than that of $A$, we can apply the induction assumption to $(A^X,a)$ and $(A^Y,b)$, which gives the following three cases:\\
{\bf Case 1:}  we find a proper strongly homogeneous  set in $A^X$ or in $A^Y$;\\
{\bf Case 2:}  we find  a weighted asteroidal triple in  $A^X$ or  in $A^Y$; \\
{\bf Case 3:} or we  find   Robinson orderings  of $A^X$ and  $A^Y$ that are compatible with $\psi^X_a$ and $\psi^Y_b$, respectively.\\
We now deal with each of these three cases separately.

\medskip
\noindent
{\bf Case 1:} We assume  that we have found a proper strongly homogeneous  set $S$ in $A^X$. (The case of  $A^Y$ is similar and thus omitted.)
As we now show,  either we can claim that $S$ is strongly homogeneous in $A$, or we find a weighted asteroidal triple in $A$.

\begin{claim}
\label{claim:case1}
Let $S$ be a proper strongly homogeneous  set in $A^X$.
Then, either $S$ is strongly homogeneous in $A$, or 
 there exist $x\ne x'\in S$ such that 
$\{x,x',c\}$ is  a weighted asteroidal triple for $A$.

\end{claim}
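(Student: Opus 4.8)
The plan is to show that almost every defining inequality of strong homogeneity of $S$ in $A$ is inherited automatically, so that a single comparison --- that of $A_{cy}$ with the internal value $A_{yz}$ --- is all that is left to check, and that its failure directly exhibits a weighted asteroidal triple. The hard part will be the bookkeeping around the inflated $c$-row: the large shift $-M-j$ in (\ref{eq:AX}) is engineered so that equality of the modified entries $A^X_{cy}=A^X_{cz}$ encodes at once the original equality $A_{cy}=A_{cz}$ and the coincidence of the $\psi_b$-layers of $y$ and $z$, while deliberately erasing any comparison between $A_{cy}$ and the internal values $A_{yz}$; reinstating that single lost comparison by hand is what produces the dichotomy.

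First I would argue that $c\notin S$. As $S$ is proper in $A^X$, there is some $u\in(X\cup\{c\})\setminus S$. If $c$ were in $S$, then $u\in X$, and, choosing any $y\in S\cap X$ (nonempty since $|S|\ge 2$), strong homogeneity of $A^X$ would force $A^X_{uc}=A^X_{uy}$. This is impossible once $M$ is large: by (\ref{eq:AX}) and (\ref{eqM}) the entry $A^X_{uc}$ is smaller than $-M/2$, while $A^X_{uy}=A_{uy}$ exceeds $-M/2$. Hence $S\subseteq X$.

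Next I would extract the layer information from the outside element $c$. Strong homogeneity of $A^X$ applied with $x=c$ gives $A^X_{cy}=A^X_{cz}$ for all $y,z\in S$. By the form of (\ref{eq:AX}) and the choice of $M$ in (\ref{eqM}), this equality can hold only if $y$ and $z$ lie in the same layer $Y_{j_0}$ of $\psi_b$ and, simultaneously, $A_{cy}=A_{cz}$. Moreover $j_0\ge j^*$, since every element of $X$ lies in a layer $Y_j$ with $j\ge j^*$ by the compatibility of $\psi_a$ with the reverse of $\psi_b$.

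It then remains only to compare $A_{cy}$ with $A_{yz}$, and I would split into two cases. If $A_{cy}=A_{cz}\le A_{yz}$ for all $y,z\in S$, then $S$ is strongly homogeneous in $A$: for $u\in X\setminus S$ the relation $A_{uy}=A_{uz}\le A_{yz}$ is copied verbatim from $A^X$ (these entries agree with those of $A$); for $u=c$ it is the case hypothesis together with $A_{cy}=A_{cz}$; and for $u\in Y$, where $u\in Y_j$ with $j\le j^*-1<j_0$, it follows from property (L1*) applied to $\psi_b$. Otherwise there are $y\ne z\in S$ with $A_{cy}=A_{cz}>A_{yz}$; then $y\overset{c}{\sim}z$ holds by (\ref{eq:core7}) (as $y,z\in X$), the edge $(y,c)$ avoids $z$ since $A_{yc}>\min\{A_{yz},A_{cz}\}=A_{yz}$, and the edge $(z,c)$ avoids $y$ since $A_{zc}>\min\{A_{yz},A_{cy}\}=A_{yz}$, so $\{y,z,c\}$ is a weighted asteroidal triple of $A$ with $x=y$ and $x'=z$, completing the proof.
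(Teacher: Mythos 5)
Your proof is correct and follows essentially the same route as the paper's: you rule out $c\in S$ by the magnitude of the shifted entries, deduce from $A^X_{cy}=A^X_{cz}$ and the choice of $M$ that $S$ lies in a single layer $Y_{j_0}$ with $j_0\ge j^*$ and $A_{cy}=A_{cz}$, verify the remaining strong-homogeneity inequalities from $A^X$ and from (L1*) applied to $\psi_b$, and certify the triple $\{y,z,c\}$ exactly as the paper does (your appeal to (\ref{eq:core7}) matches the paper's use of (L3) for $\psi_a$, from which (\ref{eq:core7}) is derived). The only cosmetic difference is that you omit the paper's intermediate observation that failure can occur only when $j_0=j^*$, which is indeed not needed for the construction of the weighted asteroidal triple.
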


\begin{proof}
Let $S\subseteq X\cup\{c\}$ be a proper strongly homogeneous set in $A^X$.
We first show  $c\not\in S$. For this, suppose for contradiction that  $c\in S$.
Since $S$ is proper, we can take elements $x\in S\setminus \{c\}$ and $v\in X\setminus S$.
Since $S$ is homogeneous in $A^X$, we have $A^X_{vc}=A^X_{vx}$, which gives 
$-M-j+\frac{A_{vc}}{M}=A_{vx}$ if $v\in Y_j$.  This however contradicts the choice of $M$ in (\ref{eqM}).
Therefore, $c\notin S$.

Take any $x, x'\in S$. As $c\notin S$ and $S$ is homogeneous in $A^X$, we have 
$A_{cx}^X=A_{cx'}^X$, which gives 
 $-M-j+\frac{A_{cx}}{M}=-M-j'+\frac{A_{cx'}}{M}$, where $x\in Y_j$ and $x'\in Y_{j'}$.
Using again (\ref{eqM}) we derive that  $j=j'$ and 
\begin{equation}
\label{eq:case1-2}
A_{cx}=A_{cx'} \quad \text{ for all } x, x'\in S.
\end{equation}
Therefore, $S$ is contained in some layer $Y_j$ of $\psi_b$ for some  $j\ge j^*$. Moreover,
\begin{equation}
\label{eq:case1-2b}
\text{ if } j>j^*\  \ \text{ then } \ A_{cx}=A_{cx'} \le A_{xx'} \quad \text{ for all } x, x'\in S,
\end{equation}
which follows from (L1*) applied to $\psi_b$ (since $c\in Y_{j^*}$ and $x,x'\in Y_j$ with $j>j^*$). 
Next we  claim that
\begin{equation}
\label{eq:case1-3}
A_{vx}=A_{vx'}\leq A_{xx'} \quad \text{ for all } x, x'\in S, \ v\in V\setminus (S\cup \{c\}).
\end{equation}
If $v\in X$, then $A_{vx}=A_{vx'}\leq A_{xx'}$ follows from $A_{vx}^X=A_{vx'}^X\leq A_{xx'}^X$ 
since $A$ and $A^X$ coincide on the triple $\{v, x, x'\}\subseteq X$. 
If $v\notin X$, then $v\in Y$ and thus $v\in Y_{h}$ for some $h<j$, in which case 
$A_{vx}=A_{vx'}\leq A_{xx'}$ follows from (L1*) applied to $\psi_b$ since $x,x'\in Y_j$.

Hence, in view of  relations (\ref{eq:case1-2b}) and (\ref{eq:case1-3}), if  the set $S$ is not strongly homogeneous in $A$, then necessarily
$j=j^*$ and there exist $x\ne x'\in S$ such that $A_{cx}=A_{cx'}>A_{xx'}$.
Then $c\overset{x}{\sim} x'$ and $c\overset{x'}{\sim} x$. On the other hand, by (L3) applied to $\psi_a$, we have 
$x\overset{c}{\sim} a\overset{c}{\sim} x'$ (since $c\in X_{k-1}$, $x\in X_i$, $x'\in X_{i'}$ with $i,i'<k-1$).
Thus $\{x, x', c\}$ forms a weighted asteroidal triple in $A$.
\end{proof}

\medskip
\noindent
{\bf Case 2:} We now assume that we have found a weighted asteroidal triple in $A^X$. (The case of having a weighted asteroidal triple in $A^Y$ is similar and thus omitted.) Our goal is to construct from it a weighted asteroidal triple in $A$. 
For this we  use the following claim.

\begin{claim}
\label{claim:constructing}
Given $u,v\in X$ and a path from $u$ to $c$ avoiding $v$ in $A^X$, the following holds:
\begin{description}
\item[(i)] If $v\neq_{\psi_b} c$, one can find a path from $u$ to $b$ avoiding $v$ in $A$;
\item[(ii)] If $v=_{\psi_b} c$, one can find a path from $u$ to $c$ avoiding $v$ in $A$;
\item[(iii)] If $v\prec_{\psi_b} u$, one can find a weighted asteroidal triple in $A$.
\end{description}
\end{claim}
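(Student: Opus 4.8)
The plan is to track how the single ``long'' edge of the given $A^X$-path — the one incident to $c$ — behaves once we pass back to $A$, since every other edge lies entirely inside $X$ and therefore carries exactly the same meaning in $A^X$ and in $A$. Write the path as $P=(u=w_0,w_1,\dots,w_{m-1},w_m=c)$; its vertices are distinct and all lie in $X\cup\{c\}$, so $w_0,\dots,w_{m-1}\in X$ and only the final edge $(w_{m-1},c)$ involves the modified entries. Because $A$ and $A^X$ agree on $X$, the truncated path $(w_0,\dots,w_{m-1})$ already certifies $u\overset{v}{\sim}w_{m-1}$ in $A$. Everything thus reduces to understanding the last step.

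First I would extract the layer information encoded in that last step. Since $v\in X$ we have $A^X_{w_{m-1}v}=A_{w_{m-1}v}$, a bounded quantity, whereas both $A^X_{w_{m-1}c}$ and $A^X_{cv}$ are of order $-M$ by (\ref{eq:AX}); hence the choice of $M$ in (\ref{eqM}) forces $\min\{A^X_{w_{m-1}v},A^X_{cv}\}=A^X_{cv}$. Writing $w_{m-1}\in Y_{j_{m-1}}$ and $v\in Y_{j_v}$ and expanding (\ref{eq:AX}), the fact that $(w_{m-1},c)$ avoids $v$ in $A^X$ simplifies, after cancelling the common term $-M$, to $j_v-j_{m-1}>(A_{cv}-A_{cw_{m-1}})/M$; as the right-hand side lies strictly between $-1$ and $1$ by (\ref{eqM}) while the left-hand side is an integer, this holds exactly when either $j_{m-1}<j_v$, or $j_{m-1}=j_v$ together with $A_{cw_{m-1}}>A_{cv}$. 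This dichotomy is the engine driving all three parts.

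For part (i), where $v\neq_{\psi_b}c$, i.e.\ $j_v>j^*$, I would rule out the branch $j_{m-1}=j_v$: it would place $w_{m-1},v$ in a common layer strictly after that of $c$, so (L1*) applied to $\psi_b$ would give $A_{cw_{m-1}}=A_{cv}$, contradicting the strict inequality. Hence $j_{m-1}<j_v$, and (L3) applied to $\psi_b$ yields $b\overset{v}{\sim}w_{m-1}$; concatenating with $u\overset{v}{\sim}w_{m-1}$ gives the desired $u\overset{v}{\sim}b$ in $A$. For part (ii), where $v=_{\psi_b}c$, i.e.\ $j_v=j^*$, the fact that $w_{m-1}\in X$ forces $j_{m-1}\ge j^*$ by compatibility of $\psi_a$ with the reverse of $\psi_b$, so only the equality branch survives: $j_{m-1}=j^*$ and $A_{cw_{m-1}}>A_{cv}$. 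Then $A_{w_{m-1}c}>A_{cv}\ge\min\{A_{w_{m-1}v},A_{cv}\}$ shows that the single edge $(w_{m-1},c)$ already avoids $v$ in $A$, so concatenation yields $u\overset{v}{\sim}c$ in $A$.

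Finally, for part (iii) I would assemble a weighted asteroidal triple from the ingredients above together with the connectivity furnished by the layer structures. Note first that $u\overset{b}{\sim}v$ always holds for $u,v\in X$, by routing through $a$ via (L3) for $\psi_a$ (which gives $a\overset{b}{\sim}u$ and $a\overset{b}{\sim}v$), and that $v\overset{u}{\sim}b$ holds by (L3) for $\psi_b$ since $j_u>j_v$ places $u$ outside $Y_0\cup\cdots\cup Y_{j_v}$. If $j_v>j^*$ I would invoke part (i) to supply the missing relation $u\overset{v}{\sim}b$ and conclude that $\{u,v,b\}$ is a weighted asteroidal triple. If instead $j_v=j^*$ I would invoke part (ii) to obtain $u\overset{v}{\sim}c$ and target $\{u,v,c\}$: here $u\overset{c}{\sim}v$ is (\ref{eq:core7}), while $v\overset{u}{\sim}c$ follows by concatenating $v\overset{u}{\sim}b$ with $b\overset{u}{\sim}c$, both obtained from (L3) for $\psi_b$ using $j_u>j_v=j^*$. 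I expect the main obstacle to be precisely this bookkeeping in part (iii): keeping straight which element must be avoided along each sub-path and verifying that the two regimes $j_v>j^*$ and $j_v=j^*$ exhaust case (iii) and each close up into a genuine weighted asteroidal triple.
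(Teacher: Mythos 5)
Your proof is correct and takes essentially the same route as the paper: the same dichotomy on the last edge into $c$ extracted from the choice of $M$ (the paper's relation (\ref{eq:claim:constructing})), the same appeals to (L3) and (\ref{eq:core7}), and the same two-regime split $v=_{\psi_b}c$ versus $v\neq_{\psi_b}c$ in part (iii). The only harmless deviations are that in (i) you eliminate the equal-layer branch outright via (L1*) for $\psi_b$, where the paper instead routes $u'\overset{v}{\sim}c\overset{v}{\sim}b$ through $c$, and in (iii) you derive $u\overset{b}{\sim}v$ by passing through $a$ via (L3) for $\psi_a$ rather than invoking the criticality of $b$.
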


\begin{proof}
Say, $P=(u,\cdots, u', c)$ is a path from $u$ to $c$ avoiding $v$ in $A^X$, with $(u',c)$ as its  last edge. Note that $P'=(u,\cdots, u')$ is  a path  avoiding $v$ in $A$ and thus $u\overset{v}{\sim} u'$ not only in $A^X$ but also in $A$. 
We claim:
\begin{equation}
\label{eq:claim:constructing}
\text{either $u'\prec_{\psi_b} v$,\  or $u'=_{\psi_b} v$ and $A_{u'c}>A_{vc}$.}
\end{equation}
Indeed, since $(u',c)$ avoids $v$ in $A^X$, we have $A^X_{u'c}>\min \{A^X_{u'v}, A^X_{vc}\}=A^X_{vc}$,
where the last equality follows from the definition of $A^X$ and the choice of $M$.
Say $v\in Y_j$ and $u'\in Y_{j'}$. Then, $A^X_{u'c}>A^X_{vc}$ implies 
$j-j'> {A_{vc}-A_{u'c}\over M}$. We cannot have $j'>j$ since then one would have 
\[
0<j-j'-\frac{A_{vc}-A_{u'c}}{M}<-1-\frac{A_{vc}-A_{u'c}}{M}<0,
\]
reaching a contradiction.  
Hence, either 
$j'<j$ (i.e., $u'\prec_{\psi_b} v$), or $j=j'$ (i.e., $v=_{\psi_b} u'$) and $A_{vc}<A_{u'c}$.

\smallskip \noindent (i)
Assume $v\neq_{\psi_b} c$. 
Then,  $c\prec_{\psi_b} v$ and, by (L3) applied to $\psi_b$,  $b\overset{v}{\sim} c$ in $A$. 
If $u'\prec_{\psi_b} v$, then $u'\overset{v}{\sim} b$ in $A$ (again by (L3) applied to $\psi_b$) and thus   $u\overset{v}{\sim} u' \overset{v}{\sim} b$ in $A$, giving the desired conclusion.
Otherwise, in view of  (\ref{eq:claim:constructing}), $u'=_{\psi_b} v$ 
and $A_{u'c}>A_{vc}$, which implies $u'\overset{v}{\sim} c$ in $A$. We obtain $u\overset{v}{\sim} u'\overset{v}{\sim}c \overset{v}{\sim} b$ in $A$, giving again the desired conclusion.

\smallskip \noindent(ii)
Assume $v=_{\psi_b} c$. Then $u'\prec_{\psi_b} v$ cannot occur since $c\preceq_{\psi_b} u'$.
In view of (\ref{eq:claim:constructing}),  we have $A_{u'c}>A_{vc}$, which implies   $u'\overset{v}{\sim} c$ and thus 
$u\overset{v}{\sim} u' \overset{v}{\sim} c$ in $A$, giving the desired conclusion.

\smallskip \noindent(iii)
We assume  $v\prec_{\psi_b} u$. 
We consider two cases depending whether  $v=_{\psi_b} c$ or not.
Assume first  that $v\neq_{\psi_b} c$. 
By Claim~\ref{claim:constructing}(i), we have $u\overset{v}{\sim} b$ in $A$.
Moreover, as $v\prec_{\psi_b} u$, we get $b\overset{u}{\sim}v$ in $A$ (by (L3) applied to $\psi_b$). Finally, $u\overset{b}{\sim}v$ in $A$, since $b$ is critical in $A$, and thus the triple $\{b,u,v\}$ is a weighted asteroidal triple in $A$.
Finally assume $v=_{\psi_b} c$.
Then, by Claim~\ref{claim:constructing}(ii), we have $u\overset{v}{\sim} c$ in $A$. 
Also, as $c=_{\psi_b} v\prec_{\psi_b} u$, 
we get $v\overset{u}{\sim} b\overset{u}{\sim} c$ in $A$ (applying (L3) to $\psi_b$).
Finally, using (\ref{eq:core7}), we get  $u\overset{c}{\sim} v$ in $A$.
Hence, $\{c,u,v\}$ is a weighted   asteroidal triple in $A$.
\end{proof}

As a direct application of Claim \ref{claim:constructing}, we get the following result, which we use to show Claim \ref{claim:constructing2} below.

\begin{corollary}\label{cor:constructing}
Consider distinct elements $u,v,w\in X$ and a path $P$ from $u$ to $w$ avoiding $v$ in $A^X$. If $P$ does not contain  $c$ then $P$ is also a path avoiding $v$ in $A$. If $P$ contains $c$ then one can construct from it a path $P'$ from $u$ to $w$ avoiding $v$ in $A$.
\end{corollary}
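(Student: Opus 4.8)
The goal is to prove Corollary~\ref{cor:constructing}, which upgrades a $c$-containing path in $A^X$ into a genuine avoiding path in $A$. The plan is to reduce both assertions to the already-established Claim~\ref{claim:constructing}. The first sentence is essentially immediate: if $P=(u,\cdots,w)$ avoids $v$ in $A^X$ and does not use $c$, then every consecutive triple $(p_{i-1},v,p_i)$ is non-Robinson in $A^X$, and since $A^X$ and $A$ coincide on $X$ (the entries were only modified in the row/column indexed by $c$), the same triples are non-Robinson in $A$. Hence $P$ is verbatim a path from $u$ to $w$ avoiding $v$ in $A$.

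The substantive case is when $P$ does contain $c$. Here I would split $P$ at its (unique) occurrence of $c$, writing $P = P_1 \cdot P_2$ where $P_1=(u,\cdots,c)$ is a path from $u$ to $c$ avoiding $v$ in $A^X$ and $P_2=(c,\cdots,w)$ is a path from $c$ to $w$ avoiding $v$ in $A^X$; equivalently its reverse $(w,\cdots,c)$ is such a path. Both halves are precisely the kind of object fed to Claim~\ref{claim:constructing}: a path from an endpoint in $X$ to $c$ avoiding some $v\in X$ in $A^X$. I would then apply Claim~\ref{claim:constructing} to each half and reassemble the resulting $A$-paths.

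The care needed is in matching the output cases of Claim~\ref{claim:constructing} so that the two halves glue into a single $A$-path with the correct endpoints $u$ and $w$. Since $u,v,w\in X$ we have $c\preceq_{\psi_b}u$ and $c\preceq_{\psi_b}w$, so the outcome~(iii) (which requires $v\prec_{\psi_b}\text{endpoint}$) is never triggered for either half unless $v=_{\psi_b}c$ produces it; but in that event outcome~(iii) would already hand us a weighted asteroidal triple in $A$, contradicting the context in which this corollary is invoked (we are in the middle of constructing an avoiding path, having assumed no such triple has been found). Thus for each half we are in case~(i) or case~(ii): either $v\neq_{\psi_b}c$, giving $A$-paths from $u$ to $b$ and from $w$ to $b$ avoiding $v$, which concatenate through $b$ into a path from $u$ to $w$ avoiding $v$ in $A$; or $v=_{\psi_b}c$, giving $A$-paths from $u$ to $c$ and from $w$ to $c$ avoiding $v$, which concatenate through $c$. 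In either situation, deleting any repeated vertices yields a genuine (simple) path $P'$ from $u$ to $w$ avoiding $v$ in $A$.

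The main obstacle I anticipate is purely bookkeeping: ensuring the two concatenated halves share exactly one gluing vertex ($b$ or $c$) and remain vertex-disjoint otherwise, so that after removing duplicated nodes the result is a legitimate path (distinct vertices) rather than a walk. This is handled by the standard device of shortcutting at the first common vertex, and it does not affect the avoiding property since every sub-path of an avoiding walk still has all consecutive triples non-Robinson with respect to $v$. With this, both sentences of the corollary follow directly from Claim~\ref{claim:constructing}.
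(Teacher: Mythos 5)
Your proof is correct and is essentially the paper's own (unwritten) argument: the paper presents the corollary as a direct application of Claim~\ref{claim:constructing}, namely splitting $P$ at the unique interior occurrence of $c$, applying case (i) or (ii) to both halves according to whether $v\neq_{\psi_b}c$ or $v=_{\psi_b}c$, and gluing the resulting $A$-paths at $b$ or at $c$ (shortcutting at a common vertex to restore simplicity), exactly as you do. One harmless caveat: your detour through case (iii) is both unnecessary and misstated --- the hypotheses of (i) and (ii) are mutually exclusive and exhaustive, so (iii) is an extra conclusion under an extra hypothesis rather than a competing outcome to be ruled out (indeed $v\prec_{\psi_b}u$ can occur with $v\neq_{\psi_b}c$, since elements of $X$ may lie strictly above $c$ in $\psi_b$, and the ``contradicting the context'' remark is a misreading, as in Case~2 a weighted asteroidal triple in $A$ is precisely what one is trying to produce); your operative dichotomy on $v=_{\psi_b}c$ carries the proof on its own.
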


 
\begin{claim}
\label{claim:constructing2}
Given a weighted asteroidal triple $\{x,y,z\}$ in $A^X$, one can construct from it a weighted asteroidal triple in $A$.
\end{claim}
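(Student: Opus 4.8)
The plan is to take a weighted asteroidal triple $\{x,y,z\}$ in $A^X$ and transfer it to $A$, treating $c$ as the one index where the two matrices differ. Recall that a weighted asteroidal triple requires three avoiding paths: $x\overset{z}{\sim}y$, $y\overset{x}{\sim}z$ and $z\overset{y}{\sim}x$ in $A^X$. The central difficulty is that $A^X$ and $A$ disagree only on entries involving $c$ (see~(\ref{eq:AX})), so a path in $A^X$ that happens to pass through $c$ is \emph{not} automatically a valid avoiding path in $A$. The engine for dealing with this is Claim~\ref{claim:constructing} together with its Corollary~\ref{cor:constructing}, which explain how to repair a path that uses $c$; so the whole argument is a careful case analysis on whether $c\in\{x,y,z\}$ and on how the three avoiding paths interact with $c$.

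First I would handle the case $c\notin\{x,y,z\}$, so that $x,y,z$ are three distinct elements of $X$. Here each of the three avoiding paths is a path in $A^X$ between two elements of $X$ avoiding a third element of $X$, which is exactly the setting of Corollary~\ref{cor:constructing}. Applying the corollary to each of the three paths, I either keep the path unchanged (if it misses $c$) or replace it by a repaired path avoiding the same element in $A$ (if it contains $c$). This produces paths $x\overset{z}{\sim}y$, $y\overset{x}{\sim}z$ and $z\overset{y}{\sim}x$ in $A$, so $\{x,y,z\}$ is already a weighted asteroidal triple of $A$ and we are done in this case.

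The harder case is when $c$ is one of the three elements, say $z=c$ (the roles of $x,y,z$ in the definition are symmetric, so one representative suffices). Now two of the required paths, namely $x\overset{z}{\sim}y$ (with $z=c$) and, say, $y\overset{x}{\sim}c$ and $x\overset{y}{\sim}c$, involve $c$ as an endpoint. For the path $x\overset{c}{\sim}y$ between two elements of $X$ avoiding $c$, Corollary~\ref{cor:constructing} again yields a path from $x$ to $y$ avoiding $c$ in $A$ (noting $c$ cannot be an \emph{internal} vertex of a path that avoids $c$, so no repair involving the endpoint is needed). For each of the remaining two paths, which run from an element of $X$ to $c$ avoiding the third element, I invoke Claim~\ref{claim:constructing}: its parts~(i) and~(ii) convert such a path into either a path ending at $b$ avoiding the third element or a path ending at $c$ avoiding it, in $A$; and crucially, whenever part~(iii) applies it directly hands us a weighted asteroidal triple of $A$, so in that subcase we are finished immediately. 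In the remaining subcases I combine the produced paths, using that $b$ is critical in $A$ and the relation~(\ref{eq:core7}) linking elements of $X$ to $c$, to assemble the three avoiding relations needed for $\{x,y,c\}$ (or a related triple through $b$) to be a weighted asteroidal triple of $A$.

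I expect the main obstacle to be the bookkeeping in this last case: matching each of the three avoiding paths of $A^X$ to the correct part of Claim~\ref{claim:constructing}, and then checking that the endpoints produced ($b$ versus $c$) can be stitched together into a genuine triple using criticality of $b$ and~(\ref{eq:core7}). In particular one must verify that when Claim~\ref{claim:constructing}(iii) does \emph{not} fire, the obtained paths all avoid a common third element, so that the three avoiding relations are consistent and yield a bona fide weighted asteroidal triple rather than three unrelated paths.
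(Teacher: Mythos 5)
Your overall route is the paper's: split on whether $c\in\{x,y,z\}$, dispatch the $c$-free case via Corollary~\ref{cor:constructing}, and otherwise feed the two paths ending at $c$ into Claim~\ref{claim:constructing}, letting part~(iii) terminate immediately and gluing the outputs of parts~(i)/(ii) using criticality of $b$, respectively~(\ref{eq:core7}). The consistency caveat you flag at the end (that when part~(iii) fires for neither path, the produced endpoints must agree) is resolved exactly as in the paper: if neither $x\prec_{\psi_b}y$ nor $y\prec_{\psi_b}x$, then $x=_{\psi_b}y$, and since $c\preceq_{\psi_b}v$ for every $v\in X$, either both are $=_{\psi_b}c$ (both paths go through part~(ii), yielding the triple $\{x,y,c\}$) or both are $\succ_{\psi_b}c$ (part~(i), yielding the triple $\{x,y,b\}$). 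So the skeleton is sound and matches the paper's proof.

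One step fails as you justify it, however: your treatment of the path witnessing $x\overset{c}{\sim}y$ in $A^X$. Corollary~\ref{cor:constructing} assumes $u,v,w\in X$, i.e., the \emph{avoided} element lies in $X$; it does not cover the case where the avoided element is $c$, and no repair argument can. Indeed, by~(\ref{eq:AX}) and the choice of $M$ in~(\ref{eqM}), $A^X_{vc}<A^X_{uw}$ for all $u,v,w\in X$, so in $A^X$ \emph{every} pair of elements of $X$ avoids $c$: the relation $x\overset{c}{\sim}y$ in $A^X$ is vacuous and carries no information about $A$, whose avoidance condition for $c$ involves the unmodified entries $A_{\cdot c}$. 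Your parenthetical remark that ``$c$ cannot be an internal vertex'' misses the real issue, which is not the vertex set of the path but the row of $c$ appearing in the avoidance inequalities. The repair is what the paper does: discard that path entirely; the needed relation $x\overset{c}{\sim}y$ in $A$ holds for \emph{all} $x,y\in X$ by~(\ref{eq:core7}), the very tool you already invoke for the gluing step. So the flaw is local and, once this substitution is made, your plan goes through and coincides with the paper's argument.
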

\begin{proof}
We may assume without loss of generality  that $ x \preceq_{\psi_b} y\preceq_{\psi_b} z$.
If  $\{x,y,z\}\cap \{c\}=\emptyset$ then it follows directly from Corollary \ref{cor:constructing} that $\{x,y,z\}$ is also a weighted asteroidal triple in $A$.
Without loss of generality, we now assume  that $c= x$. By Claim~\ref{claim:constructing}(iii) one can find a weighted asteroidal triple in $A$  if $y\prec_{\psi_b} z$.
Hence we now  assume that $c= x \preceq_{\psi_b} y=_{\psi_b} z$.

If $c= x =_{\psi_b} y=_{\psi_b} z$ then, by Claim~\ref{claim:constructing}(ii), we have $c \overset{y}{\sim} z$ and $c \overset{z}{\sim} y$ in $A$.
Moreover, by (\ref{eq:core7}), $y\overset{c}{\sim} z$ in $A$.
Hence  $\{c, y, z\}$ is a weighted asteroidal triple in $A$.

If $c= x \prec_{\psi_b} y=_{\psi_b} z$ then, by Claim~\ref{claim:constructing}(i), we have $b \overset{y}{\sim} z$ and $b \overset{z}{\sim} y$ in $A$.
Moreover, as  $b$ is critical in $A$, $y\overset{b}{\sim} z$ in $A$.
Hence  $\{b, y, z\}$ is a weighted asteroidal triple in $A$.
\end{proof}



\medskip
\noindent
{\bf Case 3:}
The remaining case is when we have found a Robinson order $\sigma_X$ (resp., $\sigma_Y$)  of $A^X$ (resp., of $A^Y$),
 which is compatible with the similarity layer structure $\psi_a^X$ of $A^X$ rooted at $a$ (resp., with the similarity layer structure  $\psi_b^Y$ of $A^Y$ rooted at $b$). 
By~Claim \ref{claim:core3}, we must have $\sigma_X=(a,\cdots, c)$ and $\sigma_Y=(b,\cdots, c)$. 
We define the linear order $\sigma=(\sigma_X, \sigma_Y^{-1})$ of $V$, obtained by concatenating $\sigma_X$ and the reverse of $\sigma_Y$ along the element $c$.
In view of the form of $\psi^X_a$ 
in Claim \ref{claim:core3}, it follows that $\sigma$ is compatible with $\psi_a$.
In order to complete  the proof of Theorem \ref{thm:1} we  need to show that, either $\sigma$ is a Robinson ordering of $A$, or we can find a weighted asteroidal triple in $A$. 

Recall that  an ordered triple $(x,y,z)$ is  Robinson in  $A$  if  $A_{xz}\le \min\{A_{xy},A_{yz}\}$ holds.
We will show that for any triple $(x,y,z)$ with
$x\prec_\sigma y\prec_\sigma z$, either $(x,y,z)$ is Robinson, or $\{x,y,z\}$ is a weighted asteroidal triple for $A$.

Assume $x\prec_\sigma y\prec_\sigma z$. Then,   $x\preceq_{\psi_a} y\preceq_{\psi_a} z$, since $\sigma$ is compatible with $\psi_a$. If $x\prec_{\psi_a} y\preceq_{\psi_a} z$, then we can conclude that $(x,y,z)$ is Robinson  (using (L1*)-(L2*) applied to $\psi_a$).
Hence from now on  we may assume that 
$x=_{\psi_a} y\preceq_{\psi_a} z$. In  the next two claims we will  consider separately the two cases: $x=_{\psi_a}y \prec_{\psi_a} z$ and 
$x=_{\psi_a}y=_{\psi_a} z$.

Note that  we can analogously conclude that  $(x,y,z)$ is Robinson if $z\prec_{\psi_b} y\preceq _{\psi_b} x$ (using (L1*)-(L2*) applied to $\psi_b$). 

\begin{claim}
\label{claim:checking2}
Consider $x,y,z\in V$  such that  $x\prec_\sigma y\prec_\sigma z$ and   $x=_{\psi_a} y\prec_{\psi_a} z$.
If the triple 
$(x,y,z)$ is not  Robinson  in  $A$ then $\{x,y,z\}$ is a weighted asteroidal triple in $A$. 
\end{claim}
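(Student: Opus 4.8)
\medskip
The plan is to establish, for the triple $\{x,y,z\}$, the three avoiding relations of Definition~\ref{defAT}: $z\overset{y}{\sim}x$, $x\overset{z}{\sim}y$ and $y\overset{x}{\sim}z$. The tools I would use are the two similarity layer structures $\psi_a,\psi_b$ together with the Robinson orderings $\sigma_X,\sigma_Y$ of $A^X,A^Y$ from Case~3. Two of the relations come essentially for free. Since $(x,y,z)$ is not Robinson we have $A_{xz}>\min\{A_{xy},A_{yz}\}$, which is exactly the condition for the single edge $(x,z)$ to avoid $y$, giving $z\overset{y}{\sim}x$. For $x\overset{z}{\sim}y$, I would use that $x=_{\psi_a}y$ lie in a common layer $X_i$ (with $i\ge1$, since $X_0=\{a\}$ and $x\ne y$) while $z\in X_h$ with $h>i$; property (L3) applied to $\psi_a$ then gives $a\overset{z}{\sim}x$ and $a\overset{z}{\sim}y$, which I concatenate at $a$ to obtain $x\overset{z}{\sim}y$.

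The substance is the relation $y\overset{x}{\sim}z$. First I would locate the elements in $\psi_b$: the $\sigma$-order $x\prec_\sigma y\prec_\sigma z$ yields $z\preceq_{\psi_b}y\preceq_{\psi_b}x$ (as $\sigma$ is compatible with the reverse of $\psi_b$), and the observation recorded just before the claim---if $z\prec_{\psi_b}y\preceq_{\psi_b}x$ then $(x,y,z)$ is Robinson---forces $z=_{\psi_b}y$ under our non-Robinson hypothesis. If in addition $y\prec_{\psi_b}x$, then $y,z$ sit in a common layer $Y_j$ of $\psi_b$ while $x$ lies in a strictly later one, so (L3) applied to $\psi_b$ gives $b\overset{x}{\sim}y$ and $b\overset{x}{\sim}z$; concatenating at $b$ yields $y\overset{x}{\sim}z$, and then $\{x,y,z\}$ is a weighted asteroidal triple.

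I expect the main obstacle to be the remaining subcase $x=_{\psi_b}y=_{\psi_b}z$, in which all three elements lie in a single $\psi_b$-layer and no (L3)-route for avoiding $x$ is available, since $x$ is strictly latest in neither $\psi_a$ nor $\psi_b$. Here my plan is to pass to the auxiliary matrices. Their common $\psi_b$-layer is contained either in $Y$ or in $X\cup\{c\}$. In the first case the three elements lie in $Y$, so their $\sigma$-order is the reverse of their $\sigma_Y$-order; since $\sigma_Y$ is Robinson for $A^Y$ and $A^Y$ agrees with $A$ on $Y$, the triple $(x,y,z)$ would be Robinson in $A$---a contradiction. Hence all three lie in $X\cup\{c\}$, so that $x\prec_{\sigma_X}y\prec_{\sigma_X}z$ and $\sigma_X$ being Robinson forces $(x,y,z)$ to be Robinson in $A^X$. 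As $A^X$ and $A$ differ only in the $c$-row/column and $c$ is the last element of $\sigma_X$, this can coexist with non-Robinsonness in $A$ only if $z=c$.

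Finally I would decode the inequality $A^X_{xc}\le\min\{A^X_{xy},A^X_{yc}\}$. Because $x=_{\psi_b}y$ lie in the same layer $Y_{j^*}$ as $c$, the entries $A^X_{xc}$ and $A^X_{yc}$ carry the same $-M-j^*$ offset in (\ref{eq:AX}) and, by the choice of $M$ in (\ref{eqM}), both fall far below $A^X_{xy}=A_{xy}$; the binding constraint is therefore $A^X_{xc}\le A^X_{yc}$, which unwinds to $A_{xc}\le A_{yc}$. Combined with $A_{xc}>\min\{A_{xy},A_{yc}\}$ this forces $A_{xy}<A_{yc}$ and $A_{xc}>A_{xy}$, so the single edge $(y,c)=(y,z)$ already satisfies $A_{yz}>\min\{A_{xy},A_{xz}\}$ and hence avoids $x$; this gives $y\overset{x}{\sim}z$ and completes the verification that $\{x,y,z\}$ is a weighted asteroidal triple. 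The crux, as anticipated, is this last subcase, where the layer structures alone do not separate $x$ from $\{y,z\}$ and one must exploit the Robinson orderings of $A^X$ and $A^Y$.
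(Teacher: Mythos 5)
Your proposal is correct in substance and follows the same skeleton as the paper's proof of Claim~\ref{claim:checking2}: the single edge $(x,z)$ gives $z\overset{y}{\sim}x$; (L3) applied to $\psi_a$ (with $a\notin\{x,y\}$ since $X_0=\{a\}$ is a singleton) gives $x\overset{z}{\sim}y$; the observation recorded before the claim forces $z=_{\psi_b}y$; and the two subcases $y\prec_{\psi_b}x$ (handled by (L3) applied to $\psi_b$) and $x=_{\psi_b}y=_{\psi_b}z$ (handled by identifying $z=c$ and decoding $A^X_{xc}\le A^X_{yc}$ into $A_{xc}\le A_{yc}$, whence $A_{yz}\ge A_{xz}>A_{xy}$) are exactly the paper's final steps.

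The one step you assert without justification is the parenthetical claim that $\sigma$ is compatible with the reverse of $\psi_b$. The paper only establishes that $\sigma$ is compatible with $\psi_a$; compatibility with the reverse of $\psi_b$ is true but not free, and it is precisely the content that the paper proves piecemeal as relation~(\ref{eq:z}) and the separate step $y\preceq_{\psi_b}x$, each via a computation with the Robinson ordering $\sigma_X$ and the choice of $M$. For pairs $u,v\in Y$ your claim follows from the compatibility of $\sigma_Y$ with $\psi^Y_b$ (Claim~\ref{claim:core3}), and cross pairs are immediate from the partition $V=X\cup\{c\}\cup Y$ (every element of $X\cup\{c\}$ lies in a layer $Y_h$ with $h\ge j^*$); but for $u,v\in X$ nothing in the construction orders $\sigma_X$ against $\psi_b$ directly: one must apply the Robinson property of $\sigma_X$ to the triple $(u,v,c)$, with $c$ last in $\sigma_X$, to get $A^X_{uc}\le A^X_{vc}$, and then decode via (\ref{eq:AX}) and (\ref{eqM}) that the $\psi_b$-index of $v$ is at most that of $u$. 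Since this is the same decoding you use in your last paragraph, the gap is easily filled; and once it is, your organization is in fact slightly leaner than the paper's: you never need the intermediate facts $z\notin X$ and $z=c$ before the case split (in the subcase $y\prec_{\psi_b}x$ you avoid $z=c$ altogether), and you dispose of the possibility that the common $\psi_b$-layer lies in $Y$ by a direct contradiction with $\sigma_Y$ being Robinson, where the paper excludes it implicitly through its prior derivation of $z=c$.
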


\begin{proof}
Assume $x\prec_\sigma y\prec_\sigma z$,  $x=_{\psi_a} y\prec_{\psi_a} z$, and $(x,y,z)$ is not a Robinson triple in $A$. Then
$A_{xz}>\min\{A_{xy},A_{yz}\}$ and thus $x\overset{y}{\sim }z$ in $A$.
We first claim that $z\not\in X$. Indeed, assume $z\in X$. Then, as  $x=_{\psi_a} y\prec_{\psi_a} z$ we have $\{x,y,z\}\subseteq X$ and thus, as  $\sigma$ restricts to  $\sigma_X$ on $X$, 
$x\prec_{\sigma_X} y\prec_{\sigma_X} z$. As $\sigma_X$ is a Robinson ordering of $A^X$, this implies that $(x,y,z)$ is a  Robinson triple in $A^X$ and thus in $A$, a contradiction. Therefore, $z\in X_{k-1}\cup\{b\}$. 

Next we claim that 
\begin{equation}\label{eq:z}
z\prec_{\psi_b} x\prec_{\psi_b} y\ \text{ does not hold.}
\end{equation}
For this assume $z\prec_{\psi_b} x\prec_{\psi_b} y.$ As $x=_{\psi_a} y\prec_{\psi_a} z$, we have $x,y\in X_i$ for some $i\le k-1$. We first claim that  $i\le k-2$. This is clear if $z\in X_{k-1}$. Assume now $z=b$ and $x,y\in X_{k-1}$. Then $\{x,y,z\}\subseteq Y\cup\{c\}$ and, as $\sigma$ restricts to $\sigma_Y^{-1}$ on $Y\cup\{c\}$ and 
 $\sigma_Y$ is compatible with $\psi_b^Y$, $x\prec_\sigma y\prec_\sigma z$ implies $z\preceq_{\psi_b} y\preceq_{\psi_b} x$, a contradiction.  So we have shown that $x,y\in X_i$ for some $i\le k-2$. This  implies 
$x\prec_\sigma y\prec_\sigma c$ and thus $x\prec_{\sigma_X} y \prec_{\sigma_X} c$. Say $x\in Y_j$, $y\in Y_h$, with $h>j$ as $x\prec_{\psi_b} y$.
As $\sigma_X$ is a Robinson ordering of $A^X$ we can conclude
$A^X_{xc}\le \min\{A^X_{xy},A^X_{yc}\}= A^X_{yc}$, which implies $1\le h-j \le {A_{cy}-A_{cx}\over M}<1 $ (by the choice of $M$ in (\ref{eqM})), a contradiction.
So we have shown (\ref{eq:z}).

Recall that the reverse of $\psi_b$ is compatible with $\psi_a$.
Hence  it follows from $x=_{\psi_a} y\prec_{\psi_a} z$ that $z\preceq_{\psi_b}x$ and $z\preceq_{\psi_b} y$. Moreover, we claim that
\begin{equation}
\label{eq:z2}
\text{$z=_{\psi_b}x$ or $z=_{\psi_b} y$.}
\end{equation}
Indeed, if (\ref{eq:z2}) does not hold, then $z\prec_{\psi_b} x\prec_{\psi_b} y$ or $z\prec_{\psi_b} y \preceq_{\psi_b} x$.
The former does not hold by (\ref{eq:z}),  while the latter does not hold since $(x,y,z)$ is not Robinson (as observed just before Claim \ref{claim:checking2}).
Thus  (\ref{eq:z2}) holds.
Then, by $x=_{\psi_a} y\prec_{\psi_a} z$, (\ref{eq:z2}) implies  $z=c$. 

We next claim that $y\preceq _{\psi_b}x$.
For this assume for contradiction that $x\prec_{\psi_b} y$. As above, let $x\in Y_j$, $y\in Y_h$ with $h>j$. From this (and the definition of $M$ in (\ref{eqM})), it
follows that 
$A^X_{cx}>\min\{A^X_{cy},A^X_{xy}\}=A^X_{cy}$. As $\sigma_X$ is a Robinson ordering of $A^X$ we must have $y\prec_{\sigma_X} x\prec_{\sigma_X} c$, which implies
$y\prec_\sigma x\prec_\sigma c$, a contradiction.

In total we have shown  that the following relation holds:
$$c=z=_{\psi_b} y \preceq_{\psi_b} x.$$
We will now show that  $\{x,y,z\}$ is a weighted asteroidal triple in $A$. 
We already have $x\overset{y}{\sim} z$, since  by assumption the triple $(x,y,z)$ is not Robinson in $A$.
Moreover, as $x=_{\psi_a}y\prec_{\psi_a} z$ we have $x\overset{z}{\sim}y$. Indeed, this follows from (L3) applied to $\psi_a$: if $a\not\in \{x,y\}$ then $x\overset{z}{\sim}a\overset{z}{\sim}y$ and, if (say) $a=x$, then $x=a\overset{z}{\sim}y$.
Remains to show $y\overset{x}{\sim}z$. We distinguish two cases.
If $c=z=_{\psi_b}y\prec_{\psi_b}x$ then $y\overset{x}{\sim}b\overset{x}{\sim}z$ follows (using  (L3) applied to $\psi_b$).
Assume now  $c=z=_{\psi_b}y=_{\psi_b}x$. 
Then, we have
$\{x,y,z\}\subseteq  X\cup\{c\}$, and $x\prec_\sigma y\prec_\sigma z$ implies $x\prec_{\sigma_X} y\prec_{\sigma_X} z$.
This  in turn implies 
$A^X_{xz}\le \min\{A^X_{xy},A^X_{yz}\}=A^X_{yz}$
and thus $A_{xz}\le A_{yz}$.
Combining with  $A_{xz}>\min\{A_{xy},A_{yz}\}$ we get 
$A_{xz}>A_{xy}$ and thus
$A_{yz}\ge A_{xz}>A_{xy}$ which gives $y\overset{x}{\sim}z$.
So we have shown that $\{x,y,z\}$ is a weighted asteroidal triple in $A$ and  this concludes the proof.
\ignore{
We first show that 
\begin{equation}
\label{eq:claim:checking2}
\text{the Robisonian inequality for $\{x,y,z\}$ always holds unless  $c= z=_{\psi_b} y \preceq_{\psi_b} x$. }
\end{equation}
Let $i_z$ be the integer with $z\in X_{i_z}$.
If $i_z<k-1$, then $\{x,y,z\}\subset X$, and 
the Robinsonian inequality for $\{x,y,z\}$ follows from the fact that $\prec_5$ is Robinson.
If $i_z=k$, then $z\in Y_0$ and hence the Robinsonian inequality follows from (L1)(L2) of $\psi_b$.
Hence assume $i_z=k-1$, i.e.,  $z\in X_{k-1}$.
By $x=_{\psi_a} y\prec_{\psi_a} z$ and (\ref{eq:core1}), 
$z\preceq_{\psi_b} x$ and $z\preceq_{\psi_b} y$.
If $z\prec_{\psi_b} y$ and $z\prec_{\psi_b} x$, then the Robinsonian inequality follows from (L1)(L2) of $\psi_b$.
Thus we may further suppose that $z=_{\psi_b} y$ or $z=_{\psi_b} x$, and hence $z=c$. 

To show (\ref{eq:claim:checking2}) we next show $y\preceq_{\psi_b} x$.
Suppose to the contrary that  $x\prec_{\psi_b} y$. 
Then $A_{cx}^X>\min\{A_{cy}^X, A_{yx}^X\}$ by the definition of $A^X$.
Since $\prec_5$ is Robinson, we get $y\prec_5 x\prec_5 c(=z)$, which contradicts $x\prec y\prec z$.
Thus $y\preceq_{\psi_b} x$ holds, implying $c= z \preceq_{\psi_b} y \preceq_{\psi_b} x$

Finally, to see (\ref{eq:claim:checking2}), suppose  $z\prec_{\psi_b} y$. 
Then the Robinsonian inequality for $\{x,y,z\}$ follows from (L1)(L2) of $\psi_b$. 
This completes the proof of (\ref{eq:claim:checking2}).

Suppose that $y \prec_{\psi_b} x$.
If $A_{xz}>\min\{A_{xy}, A_{yz}\}$, then $x\overset{y}{\sim} z$.
On the other hand, (L3) of $\psi_a$ and $\psi_b$, respectively, implies
$x\overset{z}{\sim} a \overset{z}{\sim} y$ and $y\overset{x}{\sim} b \overset{x}{\sim} z$.
Thus $\{x,y,z\}$ turns out to be an $A$-asteroidal triple.

Finally suppose that $ y =_{\psi_b} x$.
By the definition of $A^X$, $A^X_{xc}-A^X_{yc}=\frac{A_{xc}-A_{yc}}{M}$.
On the other hand, since $\prec_5$ is Robinsonian, 
we have $A^X_{xc}\leq \min\{A^X_{xy}, A^X_{yc}\}$ by $x\prec_5 y\prec_5 z$.
Thus  $A_{xc}\leq A_{yc}$.
Hence, the Robinsonian inequality fails for $x,y,z(=c)$ if and only if $A_{xc}>A_{xy}$.
Moreover, if $A_{xc}>A_{xy}$,  $z\overset{x}{\sim} y$ and $z\overset{y}{\sim} x$,
while (L3) of $\psi_a$ implies $x \overset{z}{\sim} a\overset{z}{\sim} y$.
Thus $\{x,y,z\}$ turns out to be an $A$-asteroidal triple.
}
\end{proof}

\begin{claim}
\label{claim:checking3}
Consider $x,y,z\in V$ such that  $x\prec_\sigma y\prec_\sigma z$ and $x=_{\psi_a} y=_{\psi_a} z$.
If $(x,y,z)$ is not a Robinson triple  in $A$ then $\{x,y,z\}$ is a weighted asteroidal triple in $A$. 
\ignore{
Then the Robinsonian inequality for $\{x,y,z\}$ fails if and only if  $c=_{\psi_b} x \prec_{\psi_b} y =_{\psi_b} z$.

the following holds:
\begin{description}
\item[(c)] $c=_{\psi_b} x \prec_{\psi_b} y =_{\psi_b} z$ and $A_{xz}> A_{xy}$.
\end{description}
Moreover, if it fails, then $\{x,y,z\}$ forms an $A$-asteroidal triple.
}
\end{claim}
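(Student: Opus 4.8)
The plan is to locate the three elements inside the similarity layer structure $\psi_a$ and then show that the only way the triple can fail to be Robinson is a rigid configuration involving $c$, from which a weighted asteroidal triple reads off directly. Since $x=_{\psi_a}y=_{\psi_a}z$, all three lie in a common layer $X_i$ of $\psi_a$. First I would rule out $i\le k-2$: then $\{x,y,z\}\subseteq X$, so $\sigma$ restricts to $\sigma_X$ on these elements, and as $\sigma_X$ is a Robinson ordering of $A^X$ (which agrees with $A$ on $X$), the triple $(x,y,z)$ would be Robinson, a contradiction. Since $X_k=\{b\}$ is a singleton, the only remaining possibility is $i=k-1$, i.e.\ $\{x,y,z\}\subseteq X_{k-1}$. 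Recalling that $X_{k-1}=\{c\}\cup(X_{k-1}\cap Y)$ and that in $\sigma$ the element $c$ precedes every element of $Y$, the element $c$ is the $\sigma$-smallest element of $X_{k-1}$. If $c\notin\{x,y,z\}$, then all three lie in $X_{k-1}\cap Y\subseteq Y$, where $\sigma$ restricts to $\sigma_Y^{-1}$, and the same argument using that $\sigma_Y$ is a Robinson ordering of $A^Y$ (agreeing with $A$ on $Y$) again forces $(x,y,z)$ to be Robinson, a contradiction. Therefore $c\in\{x,y,z\}$, and being $\sigma$-smallest it must equal $x$; consequently $y,z\in X_{k-1}\cap Y$.

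Next I would extract the decisive inequality. From $c=x\prec_\sigma y\prec_\sigma z$ and the fact that $\sigma$ restricts to $\sigma_Y^{-1}$ on $Y\cup\{c\}$ we get $z\prec_{\sigma_Y}y\prec_{\sigma_Y}c$. Since $\sigma_Y$ is a Robinson ordering of $A^Y$, this yields $A^Y_{cz}\le A^Y_{cy}$; and because $y,z$ both lie in $X_{k-1}$, the defining formula~(\ref{eq:AY}) for the $c$-row of $A^Y$ carries the \emph{same} additive offset $-M-(k-1)$ for $y$ and $z$, so $A^Y_{cz}\le A^Y_{cy}$ is equivalent to $A_{cz}\le A_{cy}$. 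Combining this with the hypothesis that $(x,y,z)=(c,y,z)$ is not Robinson, i.e.\ $A_{cz}>\min\{A_{cy},A_{yz}\}$, forces the minimum to be attained at $A_{yz}$ and gives the chain $A_{yz}<A_{cz}\le A_{cy}$.

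From this chain I would read off the weighted asteroidal triple $\{c,y,z\}$. The single edge $(c,y)$ avoids $z$, since $A_{cy}>A_{yz}=\min\{A_{cz},A_{yz}\}$, giving $c\overset{z}{\sim}y$; symmetrically the edge $(c,z)$ avoids $y$, since $A_{cz}>A_{yz}=\min\{A_{cy},A_{yz}\}$, giving $z\overset{y}{\sim}c$; and $y\overset{c}{\sim}z$ holds by~(\ref{eq:core7}) because $y,z\in Y$. Thus $\{x,y,z\}=\{c,y,z\}$ is a weighted asteroidal triple of $A$, as required, paralleling the conclusion of Claim~\ref{claim:checking2}.

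The step I expect to be the crux is deriving $A_{cz}\le A_{cy}$: without it the non-Robinson hypothesis could instead be caused by $A_{cz}>A_{cy}$, in which case the edge $(c,y)$ need not avoid $z$ and the clean construction breaks down. This is exactly the point where one must invoke that $\sigma_Y$ --- and not merely $\sigma_X$ --- is a Robinson ordering, so the care lies in correctly identifying that $x$ coincides with $c$ and in tracking the two orders $\sigma_X$, $\sigma_Y$ through the concatenation defining $\sigma$ and through the offsets in the definition of $A^Y$.
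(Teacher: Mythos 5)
Your proof is correct and follows essentially the same route as the paper's: rule out layers $i\le k-2$ via $\sigma_X$, force $x=c$ via the Robinson property of $\sigma_Y$ on $A^Y$, extract $A_{cz}\le A_{cy}$ using the equal offsets $-M-(k-1)$ in (\ref{eq:AY}), and conclude from the chain $A_{yz}<A_{cz}\le A_{cy}$ together with (\ref{eq:core7}). The only cosmetic difference is that you treat the case $c\notin\{x,y,z\}$ separately, whereas the paper folds it into the single observation that $(z,y,x)$ is Robinson in $A^Y$, hence in $A$ unless $x=c$.
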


\begin{proof}
By assumption, $x,y,z\in X_i$ for some $i\le k-1$ and the triple $(x,y,z)$ is not Robinson in $A$.
We first claim that $i=k-1$. For this assume $i\le k-2$. Then $\{x,y,z\}\subseteq X$ and thus  $x\prec_\sigma y\prec_\sigma z$
implies $x\prec_{\sigma_X} y\prec_{\sigma_X} z$. As $\sigma_X$ is a Robinson ordering of $A^X$ and thus of $A[X]$ it follows that $(x,y,z)$ is Robinson in $A$, contradicting our assumption.
Hence we know that $x,y,z\in X_{k-1}$. Then  $c\preceq_\sigma x\prec_\sigma y\prec_\sigma z$, which implies
$z\prec_{\sigma_Y} y\prec_{\sigma_Y} x\preceq_{\sigma_Y} c$ and thus  the triple $(z,y,x)$ is Robinson in $A^Y$ (since $\sigma_Y$ is a Robinson ordering of $A^Y$).
If $x\neq c$ then $(z,y,x)$ is also a Robinson triple in $A$, contradicting our assumption. Therefore we must have $x=c$. In turn, this gives 
$A^Y_{xz} \le \min\{A^Y_{xy}, A^Y_{yz}\}=A^Y_{xy}$, which implies
$A_{xz}\le A_{xy}$. On the other hand, $A_{xz}>\min\{A_{xy},A_{yz}\}$, since $(x,y,z)$ is not Robinson in $A$.
This implies 
$A_{xz} >A_{yz}$ and thus $A_{xy}\ge A_{xz}>A_{yz}$, so that 
$x\overset{z}{\sim}y$ and $x\overset{y}{\sim} z$.
Lastly,  $y\overset{c=x}{\sim} z$ follows from (\ref{eq:core7}) and thus we have shown that $\{x,y,z\}$ is a weighted asteroidal triple in $A$.
 \ignore{
If $x,y,z\in Z_i$ with $i<k-1$, then $x, y, z\in X$ and hence the Robinsonian inequality follows from the fact that $\prec_5$ is Robinson. 
The same argument using $\prec_6$ holds if $x,y,z\in Z_{k-1}$ and $\{x,y,z\}\cap \{c\}=\emptyset$.
The nontrivial case is when $x,y,z\in Z_{k-1}$ and $\{x,y,z\}\cap \{c\}\neq \emptyset$.
Since $\prec$ is obtained based on $\prec_6^{-1}$, $z\prec_6 y \prec_6 x$, 
and hence $z\preceq_{\psi_b} y \preceq_{\psi_b} x$.
Since $c$ is the unique minimum in the partial ordering $\preceq_{\psi_b}$ when restricted to $Y\cup\{c\}$, 
we actually have  $c=z\prec_{\psi_b} y \preceq_{\psi_b} x$.
If $z\prec_{\psi_b} y \prec_{\psi_b} x$, then the Robinsonian inequality follows from (L2) of $\psi_b$.
Thus we may assume $z\prec_{\psi_b} y =_{\psi_b} x$.

By the definition of $A^Y$, $A^Y_{xc}-A^Y_{yc}=\frac{A_{xc}-A_{yc}}{M}$.
On the other hand, since $\prec_6$ is Robinson, 
we have $A^Y_{xc}\leq \min\{A^Y_{xy}, A^Y_{yc}\}$ by $z\prec_6 y\prec_6 x$.
Thus  $A_{xc}\leq A_{yc}$.
Hence, the Robinsonian inequality fails for $x,y,z(=c)$ if and only if $A_{xc}>A_{xy}$.

If $A_{xc}>A_{xy}$, we have $z\overset{x}{\sim} y$ and $z\overset{y}{\sim} x$.
On the other hand,  (L3) of $\psi_b$ implies $x \overset{z}{\sim} b\overset{z}{\sim} y$.
Thus $\{x,y,z\}$ is an $A$-asteroidal triple.
}
\end{proof}

This concludes the proof of Theorem \ref{thm:1}.

\ignore{
The algorithm checks whether there is a triple $x,y,z$ satisfying any one of the conditions listed in Claim~\ref{claim:checking2} and Claim~\ref{claim:checking3}.
If such a triple exists, then one can find an $A$-asteroidal triple.
Otherwise, we can conclude that $\prec$ satisfies (i) and (ii) of   Proposition~\ref{prop:good}, and $\prec$ is a Robinson ordering extending $\psi_a$.
%
%
%

\begin{theorem}
\label{thm:main}
Let $A$ be a symmetric matrix. Then there is a polynomial time algorithm that finds a Robinson ordering of $A$ or an $A$-asteroidal triple.
\end{theorem}
\begin{proof}
The proof is done by induction on the size of $V$.
We first remark the following.
\begin{claim}
\label{claim:homo}
Given a strongly homogeneous proper set $S$, 
one can find an $A$-asteroidal triple or a Robinson order of $A$.
\end{claim}
\begin{proof}
Let $s$ be the element obtained by the contraction of $S$.
Since $S$ is proper, the size of each of $A[S]$ and $A/S$ is smaller than that of $A$.
By induction, one can find either an asteroidal triple or a Robinson order for each matrix.
Since any $A[S]$-asteroidal triple is an $A$-asteroidal triple, if one finds an $A[S]$-asteroidal triple, the algorithm simply outputs it.
Similarly, if one finds an $A/S$-asteroidal triple, then one can easily construct an $A$-asteroidal triple from it.
Thus we consider the case when one found  Robinson orders $\prec_1$ and $\prec_2$ of $A[S]$ and $A/S$, respectively. Then we put $\prec_1$ in place of $s$ in $\prec_2$ to get the total order of $V$.

It remains to check that $\prec$ is Robinson. 
To see this, take any $x, y, z$ with $x\prec y\prec z$.
If $\{x,y,z\}\cap S= 0$, then (\ref{eq:R}) follows from the fact that $\prec_2$ is Ronbinsonian.
If $\{x,y,z\}\cap S=1$, then (\ref{eq:R}) follows from the fact that $\prec_2$ is Ronbinsonian and $S$ is homogeneous.
If $\{x,y,z\}\cap S=2$, then by the construction of $\prec$, $\{x,y\}\subset X$ or $\{y, z\}\subset S$.
We assume the former case without loss of generality.
Since $S$ is strongly homogeneous, $A_{xy}=A_{xz}\leq A_{yz}$, meaning (\ref{eq:R}).
If $\{x,y,z\}\cap S= 3$, then (\ref{eq:R}) follows from the fact that $\prec_1$ is Ronbinsonian.
\end{proof}

Our algorithm first applies Lemma~\ref{lem:04}, which finds either a strongly homogeneous proper set or a critical element.
If it finds a strongly homogeneous proper set, then we use Claim~\ref{claim:homo}.
If it finds a critical element $a$, then we next try to compute the similarity layer based on $a$ by applying Lemma~\ref{lem:02}. 
If it does not find an $A$-asteroidal triple, then the similarity layer $\psi_a$ is obtained. 
We then apply Theorem~\ref{thm:1} for $A$ and $\psi_a$. 
If it finds a strongly homogeneous set $S$, then we again apply Claim~\ref{claim:homo}.
Otherwise, it finds an $A$-asteroidal triple or a Robinson order of $A$. 
\end{proof}
}

\section{Applications}\label{secfinal}

In this section we group some applications of our characterization of Robinsonian matrices in terms of weighted asteroidal triples.
First we indicate how we can derive from it the known structural characterization of unit interval graphs from Theorem \ref{thm:uig}.
As we will see, weighted asteroidal triples offer a common framework to formulate the three types of obstructions for the graph case: chordless cycles, claws and asteroidal triples.

As another application, in order to decide whether a matrix $A$ is Robinsonian it suffices to check whether it has a weighted asteroidal triple. 
 The proof of Theorem \ref{thm:1} is algorithmic and yields a polynomial time algorithm for finding a weighted asteroidal triple (if some exists), however we can give  a much simpler, direct  algorithm permitting to find all weighted asteroidal triples in time $O(n^3)$ for a $n \times n$ symmetric matrix $A$.
 
 Finally we mention a possible application of our characterization for identifying large Robinsonian submatrices. In particular we obtain an explicit characterization of the maximal subsets $I$ for which the principal submatrix $A[I]$ is Robinsonian in terms of forbidden `weighted asteroidal cycles'.

\subsection{Recovering the structural characterization of unit interval graphs}\label{secuig}

In this section we indicate how to recover from our main result (Theorem \ref{thm:main})  the known structural characterization of unit interval graphs in Theorem \ref{thm:uig}. 

Let $G=(V,E)$ be a graph.
Given $x,y,z\in V$ and a path $P$ from $x$ to $y$ in $G$,  recall that $P$ misses $z$ if $P$ is disjoint from the neighborhood of $z$.
In other words, if  $A_G$ denotes  the adjacency matrix of $G$,  then $P=(x=x_0,x_1,\cdots, x_k, y=x_{k+1})$ misses $z$ if $(A_G)_{x_i,x_{i+1}}>\max\{(A_G)_{x_iz},(A_G)_{x_{i+1}z}\}$ for all $0\le i\le k$.
Hence if $P$ misses $z$,  then it also avoids $z$ in $A_G$, but the converse is not true in general.

An asteroidal triple in $G$ is  a set of nodes $\{x,y,z\}$ 
containing  no edge and 
such that  there exists a path in $G$ between  any two nodes in $\{x,y,z\}$  that misses the third one. 
Hence, if $\{x,y,z\}$ is an asteroidal triple in $G$,  then it is also a weighted asteroidal triple in the adjacency matrix $A_G$ of $G$, but the converse is not true in general.

\ignore{

\begin{theorem}\label{thm:uig}\cite{}
A graph $G$ is a unit interval graph if and only if it satisfies the following conditions:
\begin{description}
\item[(i)] $G$ is chordal, i.e., $G$ contains no induced cycle of length at least 4;
\item[(ii)] $G$ contains no induced claw $K_{1,3}$;
\item[(iii)]
$G$ contains no (graphic) asteroidal triple.
\end{description}
\end{theorem}
}

As was recalled earlier, $G$ is a unit interval graph if and only if its adjacency matrix $A_G$ is Robinsonian.
In view of Theorem \ref{thm:main}, $A_G$ is Robinsonian if and only if it does not contain a weighted asteroidal triple.
Combining those two facts with Theorem~\ref{thm:uig}, we have the following.

\begin{theorem}\label{thm:at}
Let $G$ be a graph and $A_G$ its adjacency matrix. 
The conditions (i), (ii), and (iii) in Theorem~\ref{thm:uig} hold if and only if $A_G$ has no weighted asteroidal triple. 
\end{theorem}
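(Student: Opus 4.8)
The plan is to obtain Theorem~\ref{thm:at} as a straightforward composition of three biconditionals already at hand, rather than by building an explicit dictionary matching each graph obstruction to a weighted asteroidal triple. The three equivalences I would chain are: first, Roberts' observation that $G$ is a unit interval graph if and only if its adjacency matrix $A_G$ is Robinsonian; second, our main result Theorem~\ref{thm:main}, which asserts that $A_G$ is Robinsonian if and only if $A_G$ contains no weighted asteroidal triple; and third, Theorem~\ref{thm:uig}, which states that $G$ is a unit interval graph if and only if conditions (i)--(iii) hold. Composing the first two equivalences yields that $G$ is a unit interval graph if and only if $A_G$ has no weighted asteroidal triple, and substituting this into Theorem~\ref{thm:uig} gives exactly the claimed statement.

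Concretely, first I would record the intermediate equivalences explicitly so the chain is transparent: conditions (i)--(iii) hold $\iff$ $G$ is a unit interval graph $\iff$ $A_G$ is Robinsonian $\iff$ $A_G$ has no weighted asteroidal triple. Since every link is a biconditional, the outermost equivalence between the conjunction of (i)--(iii) and the absence of a weighted asteroidal triple follows by transitivity, with no case analysis required.

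The one point deserving a remark is the compatibility of the two path-avoidance notions in play: the graphic notion of a path \emph{missing} a vertex $z$ (used to define asteroidal triples in $G$) and the matrix notion of a path \emph{avoiding} $z$ in $A_G$ (used to define weighted asteroidal triples). This is already settled in the two remarks immediately preceding the theorem, where it is observed that a missing path is in particular an avoiding path, so every asteroidal triple of $G$ is a weighted asteroidal triple of $A_G$, though not conversely. Because the argument routes through the Robinsonian property rather than matching obstructions one by one, this asymmetry is harmless: I never need the converse implication at the level of individual paths.

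Since all the genuine work is carried by Theorem~\ref{thm:main} and by Roberts' observation, I anticipate no real obstacle in this corollary; it is a transitivity chain. The temptation to resist is to attempt instead a direct combinatorial proof that separately exhibits a weighted asteroidal triple of $A_G$ arising from each of an induced claw, an induced chordless cycle of length $\ge 4$, and an asteroidal triple of $G$, and conversely shows that excluding all weighted asteroidal triples forces all three conditions. Such a proof would be considerably more involved---in particular one would have to verify directly that each obstruction produces a triple of pairwise $z$-avoiding paths in $A_G$---and it is unnecessary here, precisely because Theorem~\ref{thm:uig} is available to be invoked.
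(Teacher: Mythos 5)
Your chain of biconditionals is logically sound, and it is in fact exactly the observation the paper itself makes in the sentence immediately preceding the theorem (``Combining those two facts with Theorem~\ref{thm:uig}, we have the following''). But the proof the paper actually supplies is the direct combinatorial one you explicitly declined to attempt: Lemma~\ref{lem:uig1} constructs a weighted asteroidal triple of $A_G$ from each of a chordless cycle, a claw, and an asteroidal triple of $G$, while Lemma~\ref{lem:uig0} (an analysis of shortest avoiding paths) and Lemma~\ref{lem:uig2} conversely extract one of the three graph obstructions from any weighted asteroidal triple of $A_G$. The reason for this choice is stated right after the theorem: the direct proof deliberately avoids invoking Theorem~\ref{thm:uig}, so that, combined with Theorem~\ref{thm:main}, it yields an \emph{alternative proof} of Theorem~\ref{thm:uig} --- one of the announced contributions of the paper (``the structural characterization of unit interval graphs from Theorem~\ref{thm:uig} can in fact be derived from our main result''). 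Your route, which takes Theorem~\ref{thm:uig} as an input, proves the literal statement with minimal effort but would be circular if one then tried to use Theorem~\ref{thm:at} to recover Theorem~\ref{thm:uig}; it also loses the explicit dictionary between the graph obstructions and weighted asteroidal triples, which is where the nontrivial content lies (your own remark that missing paths avoid but not conversely is precisely the asymmetry that Lemmas~\ref{lem:uig0} and~\ref{lem:uig2} must work to overcome, e.g.\ repairing chords in the cycle formed by two avoiding paths). In short: your proof is correct and shorter, the paper's is self-contained with respect to Theorem~\ref{thm:uig} and strictly more informative.
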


We now  give a direct  proof of Theorem~\ref{thm:at}, which in turn implies an alternative proof of Theorem~\ref{thm:uig}.
%

\begin{lemma}\label{lem:uig1}
Let $G$ be a graph and $A_G$ its adjacency matrix. Assume that one of the conditions (i), (ii), or (iii) in Theorem \ref{thm:uig} is violated. Then one can find a weighted asteroidal triple in $A_G$.
\end{lemma}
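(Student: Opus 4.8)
The plan is to prove Lemma~\ref{lem:uig1} by handling the three forbidden configurations separately, in each case exhibiting three explicit vertices together with the required paths avoiding (in the matrix sense) the third vertex. The key observation throughout is that in the adjacency matrix $A_G$, whose entries are $0/1$, a single edge $\{u,w\}$ of $G$ is a path $(u,w)$ avoiding a vertex $v$ precisely when $(A_G)_{uw}=1 > \min\{(A_G)_{uv},(A_G)_{wv}\}$, i.e.\ when $u,w$ are adjacent in $G$ but $v$ is non-adjacent to at least one of them. This is a much weaker requirement than $v$ being non-adjacent to \emph{both} (which is what ``missing'' demands), and it is exactly this slack that lets us recycle the graph-theoretic obstructions into weighted asteroidal triples.

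I would first dispose of the claw, since it is the quickest. Suppose $G$ contains an induced claw with center $c$ and independent leaves $x,y,z$ (so each leaf is adjacent to $c$ but the leaves are pairwise non-adjacent). I claim $\{x,y,z\}$ is a weighted asteroidal triple in $A_G$. For instance $x\overset{z}{\sim}y$ is witnessed by the two-edge path $(x,c,y)$: the edge $(x,c)$ avoids $z$ because $(A_G)_{xc}=1$ while $(A_G)_{xz}=0$, and similarly $(c,y)$ avoids $z$ because $(A_G)_{cy}=1$ while $(A_G)_{yz}=0$. By symmetry the analogous paths through $c$ witness $y\overset{x}{\sim}z$ and $z\overset{y}{\sim}x$, so the claw directly yields a weighted asteroidal triple on its leaves.

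Next I would treat the chordless cycle $C=(v_0,v_1,\dots,v_{t-1},v_0)$ of length $t\ge 4$. Here I would pick three pairwise non-adjacent vertices on the cycle (for $t\ge 6$ one can take three mutually non-consecutive vertices; for $t=4$ and $t=5$ one picks an independent pair on the cycle and argues with an appropriate third vertex, using that an induced $C_4$ or $C_5$ contains the relevant non-edges). The two arcs of the cycle joining any chosen pair give two candidate paths, and along each arc consecutive vertices are adjacent while the omitted third vertex is non-adjacent to the interior vertices (by chordlessness), so each arc-edge avoids the third vertex in $A_G$ in the weak sense above. Care is needed in the small cases $t=4,5$ to verify independence of the chosen triple and to check the short arcs, and I expect this bookkeeping to be the main obstacle: one must make consistent choices of the three vertices and verify the avoiding condition edge-by-edge rather than appeal to the ``missing'' property, which may fail. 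Finally, for condition~(iii), if $G$ already contains a (graph) asteroidal triple $\{x,y,z\}$, then by the remark recorded just before the lemma statement every path in $G$ missing a vertex also avoids it in $A_G$; hence the three missing-paths of the asteroidal triple are immediately avoiding-paths, and $\{x,y,z\}$ is a weighted asteroidal triple in $A_G$ with no further work. Assembling these three cases establishes that a violation of any of (i), (ii), (iii) produces a weighted asteroidal triple in $A_G$.
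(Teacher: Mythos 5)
Your claw and asteroidal-triple cases coincide exactly with the paper's proof, but your cycle case takes a genuinely different, and needlessly harder, route. The paper does not look for an independent triple on the cycle at all: for a chordless cycle $C=(x_1,\dots,x_k)$ with $k\ge 4$ it takes $\{x_1,x_2,x_k\}$ --- a triple containing the two edges $\{x_1,x_2\}$ and $\{x_1,x_k\}$ --- and uses those single edges as the avoiding paths: $(x_1,x_2)$ avoids $x_k$ since $(A_G)_{x_2x_k}=0$, symmetrically $(x_1,x_k)$ avoids $x_2$, and the long arc $(x_2,x_3,\dots,x_{k-1},x_k)$ avoids $x_1$ by chordlessness. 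This exploits precisely the observation you state at the outset (weighted asteroidal triples require no independence, and an edge avoids $v$ as soon as $v$ is non-adjacent to just one endpoint), yet you then revert to hunting for pairwise non-adjacent vertices, which is what forces the $t=4,5$ case split you flag as ``the main obstacle'': $C_4$ and $C_5$ contain no independent triple, so your fallback of an independent pair plus an ``appropriate third vertex'' is, in effect, rediscovering the paper's non-independent triple. Your sketch does close --- e.g.\ $\{v_0,v_1,v_2\}$ in $C_4$ and $\{v_0,v_2,v_4\}$ in $C_5$ are weighted asteroidal triples by the edge-by-edge checks you describe, and for $t\ge 6$ three pairwise non-consecutive vertices even form a graph asteroidal triple, reducing to your case~(iii) --- so there is no fatal gap, but you left the small cases unverified, and the uniform choice $\{x_1,x_2,x_k\}$ eliminates every case distinction on the cycle length. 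The lesson your own opening paragraph already contains: once independence is dropped, the slack in the avoiding condition makes the cycle case a one-line construction.
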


\begin{proof}
Assume first that we have an induced  cycle $C=(x_1,\cdots,x_k)$ of length $k\ge 4$ in $G$. Then, $(x_1,x_2)$ avoids $x_k$ in $A_G$ and thus 
$x_1\overset{x_k}{\sim}x_2$, and $(x_1,x_k)$ avoids $x_2$ in $A_G$ and thus $x_1\overset{x_2}{\sim} x_k$. In addition,
the path $(x_2,x_3,\cdots, x_{k-1},x_k)$ avoids $x_1$ in $A_G$, which gives $x_2\overset{x_1}{\sim}x_k$. Hence $\{x_1,x_2,x_k\}$ is a weighted asteroidal triple in $A_G$.

Assume now that we have a claw $K_{1,3}$ in $G$, say $u$ is adjacent to $x,y,z$ and $\{x,y,z\}$ is independent in $G$.
Then, $x\overset{z}{\sim} u\overset {z}{\sim } y$, $x\overset{y}{\sim} u\overset {y}{\sim } z$,
$y\overset{x}{\sim} u\overset {x}{\sim } z$ in $A_G$, and thus $\{x,y,z\}$ is a weighted  asteroidal triple in $A_G$.

Finally, if $\{x,y,z\}$ is an asteroidal triple in $G$ then  clearly it is also a weighted asteroidal triple in $A_G$.
\end{proof}

To prove the converse we will use the following result.

\begin{lemma}\label{lem:uig0}
Let $G$ be a graph with adjacency matrix $A_G$ and consider distinct elements $x,y,z\in V$.
Assume $P$ is a path from $x$ to $y$ avoiding $z$ in $A_G$ which has the smallest possible number of nodes. Then one of the following holds:
\begin{description}
\item[(i)]  $P$ is a path in $G$ that misses $z$ (i.e., $z$ is not adjacent to any node of $P$);
\item[(ii)] we find a claw or an induced cycle of length at least 4 in $G$;
\item[(iii)] $P$ is an induced path in $G$,  $z$ is adjacent to exactly one node $u$ of $P$ and  $u\in \{x,y\}$. 
\end{description}
\end{lemma}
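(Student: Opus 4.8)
The plan is to exploit the minimality of $P$ through a ``shortcut-or-obstruction'' dichotomy. Write $P=(v_0,v_1,\dots,v_k)$ with $v_0=x$ and $v_k=y$; by the definition of avoiding, each pair $\{v_{i-1},v_i\}$ is an edge of $G$ and at least one of $v_{i-1},v_i$ is non-adjacent to $z$ (this is exactly what ``$(v_{i-1},z,v_i)$ is not Robinson'' means for a $0/1$ matrix). I would first record the elementary fact that drives every step: if $v_iv_j$ is an edge with $j\ge i+2$ and at least one of $v_i,v_j$ is non-adjacent to $z$, then the edge $(v_i,v_j)$ avoids $z$, so replacing the subpath $(v_i,\dots,v_j)$ by $(v_i,v_j)$ yields a strictly shorter path still avoiding $z$, contradicting the minimality of $P$.

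Next I would locate the vertices adjacent to $z$. Suppose $z$ is adjacent to some internal vertex $v_i$ with $1\le i\le k-1$. Since $P$ avoids $z$, neither neighbour $v_{i-1}$ nor $v_{i+1}$ on $P$ is adjacent to $z$. If $v_{i-1}v_{i+1}$ is an edge, the shortcut fact applies and contradicts minimality; hence $v_{i-1}v_{i+1}$ is a non-edge, and then $\{v_{i-1},v_{i+1},z\}$ are pairwise non-adjacent and all adjacent to $v_i$, giving an induced claw $K_{1,3}$ centred at $v_i$, i.e.\ conclusion (ii). Thus, unless a claw has already been produced, $z$ is adjacent only to (possibly) the endpoints $x=v_0$ and $y=v_k$.

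I would then control the chords of $P$. If $v_iv_j$ is a chord ($j\ge i+2$), the shortcut fact forces both $v_i$ and $v_j$ to be adjacent to $z$, which by the previous step means $\{i,j\}=\{0,k\}$; so the \emph{only} possible chord is $v_0v_k=xy$. Now split according to how many endpoints $z$ sees. If $z$ is adjacent to neither $x$ nor $y$, no chord can exist and $z$ touches no vertex of $P$, so $P$ is a path in $G$ missing $z$, which is case (i). If $z$ is adjacent to exactly one endpoint, say $x$, then the candidate chord $v_0v_k$ is excluded because its other endpoint $y$ is non-adjacent to $z$; hence $P$ is induced and $z$ meets $P$ only in the vertex $x\in\{x,y\}$, which is case (iii).

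The remaining, and in my view hardest, case is when $z$ is adjacent to both endpoints. Here one first notes $k\ge 2$, since for $k=1$ the single edge $xy$ would have both endpoints adjacent to $z$ and so would not avoid $z$. If $P$ is induced, then $C=(z,v_0,v_1,\dots,v_k,z)$ is a chordless cycle on $k+2\ge 4$ vertices (using that $z$ is non-adjacent to every internal $v_i$), an induced cycle of length at least $4$, giving (ii). If instead the chord $v_0v_k$ is present, then $(v_0,v_1,\dots,v_k,v_0)$ is a chordless cycle of length $k+1$, again induced because $P$ has no other chord. The delicate point is to secure length at least $4$: for $k\ge 3$ this is automatic, and the genuinely tight subcase is $k=2$ with the chord present (the ``diamond'' on $\{z,v_0,v_1,v_2\}$), where the cycle through the chord is only a triangle and extra care is required to exhibit the stated obstruction. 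This two-endpoint analysis, and in particular guaranteeing a cycle of length at least $4$ rather than a triangle, is the main obstacle; everything else reduces cleanly to the single shortcut observation.
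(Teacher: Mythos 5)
Your argument tracks the paper's proof step for step on every case it settles: the shortcut-by-minimality fact, the claw produced when $z$ sees an internal vertex of $P$, the induced cycle through $z$ when $P$ is induced, and the cycle through the chord otherwise are exactly the paper's ingredients (the paper organizes them slightly differently: it picks two $z$-neighbors $x_i,x_j$ consecutive among the $z$-neighbors on $P$, asserts that the subpath $(x_i,\dots,x_j)$ must be induced ``since otherwise we could find a shorter path avoiding $z$'', and closes the cycle through $z$). But your proposal is incomplete as a proof: the subcase you flag at the end --- $z$ adjacent to both endpoints, chord $xy$ present, $k=2$ --- is left open, and that is a genuine gap.

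The interesting point, however, is that your instinct about this ``diamond'' subcase is correct, and in isolating it you have found a flaw in the paper's own argument. The paper's claim that any chord of the subpath yields a shorter avoiding path fails precisely when the chord is $\{x_i,x_j\}$ itself, since then both of its endpoints are adjacent to $z$ and the pair $(x_i,x_j)$ does not avoid $z$; for $j-i\ge 3$ one can rescue the argument as you do, via the induced cycle through the chord, but for $j-i=2$ the lemma as stated is actually \emph{false}. Concretely, let $G$ be the diamond on $\{x,y,w,z\}$ with all edges except $\{w,z\}$. The path $P=(x,w,y)$ avoids $z$ (each of the pairs $(x,w)$, $(w,y)$ has the endpoint $w\not\sim z$) and is minimal, since the pair $(x,y)$ satisfies $A_{xy}=1\not>\min\{A_{xz},A_{yz}\}=1$ and hence does not avoid $z$. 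Yet (i) fails because $z\sim x$, (iii) fails because $P$ has the chord $\{x,y\}$ and $z$ sees two nodes of $P$, and (ii) fails because the diamond contains no claw and no induced cycle of length at least $4$ (it is a unit interval graph). So this subcase cannot be closed; the lemma needs a fourth outcome, namely that $P=(x,w,y)$ has three nodes, $\{x,y\}\in E$, and $z$ is adjacent to $x$ and $y$ but not $w$, i.e., $\{x,y,w,z\}$ induces a diamond. The statement of Lemma~\ref{lem:uig2} survives (it also follows from Theorem~\ref{thm:uig}, known independently), but its derivation via Lemma~\ref{lem:uig0} must then handle the extra outcome, e.g., by exploiting the two remaining paths guaranteed by the weighted asteroidal triple in that configuration. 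In short: everything you proved is sound and matches the paper, the one step you could not finish is the one step the paper gets wrong, and your honest flagging of it is the most valuable part of the attempt.
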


\begin{proof}
Let $P=(x=x_0,x_1,\cdots, x_k,x_{k+1}=y)$ be  a path satisfying the assumptions of the lemma.
As $P$ avoids $z$ in $A_G$, it follows that  $z$ cannot be adjacent to two consecutive nodes in $P$. If $z$ is not adjacent to any node of $P$ then  we are in case (i). Assume first  that $z$ is adajcent to at least two nodes of $P$. Say, $z$ is adjacent to $x_i$ and $x_j$ with $0\le i\le j-2 \le k-1$, and $z$ is not adjacent to any $x_h$ with $i<h<j$. Consider  the subpath $(x_i,\cdots, x_j)$ of $P$.
If  this subpath is not induced in $G$ then we could find a shorter path than $P$ going from $x$ to $y$ and avoiding $z$ in $A_G$, contradicting our minimality assumption on $P$.
Hence this subpath is induced, so we find an induced cycle of length at least 4 and we are in case (ii).

We can now assume that $z$ is adjacent to exactly one node $x_i$ of $P$. Then the path $P$ is induced in $G$ (for if not one would contradict the minimality of $P$). 
If $x_i$ is not the first or last node of $P$ then we find a claw and thus we are again in case (ii). 
Hence we can conclude that $z$ is adjacent to exactly one of $x$ and $y$. Hence  we are in case (iii).
\end{proof}

\begin{lemma}\label{lem:uig2}
Let $G$ be a graph and $A_G$ its adjacency matrix.
If there exists a weighted asteroidal triple in $A_G$ then one of the conditions (i), (ii), (iii) in Theorem \ref{thm:uig} is violated.
\end{lemma}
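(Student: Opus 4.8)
The plan is to start from a weighted asteroidal triple $\{x,y,z\}$ of $A_G$ and, for each of the three avoiding relations, fix a realizing path with the fewest possible nodes: a shortest path $P_1$ from $x$ to $y$ avoiding $z$, a shortest path $P_2$ from $y$ to $z$ avoiding $x$, and a shortest path $P_3$ from $z$ to $x$ avoiding $y$. To each of these I would apply Lemma~\ref{lem:uig0}. If any one of them falls into case (ii) of that lemma, then we directly obtain a claw or an induced cycle of length at least $4$, so condition (ii) or (i) of Theorem~\ref{thm:uig} is violated and we are done. Hence from now on I would assume that each of $P_1,P_2,P_3$ is in case (i) or case (iii).

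Next I would read off the constraints this places on the adjacencies among $x,y,z$. In cases (i) and (iii) of Lemma~\ref{lem:uig0} the avoided vertex is adjacent to at most one endpoint of the path. Applying this to $P_1$, $P_2$, $P_3$ shows, respectively, that we cannot have both $\{z,x\}\in E$ and $\{z,y\}\in E$, nor both $\{x,y\}\in E$ and $\{x,z\}\in E$, nor both $\{y,z\}\in E$ and $\{y,x\}\in E$; equivalently, at most one of the three pairs $\{x,y\},\{y,z\},\{z,x\}$ is an edge of $G$. If none of them is an edge, then $z$ is adjacent to neither endpoint of $P_1$, so $P_1$ cannot be in case (iii) and must be in case (i), hence it misses $z$ in $G$; the same holds for $P_2$ and $P_3$. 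Thus $\{x,y,z\}$ is an independent set joined by three paths each missing the third vertex, i.e.\ a graphic asteroidal triple, so condition (iii) is violated.

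It then remains to treat the case where exactly one pair is an edge, say $\{y,z\}\in E$, the other cases being symmetric. Here $z$ is adjacent to exactly the endpoint $y$ of $P_1$ and $y$ is adjacent to exactly the endpoint $z$ of $P_3$, so Lemma~\ref{lem:uig0}(iii) gives that $P_1$ is an induced path from $x$ to $y$ on which $z$ has $y$ as its only neighbor, and $P_3$ is an induced path from $x$ to $z$ on which $y$ has $z$ as its only neighbor. I would then glue $P_1$, the edge $\{y,z\}$, and $P_3$ into a cycle through $x$. Letting $a$ be the last vertex common to $P_1$ and $P_3$ as one traverses from $x$, the portion of $P_1$ from $a$ to $y$ and the portion of $P_3$ from $a$ to $z$ meet only at $a$ and, together with $\{y,z\}$, form a genuine cycle. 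Its only possible chords run between the interior of the $P_1$-part and the interior of the $P_3$-part, since $z$ misses the interior of $P_1$, $y$ misses the interior of $P_3$, and each part is itself induced. Choosing such a chord innermost (closest to the edge $\{y,z\}$) produces a chordless cycle, and a short length count shows it has at least $4$ vertices, violating condition (i).

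The main obstacle is this last step: making the gluing precise when $P_1$ and $P_3$ share internal vertices, and ruling out the degenerate possibility that the extracted chordless cycle is merely a triangle $\{a,y,z\}$ with $a$ the common vertex preceding both $y$ and $z$. The key point I would emphasize is that this degeneracy is incompatible with the avoiding property: such a triangle would force $\{a,z\}\in E$ alongside $\{y,z\}\in E$, so that the last edge $\{a,y\}$ of $P_1$ would have both endpoints adjacent to $z$ and would therefore fail to avoid $z$, contradicting that $P_1$ avoids $z$. Once this degeneracy is excluded, every branch yields an induced cycle of length at least $4$, and combining this with the two earlier cases completes the proof.
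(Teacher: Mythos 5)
Your proposal is correct and follows essentially the same route as the paper's proof: apply Lemma~\ref{lem:uig0} to minimum-length avoiding paths, dispatch case (ii) immediately, recover a graphic asteroidal triple when all paths fall in case (i), and otherwise glue two induced paths with the edge between two of the triple's vertices into a cycle from which an innermost chord extracts an induced cycle of length at least~$4$. If anything, you are more careful than the paper at two points it treats tersely --- handling shared internal vertices of the two paths via the vertex $a$, and explicitly excluding the triangle degeneracy using the avoiding property --- but these are refinements of the same argument, not a different approach.
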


\begin{proof}
Assume that $\{x,y,z\}$ is  a weighted asteroidal triple  in $A_G$. Select paths $P_{xy}$, $P_{xz}$ and $P_{yz}$ that avoid, respectively, $z$, $y$, $x$ in $A_G$ and have the smallest possible lengths. We apply Lemma \ref{lem:uig0} to each of the three paths $P_{xy}$, $P_{xz}$ and $P_{yz}$. 
If for some of these three paths we are in case (ii) of Lemma \ref{lem:uig0}, then we find a claw or an induced cycle and we are done. 
Hence, for each of  the three paths  we are  in case (i) or (iii) of Lemma \ref{lem:uig0}.
If for all the three paths we are in case (i),  then we can conclude that $\{x,y,z\}$ is an asteroidal triple of $G$ and we are done.

Therefore, we may now assume that (say) for the path $P_{xy}$, we are in case (iii). Then $P_{xy}$ is an induced path in $G$ and (say) $z$ is adjacent to $x$.
In turn, this implies that we are in case (iii) also for the path $P_{yz}$ and thus $P_{yz}$ is induced in $G$. Together with the edge $\{x,z\}$ the two paths $P_{xy}$ and $P_{yz}$ form a cycle $C$ with at least 4 nodes. If $C$ is induced in $G$ then we are done. So let us now assume that $C$ has a chord.
As both paths $P_{xy}$ and $P_{yz}$ are induced there must exist an edge of the form $\{u,v\}$ where $u$ belongs to $P_{xy}$ and $v$ belongs to $P_{yz}$.
First we choose $u$ to be the `first' node on $P_{xy}$   which is adjacent to some node $v$ of $P_{yz}$, `first' when travelling from $x$ to $y$ on $P_{xy}$. 
After that,  we choose for  $v$ the `last'  node of $P_{yz}$ which is adjacent to $u$, `last' when travelling from $y$ to $z$ on $P_{yz}$. 
Note that $u\neq x$, $u\neq y$, $v\neq z$, and $v\neq y$ since we are in case (iii).
Thus, from the choice of $u$ and $v$, it  follows that the cycle obtained by travelling along $P_{xy}$ from $x$ to $u$, then traversing edge $\{u,v\}$, then travelling along $P_{yz}$ from $v$ to $z$, and finally traversing edge $\{z,x\}$, is an induced cycle in $G$ of length at least 4. This concludes the proof.
\end{proof}

Combining Lemmas \ref{lem:uig1} and \ref{lem:uig2}  completes the proof of Theorem~\ref{thm:at}.

\subsection{An algorithm for enumerating the weighted asteroidal triples}

As an application of our main  theorem  we obtain an alternative algorithm to decide whether a given matrix $A$ is Robinsionan, namely
by checking  the existence of a weighted asteroidal triple for $A$. We  indicate a simple algorithm for doing this.
 
\medskip
A first observation is that, given distinct elements $x,y,v\in V$, one can check whether $x\overset{v}{\sim} {y}$, i.e., whether there exists a path from $x$ to $y$ avoiding 
$v$ in $A$, in time $O(n^2)$. For this consider the graph $H_v$ with vertex set $V\setminus \{v\}$, where two nodes $u,w\in V\setminus \{v\}$ are adjacent if 
$A_{vw}>\min\{A_{uv},A_{vw}\}$. Then $x\overset{v}{\sim} {y}$ precisely when $x$ and $y$ lie in the same connected component of $H_v$.
Building $H_v$ and checking the existence of a path from $x$ to $z$ in $H_v$ can be done in time $O(n^2)$.

A first elementary algorithm to decide existence of a weighted asteroidal triple in $A$ would be to test all possible triples, which can be done in time $O(n^5)$.
The following simple algorithm permits to check existence of a weighted asteroidal triple more efficiently, in time $O(n^3)$.
It computes  a function $f$ defined on the set $V\choose 3$ of all triples of elements of $V$, whose value records whether a triple is a weighted  asteroidal triple.


\begin{algorithm}
  \caption{} \label{alg1}
  \SetKwInput {KwIn}{input}
  \SetKwInput {KwOut}{output}
 \KwIn{a symmetric matrix $A$ (indexed by $V$)}
 \KwOut{A weighted asteroidal triple $\{x,y,z\}$ or ``$A$ has no weighted asteroidal triple".}
  \vspace{2ex}  
  Initialize $f:{V\choose 3}\rightarrow \mathbb{Z}$ by $f(\{x, y, z\})=0$ for every $\{x,y,z\}\in {V\choose 3}$.\\
  \For{each $v \in V$}{
    Compute the graph  $H_v$ with  vertex set $V\setminus \{v\}$ and with edges the pairs $\{u,w\}$ such that  $A_{uw}>\min\{A_{uv},A_{vw}\}$.\\
	$f(\{x, y, v\})\leftarrow f(\{x,y,v\})+1$ for every pair $\{x, y\}$ of elements lying in the same  connected component of $H_v$.
  }
   \eIf{there exists $ \{x,y,z\}\in{V\choose 3}$ with $f(\{x,y,z\})=3$}{
   	 	\Return{$\{x,y,z\}$}
   	}{ 
    		\Return{``$A$ has no weighted asteroidal triple"}
    }
\end{algorithm}

In fact the final function $f$ returned by the above algorithm permits to return all the weighted asteroidal triples, which are  precisely the triples $\{x,y,z\}$ with $f(\{x,y,z\})=3$.

\subsection{Maximal Robinsonian submatrices}

When  a given symmetric matrix $A$ indexed by $V$ is not Robinsonian,  one might be interested in the maximal subsets indexing a Robinsonian submatrix or, equivalently, in the minimal subsets whose deletion leaves a Robinsonian submatrix.  Note that finding a Robinsonian submatrix of largest possible size  is in fact a hard problem, already for binary matrices.  Indeed it is known that  finding in a given graph  a smallest cardinality set of nodes whose deletion leaves a unit interval graph is an NP-complete problem (see \cite{Lewis80,Hajiaghayi02}).

Let $\MI_A$ denote the collection of all maximal subsets $I\subseteq V$ for which $A[I]$ is a Robinsonian matrix 
and let $\MF_A$ consist of the minimal subsets $F\subseteq V$ for which $A[V\setminus F]$ is Robinsonian, i.e., 
$\MF_A=\{V\setminus I: I\in \MI_A\}.$ 
Let also $\MC_A$ denote the collection of minimal transversals of $\MF_A$ (i.e., the minimal sets intersecting all sets in $\MF_A$).
In other words, $\MI_A$ coincides with the collection of maximal independent sets of the hypergraph $\MH_A=(V,\MC_A)$, whose dual (or transversal) hypergraph is  $\MH_A^d=(V,\MF_A)$ (see, e.g., \cite{FK}).

In order to describe the minimal transversals of $\MF_A$   we introduce the  following definition. A set $C\subseteq V$ is called a {\em weighted asteroidal cycle} of $A$  if there exists a weighted asteroidal triple $\{x,y,z\}$ of $A$ and paths $P_{xy}$, $P_{xz}$, $P_{yz}$ such that $C=V(P_{xy})\cup V(P_{xz})\cup V(P_{yz})$, where $P_{xy}$ (resp., $P_{xz}$, $P_{yz}$) is a path from $x$ to $y$ avoiding $z$ (resp., from $x$ to $z$ avoiding $y$, from $y$ to $z$ avoiding $x$). Then, as a direct application of Theorem \ref{thm:main}, for  sets $I,C\subseteq V$, we have:
$$A[I] \text{ is Robinsonian } \Longleftrightarrow \ I \text{ does not contain a weighted asteroidal cycle},$$
$$C\in \MC_A \ \Longleftrightarrow \ C \text { is a minimal weighted asteroidal cycle of } A. $$

Furthermore, a set  $C$ is a minimal weighted asteroidal cycle of $A$ if and only if $A[C]$ is not Robinsonian, but $A[C\setminus \{x\}]$ is Robinsonian for any $x\in C$,
 and thus one can check in polynomial time membership in the collection $\MC_A$. Analogously, one can  check membership in the collection $\MI_A$ in polynomial time,
  since $I\in \MI_A$ if and only if $A[I]$ is Robinsonian, but $A[I\cup \{x\}]$ is not Robinsonian for any $x\in V\setminus I$. 

\medskip
As mentioned above,  it is of interest to  generate the elements of $\MI_A$ (which correspond to the   maximal Robinsonian submatrices of $A$)
as well as the sets in $\MC_A$ (which correspond to the minimal obstructions to the Robinsonian property).
For this one can apply the algorithmic approach developed in \cite{GK}, which gives a quasi-polynomial time incremental   algorithm for the joint generation of the collections $(\MF_A, \MI_A)$. 
Namely, given $\MX\subseteq \MF_A$ and $\MY\subseteq \MI_A$, the algorithm of \cite{GK} permits to decide whether $(\MX,\MY)=(\MF_A,\MI_A)$ and if not to find a new set in $(\MF_A\setminus \MX) \cup (\MI_A\setminus \MY)$, in time  $O(n^4 +m^{O(\log m)})$, where 
$m=|\MF_A|+|\MI_A|$.
Then, starting $(\MX,\MY)=(\emptyset, \emptyset)$, this incremental algorithm can find $(\MF_A,\MI_A)$ in $|\MF_A|+|\MI_A|$ iterations.

\medskip
On a more practical point of view, when $A$ is not Robinsonian,  one may  try to remove some of its rows/columns and/or modify some of its entries in order to eliminate its weighted asteroidal cycles. 
Investigating whether this may lead to useful heuristics to get good Robinsonian approximations will be the  subject of future research.


\subsection*{Acknowledgements}
The first two authors thank Utz-Uwe Haus for useful discussions and for bringing to our attention the works \cite{FK,GK}.
The third author was supported by JSPS Postdoctoral Fellowships for Research Abroad and JSPS Grant-in-Aid for Young Scientists (B) (No. 15K15942).


\end{document}